\newtheorem{theorem}{Theorem}[section]
\newtheorem{lemma}[theorem]{Lemma}
\theoremstyle{definition}
\newtheorem{definition}[theorem]{Definition}
\newtheorem{reduction}{Reduction rule}
\newtheorem{branching}[reduction]{Branching rule}
\newcommand{\defproblem}[4]{
%  \vspace{1mm}
%  \hline
  \vspace{1mm}
\noindent\fbox{
  \begin{minipage}{0.95\textwidth}
  \begin{tabular*}{\textwidth}{@{\extracolsep{\fill}}lr} #1 & {\bf{Parameter:}} #3 \\ \end{tabular*}
  {\bf{Input:}} #2  \\
  {\bf{Question:}} #4
  \end{minipage}
  }
%  \vspace{1mm}
%  \hline
  \vspace{1mm}
}
\newcommand{\multicut}{{\textsc{Multicut}}}
\newcommand{\dagmulticut}{\multicut{} {\textsc{in DAGs}}}
\newcommand{\Oh}{O}
\newcommand{\Ohstar}{\ensuremath{O^\ast}}
\newcommand{\terms}{\ensuremath{\mathcal{T}}}
\newcommand{\nterms}{r}
\newcommand{\cut}{p}
\newcommand{\toporder}{\leq_\tau}
\newcommand{\sources}{T^s}
\newcommand{\sinks}{T^t}
\newcommand{\src}{S}
\newcommand{\reverse}[1]{#1^\textrm{rev}}
\newcommand{\shadow}{A}
\newcommand{\shfam}{\mathcal{A}}
\newcommand{\poten}{\phi}
\newcommand{\outN}{N^+}
\newcommand{\inN}{N^-}
\newcommand{\Pp}{\ensuremath{\mathcal{P}}}
\newcommand{\inst}{\mathcal{I}}
\begin{document}
\date{}
  \title{Fixed-parameter tractability of multicut in directed acyclic graphs}

\author{
  Stefan Kratsch\thanks{Utrecht University, Utrecht, the Netherlands, \texttt{s.kratsch@uu.nl}.} \and
	Marcin Pilipczuk\thanks{Institute of Informatics, University of Warsaw, Poland, \texttt{malcin@mimuw.edu.pl}.} \and
  Micha\l{} Pilipczuk\thanks{Department of Informatics, University of Bergen, Norway, \texttt{michal.pilipczuk@ii.uib.no}.} \and
  Magnus Wahlstr\"{o}m\thanks{Max-Planck-Institute for Informatics, Saarbr\"{u}cken, Germany, \texttt{wahl@mpi-inf.mpg.de}.}
}

\maketitle

\begin{abstract}
The \multicut{} problem, given a graph $G$, a set of terminal pairs $\terms=\{(s_i,t_i)\ |\ 1\leq i\leq r\}$ and an integer $p$, asks whether one can find a {\emph{cutset}} consisting of at most $p$ non-terminal vertices that separates all the terminal pairs, i.e., after removing the cutset, $t_i$ is not reachable from $s_i$ for each $1\leq i\leq r$. The fixed-parameter tractability of \multicut{} in undirected graphs, parameterized by the size of the cutset only, has been recently proven by Marx and Razgon \cite{marx:multicut} and, independently, by Bousquet~et~al.~\cite{thomasse:multicut}, after resisting attacks as a long-standing open problem. In this paper we prove that \multicut{} is fixed-parameter tractable on directed acyclic graphs, when parameterized both by the size of the cutset and the number of terminal pairs. We complement this result by showing that this is implausible for parameterization by the size of the cutset only, as this version of the problem remains $W[1]$-hard.
\end{abstract}

\section{Introduction}

Parameterized complexity is an approach for tackling hard problems by designing algorithms that perform robustly, when the input instance is in some sense simple; its difficulty is measured by an integer that is additionally appended to the input, called the {\emph{parameter}}. Formally, we say that a problem is {\emph{fixed-parameter tractable}} (FPT), if it can be solved by an algorithm that runs in time $f(k)n^c$ for $n$ being the length of the input and $k$ being the parameter, where $f$ is some computable function and $c$ is a constant independent of the parameter.

The search for fixed-parameter algorithms resulted in the introduction of a number of new algorithmic techniques, and gave fresh insight into the structure of many classes of problems. One family that received a lot of attention recently is the so-called {\emph{graph cut}} problems, where the goal is to make the graph satisfy a global separation requirement by deleting as few edges or vertices as possible (depending on the variant). Graph cut problems in the context of fixed-parameter tractability were to our knowledge first introduced explicitly in the seminal work of Marx \cite{marx-prehistory}, where it was proved that (i) {\textsc{Multiway Cut}} (separate all terminals from each other by a cutset of size at most $p$) in undirected graphs is FPT when parameterized by the size of the cutset; (ii) {\textsc{Multicut}} in undirected graphs is FPT when parameterized by both the size of the cutset and the number of terminal pairs. 
Fixed-parameter tractability of {\textsc{Multicut}} parameterized by the size of the cutset only was left open by Marx~\cite{marx-prehistory}; resolved much later (see below).

The probably most fruitful contribution of the work of Marx \cite{marx-prehistory} is the concept of {\emph{important separators}}, which proved to be a tool almost perfectly suited to capturing the bounded-in-parameter character of sensible cutsets. The technique proved to be extremely robust and serves as the key ingredient in a number of FPT algorithms~\cite{chen-nmc,directed-fvs,dir-mwc,sfvs,Guillemot11a,clustering-daniels,marx-prehistory,marx:multicut,razgon:icalp2008}.
%\cite{marx:multicut,marx-prehistory,directed-fvs,sfvs,razgon:icalp2008,chen-nmc,Guillemot11a,clustering-daniels,dir-mwc}.
In particular, the fixed-parameter tractability of {\textsc{Skew Multicut}} in directed acyclic graphs, obtained via a simple application of important separators, enabled the first FPT algorithm for {\textsc{Directed Feedback Vertex Set}} \cite{directed-fvs}, resolving another long-standing open problem.

However, important separators have a drawback in that not all graph cut problems admit solutions with ``sensible'' cutsets in the required sense. This is particularly true in directed graphs, where, 
with the exception of the aforementioned {\textsc{Skew Multicut}} problem in {\textsc{DAGs}},
for a long time few fixed-parameter tractable graph cut problems were known;
in fact, up until very recently it was open whether {\textsc{Multiway Cut}} in directed graphs admits an FPT algorithm even in the restricted case of two terminals. 
The same complication arises in the undirected {\textsc{Multicut}} problem parameterized by the size of the cutset. 

After a long struggle, {\textsc{Multicut}} was shown to be FPT by Marx and Razgon~\cite{marx:multicut} and, independently, by Bousquet et al. \cite{thomasse:multicut}.
The key component in the algorithm of Marx and Razgon~\cite{marx:multicut} is the technique of \emph{shadow removal},
which, in some sense, serves to make the solutions to cut problems more well-behaved.
This was adapted to the directed case by Chitnis et al. \cite{dir-mwc}, who proved that {\textsc{Multiway Cut}}, parameterized by the size of the cutset, is fixed-parameter tractable for an arbitrary number of terminals, by a simple and elegant application of the shadow removal technique.
This gives hope that, in general, shadow removal may be helpful for the application of important separators to the directed world.

As for the directed {\textsc{Multicut}} problem, it was shown by Marx and Razgon~\cite{marx:multicut} 
to be $W[1]$-hard when parameterized only by the size of the cutset, but otherwise had unknown status, even for a constant number of terminals in a \textsc{DAG}. 

\paragraph{Our results.} The main result of this paper is the proof of fixed-parameter tractability of the \dagmulticut{} problem, formally defined as follows:

\defproblem{\dagmulticut}{Directed acyclic graph $G$, set of terminal pairs $\terms = \{(s_i,t_i)\ |\ 1 \leq i \leq \nterms\}$,
  $s_i,t_i \in V(G)$ for $1 \leq i \leq \nterms$, and an integer $\cut$.}{$p+r$}{Does there exist a set $Z$ of at most $\cut$ non-terminal vertices
    of $G$, such that for any $1 \leq i \leq \nterms$ the terminal $t_i$ is not reachable from $s_i$ in $G \setminus Z$?}

\begin{theorem}\label{thm:main}
\dagmulticut{} can be solved in $\Oh^*(2^{\Oh(\nterms^2\cut+\nterms 2^{\Oh(\cut)})})$ time.
\end{theorem}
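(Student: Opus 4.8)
The plan is to combine the \emph{shadow removal} technique with \emph{important separators}: shadow removal is used to bring the solution into a well-behaved form, after which, at the cost of a guessing step that absorbs the dependence on $\nterms$, the problem reduces to instances that behave like \textsc{Skew Multicut in DAGs} and are handled by a bounded search tree. First, by standard preprocessing (attaching undeletable pendant terminals) I may assume that each $s_i$ is a source and each $t_i$ is a sink of $G$; I also discard any pair with $s_i\not\leadsto t_i$, reject outright if some $s_i\leadsto t_i$ path uses only terminals, and dispose of $\cut=0$ directly. Write $\sources=\{s_1,\dots,s_\nterms\}$ and $\sinks=\{t_1,\dots,t_\nterms\}$, and fix a topological order $\toporder$ of $G$. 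Call $v\in V(G)\setminus Z$ a \emph{shadow} of a vertex set $Z$ if in $G\setminus Z$ no vertex of $\sources$ reaches $v$ or $v$ reaches no vertex of $\sinks$, and call $Z$ \emph{shadowless} if it has no shadow, i.e.\ every surviving vertex lies on some $s_i\leadsto t_j$ path of $G\setminus Z$.

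Adapting the random-sampling shadow-removal machinery of Marx--Razgon and of Chitnis et al.\ to the acyclic setting, the next step is to build in polynomial time a random set $Z_0\subseteq V(G)\setminus\terms$ such that, whenever the instance is a yes-instance, with probability $2^{-\Oh(\cut)}$ there is a solution $Z$ of size at most $\cut$ that is \emph{shadowless} in the DAG obtained from $G$ by bypassing $Z_0$ and that is still a solution of the original instance. Since a DAG has a single topological direction, the ``forward'' part of the shadow and the ``reverse'' part (the latter living in the reverse DAG $\reverse{G}$) can be suppressed by essentially one colouring whose relevant classes are bounded through important separators, which is what keeps the success probability at $2^{-\Oh(\cut)}$ rather than $2^{-\Oh(\cut^2)}$. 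Derandomising via splitters (or an $(n,\cut)$-universal family) yields $\nterms\cdot 2^{\Oh(\cut)}$ deterministic reduced instances, one of which admits a shadowless solution whenever the original is a yes-instance; solving each of them, together with this multiplicity, accounts for the $\nterms 2^{\Oh(\cut)}$ summand in the exponent.

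It remains to decide whether a sanitised instance has a shadowless solution of size at most $\cut$. The structural claim I would establish is that, if it does, then it has one of the form $Z=\bigcup Z_{a,b}$, the union taken over the $\Oh(\nterms^2)$ ordered pairs $(a,b)$ of terminals, where each $Z_{a,b}$ is an \emph{important separator} --- of which there are only $2^{\Oh(\cut)}$ of size at most $\cut$ --- between a single terminal and a union of terminals determined by $(a,b)$ and the topological order. Granting this, the algorithm simply guesses each $Z_{a,b}$ among its $2^{\Oh(\cut)}$ candidates, for $(2^{\Oh(\cut)})^{\Oh(\nterms^2)}=2^{\Oh(\nterms^2\cut)}$ guesses in all, and for each guess checks in polynomial time whether the union is a solution; this produces the $\nterms^2\cut$ summand. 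To prove the structural claim I would process the terminal pairs in the topological order of their sinks: for the currently first alive pair $(s,t)$, shadowlessness forces the part of $Z$ lying in front of $t$ to separate $t$ not only from $s$ but from \emph{every} source able to reach the neighbourhood of $t$ in the current graph, so that locally the multicut constraint degenerates into a skew-multicut constraint, whose part closest to $t$ can be pushed along $\toporder$ into an important separator between $t$ and that union of sources. Deleting $t$ and freezing the settled region leaves a sanitised instance with a shadowless solution on $\nterms-1$ pairs, so a measure/potential argument on this recursion both yields the decomposition and bounds the work.

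\emph{Main obstacle.} The heart of the matter is this structural claim: that shadowlessness genuinely allows the pairs to be handled ``one at a time, in the topological order of the sinks'', each as a skew constraint. The delicate points are that the pushing step returns to the graph the vertices of a non-important $t$-side separator, one of which might be essential for a pair handled only later, and that descending into the frozen instance must not re-create shadow; controlling both requires a simultaneous induction on the number of pairs and on $\toporder$, and this is where essentially all the difficulty lies. A subsidiary but nontrivial point is calibrating the shadow-removal sampling and its derandomisation so that its cost is $\nterms 2^{\Oh(\cut)}$ rather than, say, $2^{\Oh(\cut^2)}$.
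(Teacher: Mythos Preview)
Your plan diverges substantially from the paper's proof, and its core step --- the structural decomposition of a shadowless solution into $O(\nterms^2)$ important separators --- is not established and, as you yourself flag, is where the whole argument lives or dies. The specific justification you offer is flawed: you write that ``shadowlessness forces the part of $Z$ lying in front of $t$ to separate $t$ not only from $s$ but from every source able to reach the neighbourhood of $t$'', but this is not what shadowlessness gives. It only guarantees that each surviving in-neighbour of $t$ is reachable from \emph{some} source in $G\setminus Z$; that source need not be one whose pair involves $t$, so there is no skew constraint to exploit. Consequently the pushing-to-important-separator step can genuinely destroy the multicut property for a later pair $(s',t')$, exactly the obstacle you name but do not overcome. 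Without this structural claim, nothing remains of the $2^{O(\nterms^2\cut)}$ part of the argument.

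A secondary issue: the randomized shadow removal of Marx--Razgon / Chitnis et al.\ has success probability $2^{-2^{O(\cut)}}$ (derandomised to a family of size $2^{2^{O(\cut)}}\log n$), not $2^{-O(\cut)}$; acyclicity does not improve this, and your one-colouring remark does not constitute a proof. This would not by itself kill the target bound, but your accounting for the $\nterms 2^{O(\cut)}$ term is wrong as stated.

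For contrast, the paper does \emph{not} attempt a one-shot decomposition. It defines a potential $\phi=(\nterms+1)\cut-\sum_i \mathrm{cut}_G(s_i,t_i)$ and gives a branching step (Lemma~\ref{lem:branching-step}) that either detects infeasibility or outputs at most $2^{\nterms+2^{O(\cut)}}\nterms\cut\log n$ subinstances, each with strictly smaller potential. Inside one step, the algorithm guesses a minimal $S\subseteq\sources$ with $\src(G,v)=S$ for some $v\in Z$, performs a \emph{degree-reduction} branching (a tool absent from your outline) so that $|\outN(\sources)|\le\nterms\cut$, applies source-shadow removal, and then uses the key structural Lemma~\ref{lem:magic}: the $\toporder$-first vertex of $V(G,S)$ whose source set changes under $Z$ must be an out-neighbour of $\sources$ with all nonterminal in-neighbours in $Z$, yielding a small branching. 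A second degree-reduction then pins down a vertex of $Z$ to kill. Iterating $(\nterms+1)\cut$ times gives the claimed bound; the $\nterms 2^{O(\cut)}$ term arises from applying the $2^{2^{O(\cut)}}$ shadow family at each of the $O(\nterms\cut)$ levels, not from a cheaper one-shot sampling.
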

Note, that throughout the paper we use~$\Oh^*$-notation to suppress polynomial factors.
Note also that we focus on vertex cuts; it is well known that in the directed acyclic
setting the arc- and vertex-deletion variants are~equivalent~(cf.~\cite{dir-mwc}). %Hence, we obtain an algorithm with the same running time also in the arc-deletion variant.

Our algorithm makes use of the shadow removal technique introduced by Marx and Razgon \cite{marx:multicut}, adjusted to the directed setting by Chitnis et al. \cite{dir-mwc}, as well as the basic important separators toolbox that can be found in \cite{dir-mwc}. We remark that the shadow removal is but one of a number of ingredients of our approach: in essence, the algorithm combines the shadow removal technique with a degree reduction for the sources in order to carefully prepare the structure of the instance for a simplifying branching step. A more detailed overview of a single step of the algorithm is depicted in Figure \ref{fig:diagram}, given in the appendix.

We complement the main result with two lower bounds.
First, we show that the dependency on $\nterms$ in the exponent is probably unavoidable.
\begin{theorem}\label{thm:lb-w1}
\dagmulticut{}, parameterized by the size of the cutset only, is $W[1]$-hard.
\end{theorem}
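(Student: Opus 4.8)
The plan is to give a parameterized reduction from \textsc{Multicolored Clique}: given a graph $H$ whose vertex set is partitioned into $k$ colour classes $V_1,\dots,V_k$, each of size $n$, decide whether $H$ has a clique containing exactly one vertex of each class. This problem is $W[1]$-hard parameterized by $k$, and the goal is to output a \dagmulticut{} instance whose cutset budget $\cut$ is linear in $k$; the number of terminal pairs will be polynomial in $|V(H)|$, which is permissible since only $\cut$ is the parameter.

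The core of the construction is a \emph{selection gadget} for each colour class $i$: a directed path $\sigma_i\to p^i_1\to\cdots\to p^i_n\to\tau_i$ whose internal vertices are identified with $V_i$, equipped with the terminal pair $(\sigma_i,\tau_i)$. Since $\sigma_i,\tau_i$ are terminals and the budget will be tight, any solution deletes exactly one internal vertex of this path, which we read off as the vertex of $V_i$ picked for the clique. Because one cannot ``observe'' reachability from $\sigma_i$ directly from outside the gadget, I also add a mirror copy $P'_i$ of the path and the few extra terminal pairs needed to force $P_i$ and $P'_i$ to delete the \emph{same} index. Then, for every $i<j$ and every non-edge $\{a,b\}$ with $a\in V_i$, $b\in V_j$, I add a terminal pair together with an auxiliary \emph{detector} path that threads only through one-sided portions of the gadgets (a suffix of $P_i$, a prefix of $P'_i$, a suffix of $P_j$, a prefix of $P'_j$, in this topological order), arranged so that the detector path survives the solution \emph{iff} class $i$ picked $a$ and class $j$ picked $b$; requiring that pair to be separated forbids that non-edge. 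Setting $\cut$ to the number of selection and mirror gadgets, a solution of size $\cut$ exists iff one vertex per class is picked consistently and every non-edge is avoided, i.e.\ iff $H$ has a multicolored clique.

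Two things require care. First, that \emph{no unintended cuts help}: a solution of size $\cut$ must place exactly one vertex inside each gadget path and nothing elsewhere, which holds because the gadget paths are vertex-disjoint and each carries its own terminal pair. Second---and this is the main obstacle---one must show the detector paths do not introduce \emph{spurious connectivity}: attaching a detector to the interior of a gadget path must not create a new $\sigma_i$--$\tau_i$ route around the deleted vertex (this is why detectors are routed through suffixes/prefixes rather than ``jumping over'' the cut), and in a clique solution no detector pair may be left accidentally unseparated via a chain of auxiliary arcs. I expect this to be handled by fixing a global topological layout (the $k$ forward gadgets, then the $k$ mirror gadgets, with detector vertices interleaved consistently) and proving by a careful induction along this order that every source-to-sink path has the canonical shape intended by the construction; acyclicity is precisely what makes that analysis go through, and is why the hardness is stated for DAGs.
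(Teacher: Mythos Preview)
Your sketch has two gaps that are not matters of bookkeeping; each already breaks the reduction.

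\textbf{Mirror synchronisation.} With your stated layout (all forward paths $P_i$ placed before all mirrors $P'_i$), every arc you may add between the two copies points from $P_i$ into $P'_i$. Any ``extra terminal pair'' built from such arcs can enforce at most one inequality between the deleted indices, say $x'_i\ge x_i$, never equality. But if equality is not forced, the adversary sets $x_i=n$ and $x'_i=1$ in every class; then the $i$-segment of \emph{every} detector (which you say tests ``$x_i\le a$'' via a suffix of $P_i$ and ``$x'_i\ge a$'' via a prefix of $P'_i$) is vacuously cut, all terminal pairs are separated, and the produced instance is a YES instance regardless of $H$.

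\textbf{Detector crosstalk.} Even granting perfect synchronisation $x_i=x'_i$, the issue you flag is fatal, not fixable by induction along the order. All detectors thread through the \emph{same} paths $P_i,P'_i$, so the exit arc that detector $(a',b')$ attaches at $q^i_{a'-1}$ is available to any walk that reaches $q^i_{a'-1}$. Suppose the clique picks $(x_i,x_j)$, and let $(a,b)$ be a non-edge with $x_i<a$ and $b\le x_j$, and $(a',b')$ another non-edge (between the same classes) with $a'\le x_i$ and $b'\ge x_j$. Then
\[
s_{ab}\to p^i_{a+1}\to\cdots\to p^i_n\to q^i_1\to\cdots\to q^i_{a'-1}\xrightarrow{\text{arc of }(a',b')} p^j_{b'+1}\to\cdots\to p^j_n\to q^j_1\to\cdots\to q^j_{b-1}\to t_{ab}
\]
avoids every deleted vertex, so the intended clique solution is \emph{not} a multicut. (Take $V_1=V_2=\{1,2,3,4\}$ with the single edge $(2,2)$; then $(a,b)=(3,2)$ and $(a',b')=(2,3)$ already exhibit this.) Because the bridge $p^i_n\to q^i_1$ and the gadget segments are shared, privatising intermediate vertices does not help: both endpoints of any such private arc lie on the common infrastructure.

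The paper sidesteps both issues by a different architecture, adapting Marx--Razgon: one gadget $G_{i,j}$ per \emph{ordered pair} of clique slots, each a doubled path whose two heavy cuts jointly encode an edge of $G$; consistency across gadgets is enforced not by terminal pairs that must be separated ``for free'' but by light arcs that \emph{consume budget} (exactly one per unordered pair $\{i,j\}$ and one per slot $i$). The cutset parameter becomes polynomial in $t$ rather than linear in $k$, which is immaterial for $W[1]$-hardness, and the verification reduces to counting cuts rather than ruling out unintended paths through shared gadgets.
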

Thus, we complete the picture of parameterized complexity of \dagmulticut. We hope that it is a step towards fully understanding the parameterized complexity of \multicut{} in general directed graphs.

Second, we establish NP-completeness of \textsc{Skew Multicut}, a special case of \dagmulticut{} where we are given $d$ sources $(s_i)_{i=1}^d$ and $d$ sinks $(t_i)_{i=1}^d$, and the set of terminal pairs is defined as $\terms =\{ (s_i,t_j): 1 \leq i \leq j \leq d\}$. Recall that the FPT algorithm for \textsc{Skew Multicut} is the core subroutine of the algorithm for \textsc{Directed Feedback Vertex Set} of Chen et al.~\cite{directed-fvs}.
NP-completeness of \dagmulticut{} with two terminal pairs is an easy corollary of Theorem~\ref{thm:lb-skew}.
\begin{theorem}\label{thm:lb-skew}
\textsc{Skew Multicut} is NP-complete even in the restricted case of two sinks and two sources.
\end{theorem}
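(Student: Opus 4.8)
The plan is to first note that \textsc{Skew Multicut} lies in NP, since a candidate cutset $Z$ can be verified in polynomial time by computing the set of vertices reachable from each source in $G\setminus Z$. The substance is the NP-hardness, which I would obtain by a polynomial reduction from \textsc{3-SAT}; it is convenient to preprocess the formula so that every variable occurs equally often positively and negatively (padding with tautological clauses $y\vee\bar y\vee\ell$ for a fresh variable $y$ each time), which makes the budget count clean. Given a formula $\varphi$, I build a DAG $G$ with sources $s_1,s_2$, sinks $t_1,t_2$, and terminal pairs $\{(s_1,t_1),(s_1,t_2),(s_2,t_2)\}$, together with a budget $\cut$ tuned so that a cutset of size at most $\cut$ can exist only if it encodes a satisfying assignment of $\varphi$.

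The intended correspondence is as follows. Each variable $x$ gets two ``switch'' vertices $T_x$ and $F_x$; the $s_1$--$t_1$ requirement is realised by the length-three paths $s_1\to T_x\to F_x\to t_1$, one per variable, so that any cutset must contain at least one of $T_x,F_x$ for every $x$, and with the budget made tight this is forced to be exactly one per variable with nothing superfluous --- $T_x\in Z$ corresponds to setting $x$ to true, $F_x\in Z$ to setting it to false. Each literal occurrence gets its own \emph{occurrence vertex}, and each clause $C$ is realised as a single $s_2$--$t_2$ path through the three occurrence vertices of its literals, so that this path is hit by $Z$ exactly when some literal of $C$ has its occurrence vertex selected. ``Implication'' paths of the form $s_2\to F_x\to(\text{occurrence vertex of a positive occurrence of }x)\to t_2$, and their symmetric counterparts, then force all occurrence vertices of the chosen polarity of $x$ into $Z$, so that the occurrence vertices faithfully mirror the assignment and ``clause path hit'' becomes equivalent to ``clause satisfied''. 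Finally, the graph is wired so that every $s_1\rightsquigarrow t_2$ path passes through a switch vertex of some variable --- which is always in $Z$ --- so that the $s_1$--$t_2$ requirement costs nothing extra.

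Granting the construction, the forward direction is routine: from a satisfying assignment one forms the canonical cutset (the chosen switches plus all occurrence vertices of the satisfied polarities), checks that its size is exactly $\cut$, and verifies it meets all three terminal pairs. The backward direction, which is the core, proceeds through a structural lemma: using that the budget is exactly tight against the $s_1$--$t_1$ paths and the implication paths, any cutset of size at most $\cut$ must consist of precisely one switch per variable together with the induced occurrence vertices, hence defines an assignment; and since all clause paths are hit, that assignment satisfies $\varphi$.

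The step I expect to be the main obstacle is controlling the \emph{unintended} source-to-sink paths. Since switch and occurrence vertices are shared between several gadget paths, one can ``splice'' a prefix of one gadget path with a suffix of another and obtain many extra $s_i\rightsquigarrow t_j$ paths that a valid cutset must also hit. The construction must be engineered --- for instance by giving each literal occurrence its own occurrence vertex rather than one vertex per literal, by inserting private buffer vertices on the implication paths, and by fixing a global topological order so that every such spliced path runs through a vertex already contained in the canonical cutset --- so that (a) the canonical cutset of a satisfiable $\varphi$ still hits all spliced paths, and (b) no cutset of size at most $\cut$ exists when $\varphi$ is unsatisfiable. Establishing (a) and (b) amounts to a careful but finite case analysis over the possible shapes of such spliced paths, and that is where most of the work lies. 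NP-completeness of \dagmulticut{} with two terminal pairs then follows by a minor modification of the same construction.
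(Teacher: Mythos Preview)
Your approach is genuinely different from the paper's: you reduce from \textsc{3-SAT}, while the paper reduces from \textsc{Max-Cut} in the weighted arc-deletion setting, using three weight scales (infinite, heavy of weight $D=2m+1$, and light of weight $1$). In the paper's construction each vertex $v$ of the \textsc{Max-Cut} instance contributes two heavy arcs of which exactly one must be cut (encoding the side of $v$), and each edge contributes a small gadget costing one light arc if its endpoints are separated and two otherwise; the gap between heavy and light weights is precisely what forces the structural choice to be exactly one heavy arc per vertex, with no case analysis of spliced paths needed.

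Your proposal has a concrete gap in the backward direction. With the preprocessing you describe (each variable occurring $k_x$ times in each polarity), the intended per-variable contribution to the cutset is $1+k_x$: one switch plus the $k_x$ matching occurrence vertices. But deleting \emph{both} switch vertices $T_x,F_x$ costs only $2$ and already kills the $s_1$--$t_1$ path, all implication paths through $x$, and (by your own design) all $s_1$--$t_2$ paths through $x$. For any $k_x\geq 2$ this is strictly cheaper, and the savings accumulate: once average occurrence counts are large enough, one can cut both switches for every variable and still have budget left to hit every clause path by picking one occurrence vertex per clause --- regardless of whether $\varphi$ is satisfiable. So the tightness claim (``the budget is exactly tight against the $s_1$--$t_1$ paths and the implication paths'') fails as stated. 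To repair this you would need either a two-scale weight mechanism, as the paper uses, to rigidify the switch choice, or a much stronger preprocessing (each variable occurring exactly once in each polarity) together with an exchange argument showing that any size-$p$ cutset using $\{T_x,F_x\}$ can be traded for one encoding a proper truth value --- and then still the spliced-path analysis you anticipate. The paper's \textsc{Max-Cut} reduction sidesteps all of this: the heavy/light separation makes the structure rigid from the outset, and the edge gadgets are local and symmetric, so the correctness proof is a short direct computation rather than a case analysis.
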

\begin{theorem}\label{thm:lb-np}
\dagmulticut{} is NP-complete even in the restricted case of two terminal pairs.
\end{theorem}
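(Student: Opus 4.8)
The plan is to verify membership in NP --- which is routine, since a cutset $Z$ of size at most $\cut$ serves as a polynomial-size certificate and one can test in polynomial time whether $t_i$ is reachable from $s_i$ in $G \setminus Z$ --- and then to prove NP-hardness by a reduction from \textsc{Skew Multicut} with two sources and two sinks, which is NP-hard by Theorem~\ref{thm:lb-skew}. Such an instance consists of sources $s_1, s_2$, sinks $t_1, t_2$, the three terminal pairs $(s_1,t_1)$, $(s_1,t_2)$, $(s_2,t_2)$, and a budget $\cut$; we may clearly assume $t_1 \neq t_2$, as otherwise the three pairs collapse to at most two and there is nothing to prove.

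The idea is to fold the requirement ``separate $(s_1,t_2)$'' into the requirement ``separate $(s_1,t_1)$'' by making $t_1$ reachable from $t_2$. Concretely, I would let $G'$ be $G$ with the single extra arc $t_2 \to t_1$, and output the \dagmulticut{} instance consisting of $G'$, the two terminal pairs $(s_1,t_1)$ and $(s_2,t_2)$, and the same budget $\cut$. The graph $G'$ is still acyclic: since $t_1$ is a sink of $G$ there is no path from $t_1$ to $t_2$ in $G$, so the new arc closes no cycle. Also $G'$ has the same vertices and the same terminals as $G$, so the eligible cutsets are literally the same sets of non-terminal vertices in both instances; only the \emph{meaning} of ``cutset'' must be matched up. Here the shape of the construction enters: because $t_1$ remains a sink and $t_2$ gains no new in-neighbour, every path from $s_1$ to $t_1$ in $G'$ either avoids the new arc --- and is then a path from $s_1$ to $t_1$ in $G$ --- or uses it as its final arc --- and is then a path from $s_1$ to $t_2$ in $G$ followed by $t_2 \to t_1$ --- while every path from $s_2$ to $t_2$ in $G'$ avoids the new arc entirely and is a path from $s_2$ to $t_2$ in $G$.

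From this, the equivalence is immediate. If $Z$ (of size at most $\cut$, hence avoiding $s_1,s_2,t_1,t_2$) is a \textsc{Skew Multicut} solution for $G$, then $G' \setminus Z$ has no path from $s_1$ to $t_1$ --- such a path would give one from $s_1$ to $t_1$ or from $s_1$ to $t_2$ in $G \setminus Z$ --- and no path from $s_2$ to $t_2$, so $Z$ solves the \dagmulticut{} instance. Conversely, if $Z$ solves the \dagmulticut{} instance, then $G \setminus Z$ has no path from $s_1$ to $t_1$ and none from $s_2$ to $t_2$ (both would survive into $G' \setminus Z$), and also none from $s_1$ to $t_2$, since appending $t_2 \to t_1$ to such a path would produce a path from $s_1$ to $t_1$ in $G' \setminus Z$; hence $Z$ is a \textsc{Skew Multicut} solution. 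The reduction is clearly polynomial and outputs an instance with exactly two terminal pairs, which establishes the theorem.

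I do not expect a genuine obstacle. The only points requiring attention are that the new arc must preserve acyclicity --- which is why it is directed \emph{into} the sink $t_1$ --- and that it must not introduce unwanted reachability --- which is why we route $t_2$ into $t_1$ rather than collapsing all three requirements into one terminal pair by also inserting a common source, since the latter would spuriously forbid $s_2$ from reaching $t_1$, a pair that \textsc{Skew Multicut} deliberately permits. Everything else is routine bookkeeping comparing paths in $G$ and $G'$.
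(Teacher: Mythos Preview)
Your reduction is essentially the paper's: add an arc $t_2 \to t_1$ and drop the pair $(s_1,t_2)$. The paper works in the arc-deletion variant and therefore gives the new arc infinite weight; you work directly in the vertex-deletion variant, where the arc is automatically uncuttable because both endpoints are terminals, which is if anything cleaner.

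There is one small gap. You assert that ``$t_1$ is a sink of $G$'' to argue acyclicity of $G'$ (and implicitly to say the new arc can only appear as the last arc of an $s_1t_1$-path). But in the definition of \textsc{Skew Multicut}, ``sink'' just means ``second coordinate of a terminal pair''; nothing forces $\outN_G(t_1)=\emptyset$, and if $G$ happened to contain a $t_1t_2$-path your $G'$ would not be a DAG. The paper handles this by explicitly assuming $\outN_G(t_1)=\outN_G(t_2)=\emptyset$ and justifying it either by inspecting the concrete instance produced in the proof of Theorem~\ref{thm:lb-skew} or by a Lemma~\ref{lem:source-sink}-style preprocessing. You should add one such sentence; after that your argument is complete.
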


\paragraph{Organization of the paper.}
In Section \ref{sec:prel} we introduce notation and recall the notion of important
separators and the technique of shadow removal of Marx and Razgon \cite{marx:multicut}
and Chitnis et al. \cite{dir-mwc}.
Section \ref{sec:alg} contains the proof of our main contribution, Theorem \ref{thm:main}.
The lower bounds (i.e., Theorems~\ref{thm:lb-w1}, \ref{thm:lb-skew} and~\ref{thm:lb-np})
  are proven in Section \ref{sec:lb}.
Section \ref{sec:conc} concludes the paper.

\section{Preliminaries}\label{sec:prel}

For a directed graph $G$, by $V(G)$ and $E(G)$ we denote its vertex- and arc-set, respectively.
For a vertex $v \in V(G)$, we define its {\em{in-neighbourhood}} $\inN_G(v) = \{u: (u,v) \in E(G)\}$
and {\em{out-neighbourhood}} $\outN_G(v) = \{u: (v,u) \in E(G)\}$;
these definitions are extended to sets $X \subseteq V(G)$ by
$\inN_G(X) = ( \bigcup_{v \in X} \inN_G(v) ) \setminus X$ and
$\outN_G(X) = ( \bigcup_{v \in X} \outN_G(v) ) \setminus X$.
The {\em{in-degree}} and {\em{out-degree}} of $v$ are defined as $|\inN_G(v)|$ and $|\outN_G(v)|$, respectively.
In this paper we consider simple directed graphs only; if at any point a modification of the graph results in a multiple arc,
we delete all copies of the arc except for one. By $\reverse{G}$ we denote the graph $G$ with all the arcs reversed, i.e., $\reverse{G}=(V(G),\{(v,u):(u,v)\in E(G)\})$.

A {\em{path}} in $G$ is a sequence of pairwise different vertices $P=(v_1,v_2,\ldots,v_d)$ such that
$(v_i,v_{i+1}) \in E(G)$ for any $1 \leq i < d$. If $v_1$ is the first vertex of the path $P$ and $v_d$ is the last vertex,
  we say that $P$ is a {\em{$v_1v_d$-path}}. We extend this notion to sets of vertices: if $v_1 \in X$ and $v_d \in Y$ for some
$X,Y \subseteq V(G)$, then $P$ is a $XY$-path as well.
For a path $P=(v_1,v_2,v_3,\ldots,v_d)$ the vertices $v_2,v_3,\ldots,v_{d-1}$ are the {\em{internal vertices}} of $P$. The set of internal vertices of a path $P$ is the {\em{interior}} of $P$.
We say that a vertex $v$ is {\em{reachable}} from a vertex $u$ in $G$ if there exists a $uv$-path in $G$.
As the considered digraphs are simple, each path $P=(v_1,v_2,\ldots,v_d)$ has a unique \emph{first arc} $(v_1,v_2)$ and a unique \emph{last arc} $(v_{d-1},v_d)$.
% we may say that the arc $(v_1,v_2)$ is {\em{the first arc}} on the path $P=(v_1,v_2,\ldots,v_d)$
% and $(v_{d-1},v_d)$ is {\em{the last arc}}.

Let $(G,\terms,\cut)$ be a \dagmulticut{} instance
with a set of $\nterms$ terminal pairs $\terms = \{(s_i,t_i): 1 \leq i \leq \nterms\}$.
We call the terminals $s_i$ {\em{source terminals}} and the terminals $t_i$ {\em{sink terminals}}. We let $\sources = \{s_i: 1 \leq i \leq \nterms\}$,
$\sinks = \{t_i: 1 \leq i \leq \nterms\}$ and $T = \sources \cup \sinks$.
Consider the following easy reduction.

\begin{lemma}\label{lem:source-sink}
There exists a polynomial-time algorithm that, given a \dagmulticut{} instance $(G,\terms,\cut)$,
computes and equivalent instance $(G',\terms',\cut')$, such that:
\begin{enumerate}
\item $|\terms| = |\terms'|$ and $\cut = \cut'$;
\item $\terms' = \{(s_i',t_i'): 1 \leq i \leq \nterms\}$ and all terminals $s_i'$ and $t_i'$ are pairwise different;
\item for each $1 \leq i \leq \nterms$ we have $\inN_{G'}(s_i') = 0$ and $\outN_{G'}(t_i') = 0$.
\end{enumerate}
\end{lemma}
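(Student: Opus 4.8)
The plan is to obtain $(G',\terms',\cut')$ from $(G,\terms,\cut)$ by two local gadget modifications that do not change the parameter or the number of terminal pairs, and then argue that a set $Z'$ of non-terminal vertices is a solution to the new instance if and only if the ``same'' set is a solution to the old one. We keep $\cut'=\cut$ and $|\terms'|=|\terms|=\nterms$ throughout, so conditions~(1) is immediate; conditions~(2) and~(3) will be enforced by the construction.

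First I would make the terminals pairwise different. A vertex $v$ of $G$ may occur as $s_i$ and/or $s_j$ and/or $t_k$ for several indices; since terminals are undeletable, the only thing that matters about $v$ is which $s_iv$-reachabilities and $vt_j$-reachabilities it participates in. For each index $i$, introduce a fresh source vertex $s_i'$ and a fresh sink vertex $t_i'$; add the arc $(s_i', s_i)$ and the arc $(t_i, t_i')$, where here $s_i, t_i$ denote the original vertices (now treated as ordinary, non-terminal vertices of $G'$, except that if an original terminal coincided with another original terminal we still only keep one internal copy of it). Formally: $V(G') = V(G) \cup \{s_i', t_i' : 1 \le i \le \nterms\}$, and $E(G') = E(G) \cup \{(s_i', s_i) : 1 \le i \le \nterms\} \cup \{(t_i, t_i') : 1 \le i \le \nterms\}$, with $\terms' = \{(s_i', t_i')\}$. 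The new terminals are pairwise distinct by construction, giving~(2). Since $G$ is a DAG and the new vertices are sources/sinks of the added arcs, $G'$ is still a DAG. For equivalence: any $s_i' t_i'$-path in $G'$ must start with $(s_i', s_i)$, end with $(t_i, t_i')$, and in between be an $s_i t_i$-path inside $G$; conversely every $s_i t_i$-path in $G$ extends uniquely. Moreover the new terminal vertices have no useful deletion role, so $Z \subseteq V(G)$ is a solution for $(G,\terms,\cut)$ iff $Z$ (viewed in $G'$, and note $Z$ contains no original terminal, hence no vertex we need to forbid) is a solution for $(G',\terms',\cut)$ — here we must also check that a hypothetical solution $Z'$ for $G'$ can be assumed to avoid the gadget vertices $s_i', t_i'$ (it must, since they are terminals) and also the internal copies of old terminals: if $Z'$ contains such an internal copy $s_i$ (now a non-terminal of $G'$), we can replace it by deleting nothing there, because $s_i$ has a unique in-arc from $s_i'$ and we instead... — actually here one argues that $s_i$ need not be in $Z'$ because its removal is equivalent to removing $\outN_{G'}(s_i)$-side structure; cleanest is to note the reduction of~(3) below already forces in-degree $0$, so we fold both steps together.

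Second I would establish~(3) \emph{simultaneously} with the above by how the gadget is built: $s_i'$ has $\inN_{G'}(s_i') = \emptyset$ and $t_i'$ has $\outN_{G'}(t_i') = \emptyset$ directly from the definition of $E(G')$, since we add only an out-arc at each $s_i'$ and only an in-arc at each $t_i'$ and introduce no other incident arcs. So conditions~(1)--(3) all hold by construction, and the only real content is the equivalence claim. The main obstacle — and it is a mild one — is verifying that an arbitrary solution $Z'$ of the new instance yields a solution of the old instance of no larger size: $Z'$ cannot contain any $s_i'$ or $t_i'$ (terminals), and if $Z'$ contains an original-terminal vertex $v$ that is non-terminal in $G'$, one shows $Z' \setminus \{v\}$ together with, if necessary, one well-chosen neighbour is still a solution, or more simply that the sets $\{(s_i',s_i)\}$ arcs and $\{(t_i,t_i')\}$ arcs can be contracted away for the purpose of reachability so that deleting $v$ is never strictly better than deleting some non-terminal of $G$. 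Since the paper works with vertex cuts and original terminals are already excluded from candidate cutsets in the source instance, the correspondence $Z \leftrightarrow Z$ is in fact the identity on $V(G) \setminus T$, and one only has to observe that no solution is forced to use $T$, which follows from the standard fact (used throughout) that a vertex of $T$ lying on an $s_i t_i$-path can be ``bypassed'' in the cut by one of its neighbours without increasing the budget. I expect this bypass argument to be the single step needing care; everything else is bookkeeping.
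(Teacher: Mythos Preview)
Your construction has a genuine gap: once the original terminals $s_i,t_i$ become non-terminal vertices of $G'$, a solution $Z'$ to the new instance is allowed to delete them, and this can make $G'$ a YES instance when $G$ is a NO instance. A concrete counterexample: take $G$ with a single pair $(s_1,t_1)$ and $\cut+1$ internally vertex-disjoint $s_1t_1$-paths. The original instance is NO (any cutset avoiding $s_1,t_1$ has size $\ge\cut+1$), but in your $G'$ the single vertex $s_1$ is a legal non-terminal deletion that separates $s_1'$ from $t_1'$, so $G'$ is YES already for $\cut=1$. Your ``bypass'' claim, that a terminal in a cut can always be replaced by one well-chosen neighbour without increasing the budget, is simply false when the terminal has out-degree (or in-degree) larger than one; there is no such standard fact.

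The paper's proof patches exactly this hole with a different gadget: before introducing the fresh terminals $s_i',t_i'$, it first replaces every original terminal $v$ by $\cut+1$ non-terminal copies of $v$ (each with the same in- and out-neighbourhood as $v$). This guarantees that any solution of size at most $\cut$ leaves at least one copy of each old terminal intact, so effectively the old terminals remain undeletable and the equivalence goes through. The fresh $s_i'$ then gets arcs directly to $\outN_G(s_i)$ (not to $s_i$ itself), and symmetrically for $t_i'$. Your pendant-arc idea is natural but, as the counterexample shows, insufficient without this copying trick.
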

\begin{proof}
To construct graph $G'$, start with the graph $G$ and for each terminal $v$ of $\terms$ replace $v$ with $\cut+1$ copies of $v$ (i.e., vertices with the same in- and out-neighbourhood as $v$;
    these copies are not terminals in $\terms'$).
Moreover, for each $1 \leq i \leq \nterms$, add to $G'$ a new terminal $s_i'$ with arcs $\{(s_i',u): u \in \outN_G(s_i)\}$
and a new terminal $t_i'$ with arcs $\{(u,t_i'): u \in \inN_G(t_i)\}$. Finally, take $\terms'=\{(s_i',t_i'): 1 \leq i \leq \nterms\}$
and $\cut'=\cut$.

To see the equivalence, first take a multicut $Z$ in $(G,\terms,\cut)$ and an arbitrary $s_i't_i'$-path $P' = (s_i',v_1',v_2',\ldots,v_d',t_i')$ in $G'$.
The path $P'$ induces a path $P = (s_i,v_1,v_2,\ldots,v_d,t_i)$ in $G$: $v_k = v_k'$ if $v_k'$ is present in $G$, and $v_k = v$ if $v_k'$ is one of the $\cut+1$
copies of a terminal $v$ in $\terms$. The path $P$ is intersected by $Z$ on some non-terminal vertex $v_k$; as $v_k=v_k'$, $Z$ intersects $P'$. We infer that $Z$ is a multicut
in $(G',\terms',\cut')$ as well.

In the other direction, let $Z$ be a multicut in $(G',\terms',\cut')$ of size at most $\cut$ and let $P = (s_i,v_1,v_2,\ldots,v_d,t_i)$ be an arbitrary $s_it_i$-path in $G$.
Construct a path $P' = (s_i',v_1',v_2',\ldots,v_d',t_i')$ as follows: take $v_k'=v_k$ if $v_k$ is a nonterminal in $(G,\terms,\cut)$ and take $v_k'$ to be one of the copy
of $v_k$ that is not contained in $Z$ otherwise; note that at least one copy is not contained in $Z$, as $|Z|\leq p$. As $Z$ is a multicut in $(G',\terms',\cut')$, $Z$ intersects $P'$. By the choice of the internal vertices of $P'$,
$Z$ intersects $P'$ on some vertex $v_k'=v_k$ for $v_k$ being a nonterminal in $(G,\terms,\cut)$. Therefore $Z \cap V(G)$ is a multicut in $(G,\terms,\cut)$.
\end{proof}
In our algorithm, the set of terminal pairs $\terms$ is never modified, and neither in-neighbors of a source nor out-neighbors of a sink are added.
Thus, we may assume that during the course of our algorithm
all terminals are pairwise distinct and that
$\inN_G(s_i) = \outN_G(t_i) = \emptyset$ for all $1 \leq i \leq \nterms$.

Fix a topological order $\toporder$ of $G$. %Let $\toporder$ be an arbitrary fixed topological order of $G$.
For any sets $X,Y \subseteq V(G)$, we may order
the vertices of $X$ and $Y$ with respect to $\toporder$, and compare $X$ and $Y$ lexicographically;
we refer to this order on subsets of $V(G)$ as the {\em{lexicographical order}}.

A set $Z \subseteq V(G)$ is called a {\em{multicut}} in $(G,\terms,\cut)$, if $Z$ contains no terminals,
  but for each $1 \leq i \leq \nterms$, $t_i$ is not reachable from $s_i$ in $G \setminus Z$.
Given a \dagmulticut{} instance $\inst = (G,\terms,\cut)$ a multicut $Z$ is called a {\em{solution}} if $|Z| \leq \cut$.
A solution $Z$ is called a {\em{lex-min solution}} if $Z$ is lexicographically minimum solution in $\inst$ among solutions of minimum possible size.

For $v \in V(G)$, by $\src(G,v)$ we denote set of source terminals $s_i$ for which there exists a $s_iv$-path in $G$.
For a set $S \subseteq \sources$ by $V(G,S)$ we denote the set of nonterminal vertices $v$ for which $\src(G,v) = S$.

\subsection{Important separators and shadows}

In the rest of this section we recall the notion of important separators by Marx~\cite{marx-prehistory}, adjusted to the directed case by Chitnis et al. \cite{dir-mwc},
   as well as the shadow removal technique of Marx and Razgon \cite{marx:multicut}
   and Chitnis et al. \cite{dir-mwc}.

\begin{definition}[\textbf{separator}, \cite{dir-mwc}, Definition 2.2]
Let $G$ be a directed graph with terminals $T \subseteq V(G)$.
Given two disjoint non-empty sets $X,Y \subseteq V(G)$, we call a set $Z \subseteq V(G)$ an {\em{$X-Y$ separator}}
if (i) $Z \cap T = \emptyset$, (ii) $Z \cap (X \cup Y) = \emptyset$, (iii) there is no path from $X$ to $Y$ in $G \setminus Z$.
An $X-Y$ separator $Z$ is called {\em{minimal}} if no proper subset of $Z$ is a $X-Y$ separator.
\end{definition}

By $\textrm{cut}_G(X,Y)$ we denote the size of a minimum $X-Y$ separator in $G$; $\textrm{cut}_G(X,Y) = \infty$
if $G$ contains an arc going directly from $X$ to $Y$.
By Menger's theorem, $\textrm{cut}_G(X,Y)$ equals the maximum possible size of a family of $XY$-paths with pairwise disjoint interiors.

\begin{definition}[\textbf{important separator}, \cite{dir-mwc}, Definition 4.1]
Let $G$ be a directed graph with terminals $T \subseteq V(G)$ and let $X,Y \subseteq V(G)$  be two disjoint non-empty sets.
Let $Z$ and $Z'$ be two $X-Y$ separators. We say that $Z'$ is {\em{behind}} $Z$ if any vertex reachable from $X$ in $G \setminus Z$
is also reachable from $X$ in $G \setminus Z'$. A minimal $X-Y$ separator is an {\em{important separator}} if
no other $X-Y$ separator $Z'$ satisfies $|Z'| \leq |Z|$ while being also behind $Z$.
\end{definition}

We need also some known properties of minimum size cuts (cf. \cite{chen-nmc,dir-mwc}).

\begin{lemma}[\cite{dir-mwc}, Lemma B.4]
Let $G$ be a directed graph with terminals $T \subseteq V(G)$.
For two disjoint non-empty sets $X,Y \subseteq V(G)$, there exists exactly one minimum size important $X-Y$ separator.
\end{lemma}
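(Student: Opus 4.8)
The plan is to pin down the unique minimum-size important separator explicitly, as the ``farthest'' minimum-size separator, using submodularity of the out-neighbourhood function. Throughout fix $G$, $T$, and the disjoint nonempty sets $X,Y$, and tacitly assume that at least one $X$--$Y$ separator exists (i.e.\ $\textrm{cut}_G(X,Y)<\infty$); otherwise the statement is vacuous. For an $X$--$Y$ separator $Z$ write $R(Z)$ for the set of vertices reachable from $X$ in $G\setminus Z$, so that $X\subseteq R(Z)$, $Y\cap R(Z)=\emptyset$, $Z\cap R(Z)=\emptyset$, and (by definition of ``behind'') $Z'$ is behind $Z$ iff $R(Z)\subseteq R(Z')$.

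The first step is a structural normal form: if $Z$ is a \emph{minimal} $X$--$Y$ separator then $Z=\outN_G(R(Z))$. The inclusion $\outN_G(R(Z))\subseteq Z$ is immediate, since a vertex outside $R(Z)$ with an in-neighbour in $R(Z)$ would otherwise be reachable from $X$; for the converse, if some $z\in Z$ had no in-neighbour in $R(Z)$, then a shortest-path argument (any $X$-path leaving $R(Z)$ first hits $\outN_G(R(Z))\subseteq Z$) shows $Z\setminus\{z\}$ is still a separator, contradicting minimality. Consequently a minimal separator is uniquely determined by $R(Z)$, and since every minimum-size separator is minimal, this applies to all minimum-size separators.

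The second step is closure of the family $\{R(Z): Z \text{ a minimum-size } X\text{--}Y\text{ separator}\}$ under union. Given two minimum-size separators $Z_1,Z_2$ with $R_i=R(Z_i)$, set $R_\cup=R_1\cup R_2$ and $R_\cap=R_1\cap R_2$. One checks that $\outN_G(R_\cup)$ and $\outN_G(R_\cap)$ are again $X$--$Y$ separators: they avoid $X\cup Y$ trivially, they avoid $T$ and $Y$ because a terminal or $Y$-vertex with an in-neighbour in $R_i$ would itself lie in $R_i$, and they cut every $X$--$Y$ path, which must leave $R_\cup$ (resp.\ $R_\cap$) and hence hit the out-neighbourhood. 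Combined with the submodular inequality $|\outN_G(R_\cup)|+|\outN_G(R_\cap)|\le|\outN_G(R_1)|+|\outN_G(R_2)|=|Z_1|+|Z_2|$ — proved by a routine vertex-by-vertex count — and the minimality of $|Z_1|=|Z_2|=\textrm{cut}_G(X,Y)$, it follows that $\outN_G(R_\cup)$ is also a minimum-size separator, and a short argument gives $R(\outN_G(R_\cup))=R_\cup$. Iterating over all minimum-size separators yields a unique $\subseteq$-maximal member $R^\ast$ of the family, and $Z^\ast:=\outN_G(R^\ast)$ is a minimum-size separator with $R(Z^\ast)=R^\ast$.

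Finally I would verify that $Z^\ast$ is the unique minimum-size important separator. Any $X$--$Y$ separator behind $Z^\ast$ of size at most $|Z^\ast|$ has size exactly $\textrm{cut}_G(X,Y)$, is therefore minimal with $X$-side $\subseteq R^\ast$, and, containing $R^\ast$ as well, equals $\outN_G(R^\ast)=Z^\ast$; hence $Z^\ast$ is important, of size $\textrm{cut}_G(X,Y)$. Conversely any minimum-size important separator $Z$ has size $\textrm{cut}_G(X,Y)$ (witnessed by $Z^\ast$) and is minimal, so $R(Z)\subseteq R^\ast=R(Z^\ast)$, i.e.\ $Z^\ast$ is behind $Z$ with $|Z^\ast|\le|Z|$, which by importance of $Z$ forces $Z=Z^\ast$. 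The main work — and the point most prone to slips — is the third step: checking that $\outN_G(R_\cup)$ genuinely satisfies all three conditions in the definition of a separator (the disjointness from $T$ and $Y$ is easy to overlook) and that its reachable set is exactly $R_\cup$, which is precisely what keeps the union operation inside the class of minimum-size separators.
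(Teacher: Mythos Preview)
Your argument is correct and follows the standard submodularity route: minimal separators are determined by their reachable sets via $Z=\outN_G(R(Z))$, submodularity of $|\outN_G(\cdot)|$ makes the family $\{R(Z):Z\text{ a minimum separator}\}$ a finite lattice closed under union, and the separator attached to the unique maximal element $R^\ast$ is the unique minimum-size important separator. The one place where your write-up is slightly compressed is the sentence ``is therefore minimal with $X$-side $\subseteq R^\ast$, and, containing $R^\ast$ as well'': to a reader it may not be immediately clear that the inclusion $R(Z')\subseteq R^\ast$ comes from $Z'$ being minimum-size (hence $R(Z')$ lies in the family with maximal element $R^\ast$), while $R(Z')\supseteq R^\ast$ comes from $Z'$ being behind $Z^\ast$. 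Spelling that out would make the final paragraph easier to follow.

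As for comparison with the paper: there is nothing to compare against. The paper does not prove this lemma at all; it is quoted as Lemma~B.4 of \cite{dir-mwc} and used as a black box. Your proof is precisely the kind of argument one would expect to find in the cited source.
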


\begin{definition}[\textbf{closest mincut}]
Let $G$ be a directed graph with terminals $T \subseteq V(G)$.
For two disjoint non-empty sets $X,Y \subseteq V(G)$, the
unique minimum size important $X-Y$ separator is called the {\em{$X-Y$ mincut closest to $Y$}}.
The {\em{$X-Y$ mincut closest to $X$}} is the $Y-X$ mincut closest to $X$ in $\reverse{G}$.
\end{definition}

\begin{lemma}\label{lem:push-mincut}
Let $G$ be a directed graph with terminals $T \subseteq V(G)$
and let $X,Y \subseteq V(G)$ be two disjoint non-empty sets.
Let $B$ be the unique minimum size important $X-Y$ separator, that is, the $X-Y$ mincut
closest to $Y$, and let $v \in B$ be an arbitrary vertex.
Construct a graph $G'$ from $G$ as follows: delete $v$ from $G$ and add an arc $(x,w)$
for each $x \in X$ and $w \in \outN_G(v)\setminus X'$, where $X'$ is the set of vertices reachable from $X$ in $G\setminus B$.
Then the size of any $X-Y$ separator in $G'$ is strictly larger than $|B|$.
\end{lemma}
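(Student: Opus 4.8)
The plan is to argue by contradiction: suppose $G'$ admits an $X-Y$ separator $Z$ with $|Z|\le|B|$, and derive a contradiction with the choice of $B$ — namely with the fact that $B$ is a \emph{minimum} $X-Y$ separator which is, moreover, the one closest to $Y$. Write $\lambda=|B|=\textrm{cut}_G(X,Y)$ and $W=\outN_G(v)\setminus X'$, so the arcs added to form $G'$ go from $X$ to $W$. Observe first that $v\notin X'$ (as $v\in B$), and that, because $B$ is the $X-Y$ mincut closest to $Y$, the set of vertices reachable from $X$ after deleting \emph{any} minimum $X-Y$ separator of $G$ is contained in $X'$; in particular none of these reachability sets contains $v$. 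One preliminary case is immediate: if some $w\in W$ lies in $Y$, then $G'$ contains an arc directly from $X$ to $Y$ and has no $X-Y$ separator at all — contradiction; so from now on $W\cap Y=\emptyset$.

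Next, let $R$ be the set of vertices reachable from $X$ in $G'\setminus Z$; then $X\subseteq R$, $R\cap Y=\emptyset$, and $\outN_{G'}(R)\subseteq Z$. Since in $G'$ every vertex of $X$ has an arc to each vertex of $W$, we get $W\subseteq R\cup Z$. The core of the proof is to ``lift'' $R$ back to a set that witnesses a small separator of $G$: I would show that the out-neighbourhood in $G$ of $R\cup\{v\}$ — possibly enlarged by those out-neighbours of $v$ that lie in $X'$ — is an $X-Y$ separator of $G$ of size at most $\lambda$ whose reachability set contains $v$. Since $v\notin X'$, this contradicts the first paragraph and finishes the proof. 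Most checks are routine: an out-neighbour in $G$ of a vertex of $R$ that leaves $R\cup\{v\}$ must, using $G-v\subseteq G'$ and $\outN_{G'}(R)\subseteq Z$, lie in $Z$; and an out-neighbour of $v$ lying outside $X'$ lies in $W\subseteq R\cup Z$, hence in $Z$ if it is outside $R\cup\{v\}$.

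The single delicate point — and the step I expect to be the main obstacle — is controlling the out-neighbours of $v$ that lie in $X'$: handling these carelessly (or simply throwing all of $X'$ into the lifted set, which reintroduces the boundary $B\setminus\{v\}$ of $X'$) inflates the constructed separator above size $\lambda$. This is exactly where closeness of $B$ to $Y$ must be invoked a second time. The structural fact to isolate is that $v$ has ``enough'' useful out-neighbours outside $X'$: if $W$ contained at most one useful vertex $w$, then $(B\setminus\{v\})\cup\{w\}$ would already be an $X-Y$ separator of $G$ of size $\le\lambda$ whose reachability set equals $X'\cup\{v\}\supsetneq X'$, again contradicting that $B$ is the mincut closest to $Y$ (one verifies $X'\cup\{v\}$ is $G$-closed with boundary $(B\setminus\{v\})\cup(\outN_G(v)\setminus X')$, and minimality forces no room for an extra useful vertex). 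With this fact in hand, the problematic configurations in the lifting step are excluded.

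An alternative route to the same contradiction, likely technically cleaner, dualises via Menger's theorem. Start from a maximum family $P_1,\dots,P_\lambda$ of internally vertex-disjoint $X-Y$ paths in $G$, with $v$ on $P_1$; if $a$ and $w_1$ are the predecessor and successor of $v$ on $P_1$ then $a\in X'$ and $w_1\notin X'$ (else $P_1$ would yield an $X-Y$ path in $G\setminus B$), so $w_1\in W$. Reroute $P_1$ in $G'$ as $P_1'=(x_1,w_1,\dots,t_1)$ using the new arc $x_1\to w_1$; this keeps $P_1',P_2,\dots,P_\lambda$ internally disjoint in $G'$, and their interiors avoid $v$ while $v$ stays off $P_1'$ (it precedes $w_1$ on $P_1$). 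The structural fact above produces a second useful out-neighbour $w_2\in W$ of $v$, from which $x_1\to w_2$ extends to a further $X-Y$ path of $G'$ internally disjoint from all of $P_1',P_2,\dots,P_\lambda$; hence $\textrm{cut}_{G'}(X,Y)\ge\lambda+1$, contradicting $|Z|\le\lambda$. Here the obstacle reappears as the need to make ``useful'' precise relative to the interiors of $P_1,\dots,P_\lambda$, so that the new path can be taken disjoint from them — which is again the content of the structural fact, in a suitably strengthened form.
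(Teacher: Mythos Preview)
Your proposal correctly locates the obstacle --- the out-neighbours of $v$ that lie in $X'$ --- but the structural fact you isolate does not close the gap in either route, and you do not supply the ``suitably strengthened form'' you yourself flag as necessary. In its stated form (if $W$ has at most one element then $(B\setminus\{v\})\cup W$ is an $X$--$Y$ separator of size at most $|B|$ strictly behind $B$) the fact yields only $|W|\ge 2$. In the lifting approach this says nothing about $\outN_G(v)\cap X'$: those vertices may still lie outside $R\cup Z$, and an extra element of $W$ gives no mechanism to compensate for them in the size of the lifted boundary. In the Menger approach, a second $w_2\in W$ does not by itself produce a $w_2$--$Y$ path in $G'$ internally disjoint from $P_1',P_2,\ldots,P_\lambda$: the vertex $w_2$ may lie on one of $P_2,\ldots,P_\lambda$, or every $w_2$--$Y$ path may meet their interiors. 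What is actually needed is a \emph{global} statement about flow beyond $B$, not a local one about $|W|$.

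The paper obtains exactly this global statement by working entirely on the far side of $X'$, where the troublesome arcs from $v$ back into $X'$ are simply absent. Concretely, it forms the auxiliary graph $G''$ by taking $G'[V(G')\setminus X']$ and attaching a fresh source $s$ to $\outN_{G'}(X')=(B\setminus\{v\})\cup W$. Every $s$--$Y$ separator $B'$ in $G''$ is then shown to be an $X$--$Y$ separator in $G$ that is behind $B$ (as $B'\cap X'=\emptyset$) and distinct from $B$ (as $v\notin V(G'')$); importance of $B$ forces $|B'|>|B|$. Menger now yields $|B|+1$ internally disjoint $s$--$Y$ paths in $G''$, and each extends to an $X$--$Y$ path in $G'$ by prepending either a new arc $(x,w)$ (if it starts in $W$) or the prefix $P_u^X\subseteq X'$ of one of the original $|B|$ paths (if it starts at $u\in B\setminus\{v\}$); disjointness is preserved because the prefixes live in $X'$ while the $G''$-parts live outside it. The idea you were missing is to make the comparison with $B$ \emph{after} collapsing $X'$ to a single source, rather than trying to lift a hypothetical small separator of $G'$ directly back to $G$.
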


\begin{proof}
The claim is obvious if $Y \cap \outN_G(v) \neq \emptyset$, as then $G'$ contains
a direct arc from $X$ to $Y$. Therefore, let us assume that $Y \cap \outN_G(v) = \emptyset$.

We prove the lemma by exhibiting more than $|B|$ $XY$-paths in $G'$ that have pairwise
disjoint interiors.
Recall that $X' \supseteq X$ is the set of vertices reachable from $X$ in $G \setminus B$;
note that $\outN_G(X') = B$. Since $B$ is a minimum size $X-Y$ separator,
there exist a set of $XY$-paths $(P_u)_{u \in B}$ such that
$P_u$ intersects $B$ only in $u$ and the interiors of paths $P_u$
are pairwise disjoint. Note that each path $P_u$ can be split into two parts:
$P_u^X$, between $X$ and $u$, with all vertices except for $u$ contained in $X'$,
  and $P_u^Y$, between $u$ and $Y$, with all vertices except for $u$ contained
  in $Y' = V(G) \setminus (X' \cup B)$.

Consider a graph $G''$ defined as follows: we take the graph $G'[V(G') \setminus (X'\cup \{v\})]$
and add a terminal $s$ and arcs $(s,w)$ for all $w \in \outN_{G'}(X') = (B \setminus \{v\}) \cup
(\outN_G(v)\setminus X')$. Let $B'$ be a minimum size $s-Y$ separator in $G''$. 

We claim that $B'$
is an $X-Y$ separator in $G$. Let $P$ be an arbitrary $XY$-path in $G$ and let $u$ be the last (closest to $Y$)
  vertex of $B$ on $P$. Then in $G''$ there exists a shortened version $P'$ of $P$:
  if $u = v$ and $w$ is the vertex directly after $u$ on $P$, then $P'$ starts
  with the arc $(s,w)$ and then follows $P$ to $Y$ (observe that $w\notin X'$, as $v$ was the last vertex of $B$ on $P$), while
  if $u \neq v$, then $P'$ starts with the arc $(s,u)$ and then follows $P$ to $Y$. As $B'$ has to intersect $P'$, then $B'$ also intersects $P$.

Moreover, we claim that $B'$ is behind $B$. This follows directly from the fact that $G''$ does not contain $X'$, so $X'$ is still reachable from $X$ in $G\setminus B'$.

As $B$ is an important separator, $B'$ is behind $B$ and $B' \neq B$,
   we have that $|B'| > |B|$. Therefore, there exists a family $\Pp$ of at least $|B|+1$ 
   $sY$-paths in $G''$ that have pairwise disjoint interiors. 
Observe that all these paths are disjoint with $X'$ by the construction of $G''$.
 Each path from $\Pp$ that starts with an arc $(s,w)$ for $w \in \outN_G(v)\setminus X'$
 is present (with the first vertex replaced by an arbitrary vertex of $X$)
  in $G'$ as well.
 Moreover, each other path starts with an arc $(s,u)$ for $u \in B$;
 in $G'$ such a path can be concatenated with the $Xu$-path $P_u^X$.
 All paths $P_u^X$ are entirely contained in $X'$ except for the endpoint $u$,
 so we obtain the desired family of $XY$-paths in $G'$.
\end{proof}

We now recall the necessary definitions of the shadow
removal technique from \cite{dir-mwc}.

\begin{definition}[\textbf{shadow}, \cite{dir-mwc}, Definition 2.3]
Let $G$ be a directed graph and $T \subseteq V(G)$ be a set of terminals. Let $Z \subseteq V(G)$ be a subset of vertices.
Then for $v \in V(G)$ we say that
\begin{enumerate}
\item $v$ is in the {\em{forward shadow}} of $Z$ (with respect to $T$), if $Z$ is a $T-v$ separator in $G$, and
\item $v$ is in the {\em{reverse shadow}} of $Z$ (with respect to $T$), if $Z$ is a $v-T$ separator in $G$.
\end{enumerate}
\end{definition}

\begin{definition}[\textbf{thin}, \cite{dir-mwc}, Definition 4.4]
Let $G$ be a directed graph and $T \subseteq V(G)$ a set of terminals.
We say that a set $Z \subseteq V(G)$ is {\em{thin}} in $G$ if there is no $v \in Z$
such that $v$ belongs to the reverse shadow of $Z \setminus \{v\}$ with respect to $T$.
\end{definition}

\begin{theorem}[\textbf{derandomized random sampling}, \cite{dir-mwc}, Theorem 4.1 and Section 4.3]\label{marx:sampling}
There is an algorithm that, given a directed graph $G$, a set of terminals $T \subseteq V(G)$
and an integer $\cut$, produces in time $\Ohstar(2^{2^{\Oh(\cut)}})$ a family $\shfam$ of size
$2^{2^{\Oh(\cut)}} \log |V(G)|$ of subsets of $V(G) \setminus T$ such that the following holds.
Let $Z \subseteq V(G) \setminus T$ be a thin set with $|Z| \leq p$ and let $Y$
be a set such that for every $v \in Y$ there is an important $v-T$ separator $Z_v \subseteq Z$.
For every such pair $(Z,Y)$ there exists a set $\shadow \in \shfam$ such that $\shadow \cap Z = \emptyset$
but $Y \subseteq \shadow$.
\end{theorem}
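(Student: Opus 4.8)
This statement is the form in which \cite{dir-mwc} packages the Marx--Razgon shadow-removal machinery, so the plan is to reprove it along those lines, in two layers: a \emph{randomized core} and its \emph{derandomization}.

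For the core, the goal is a procedure that, from a short random seed, outputs a single set $\shadow \subseteq V(G) \setminus T$ such that for every fixed pair $(Z,Y)$ meeting the hypotheses --- $Z$ thin, $|Z| \le \cut$, and every $v \in Y$ having an important $v-T$ separator $Z_v \subseteq Z$ --- the events $\shadow \cap Z = \emptyset$ and $Y \subseteq \shadow$ hold simultaneously with probability at least $2^{-2^{\Oh(\cut)}}$. The experiment cannot be a naive ``delete a uniformly random subset'': a linear-sized random set has a huge reverse shadow and would essentially never avoid $Z$. Following Marx and Razgon, it is instead organized around \emph{important separators}, exploiting two facts. First, for any $X$ the number of important $X-T$ separators of size at most $\cut$ is bounded by $4^{\cut}$ (the standard important-separator bound), so the ``shape'' of the reverse shadow of any size-$\le\cut$ set is pinned down by a bounded-in-$\cut$ amount of combinatorial data. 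Second, $Z$ being \emph{thin} means each $v \in Z$ can still reach $T$ after deleting $Z \setminus \{v\}$, i.e.\ $Z$'s own vertices sit on the boundary rather than deep inside its reverse shadow. Combining these, the only information the experiment must ``guess correctly'' concerns how the seed behaves on a set of just $2^{\Oh(\cut)}$ vertices (morally, the vertices of $Z$ together with the few important separators that the analysis invokes); a correct guess forces $Z_v$ into the deleted side for every $v\in Y$ (so $Y\subseteq \shadow$) while leaving every $v\in Z$ with a preserved path to $T$ (so $\shadow\cap Z=\emptyset$), and such a guess has probability at least $2^{-2^{\Oh(\cut)}}$.

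The derandomization then follows a standard template. Since the outcome of the experiment is controlled by the behaviour of the seed on a set of size at most $q=2^{\Oh(\cut)}$, and the set of ``good'' behaviours on any such set is nonempty by the above estimate, it suffices to enumerate seeds according to an $(|V(G)|,q)$-universal family, i.e.\ a family $\mathcal{F}$ of subsets of $V(G)\setminus T$ such that for every $W\subseteq V(G)$ with $|W|\le q$ every subset of $W$ equals $F\cap W$ for some $F\in\mathcal{F}$. A standard splitter construction yields such $\mathcal{F}$ with $|\mathcal{F}| = 2^{\Oh(q)}\log|V(G)| = 2^{2^{\Oh(\cut)}}\log|V(G)|$, computable within the claimed time, and outputting $\shfam=\{\shadow(F):F\in\mathcal{F}\}$ finishes the proof.

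The step that demands real care is the probability bound, and within it the disjointness $\shadow\cap Z=\emptyset$: certifying that a given $v\in Z$ stays \emph{out} of $\shadow$ seems a priori to require the seed to dodge an entire $v-T$ witnessing path, which may be long and whose avoidance would be far too costly in probability. The resolution --- the technical heart inherited from \cite{marx:multicut,dir-mwc} --- is to push all such witnesses onto closest/important separators (in the spirit of Lemma~\ref{lem:push-mincut}) and to lean on thinness, so that only the $2^{\Oh(\cut)}$ ``boundary'' vertices enter the analysis and the long paths drop out.
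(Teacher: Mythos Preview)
The paper does not prove this statement at all: Theorem~\ref{marx:sampling} is imported verbatim from \cite{dir-mwc} (Theorem~4.1 and Section~4.3 there) and used as a black box, with no proof or even sketch given here. So there is nothing in the present paper to compare your proposal against.

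That said, your outline is a faithful high-level summary of how the result is actually established in \cite{marx:multicut,dir-mwc}: a randomized experiment whose success probability is governed by the behaviour on $2^{\Oh(\cut)}$ ``boundary'' vertices coming from the $4^{\cut}$ bound on important separators together with thinness, followed by derandomization via an $(n,2^{\Oh(\cut)})$-universal set family of size $2^{2^{\Oh(\cut)}}\log n$. As a proof \emph{plan} this is correct. As a proof it is still a sketch: you have not specified the random experiment (in \cite{dir-mwc} one enumerates, for each $v$, the at most $4^{\cut}$ important $v-T$ separators of size $\le\cut$, unions the reverse shadows behind each, and then samples each such ``shadow block'' independently), nor have you pinned down exactly which $2^{\Oh(\cut)}$-sized set the universal family must shatter or why a correct restriction to that set forces both $Y\subseteq\shadow$ and $Z\cap\shadow=\emptyset$. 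Your final paragraph correctly identifies where the work lies, but ``push all such witnesses onto closest/important separators \ldots\ so that only the $2^{\Oh(\cut)}$ boundary vertices enter the analysis'' is the statement of what must be shown, not its proof. If you intend this as a self-contained argument rather than a pointer to \cite{dir-mwc}, those details need to be filled in.
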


We use Theorem \ref{marx:sampling} to identify vertices
separated from all sources in a given \dagmulticut{} instance.

\begin{definition}[\textbf{source shadow}]
Let $(G,\terms,\cut)$ be a \dagmulticut{} instance  and $Z \subseteq V(G)$ be a subset of nonterminals in $G$.
We say that $v \in V(G)$ is in {\em{source shadow}} of $Z$ if $Z$ is a $\sources-v$ separator.
\end{definition}

\begin{lemma}[\textbf{derandomized random sampling for source shadows}]\label{lem:shadowless}
There is an algorithm that, given a \dagmulticut{} instance $(G,\terms,\cut)$, 
produces in time $\Ohstar(2^{2^{\Oh(\cut)}})$ a family $\shfam$ of size $2^{2^{\Oh(\cut)}} \log |V(G)|$ of subsets of nonterminals of $G$ such that
if $(G,\terms,\cut)$ is a YES instance and $Z$ is the lex-min solution to $(G,\terms,\cut)$,
then there exists $\shadow \in \shfam$ such that $\shadow \cap Z = \emptyset$ and all vertices of source shadows of $Z$ in $G$ are contained in $\shadow$.
\end{lemma}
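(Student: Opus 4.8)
The plan is to reduce the statement to a single application of Theorem~\ref{marx:sampling} in the reversed graph $\reverse{G}$. After the preprocessing we have $\inN_G(s_i)=\outN_G(t_i)=\emptyset$, so in $\reverse{G}$ no arc leaves a source and no arc enters a sink. Hence, for any nonterminal $v$ and nonterminal set $Z$, the vertex $v$ is in the source shadow of $Z$ (that is, $Z$ is a $\sources-v$ separator in $G$) if and only if $Z$ is a $v-T$ separator in $\reverse{G}$, where $T$ denotes the set of all terminals; the ``$v$ reaches a sink'' requirements in $\reverse{G}$ are vacuous. We run the algorithm of Theorem~\ref{marx:sampling} on $(\reverse{G},T,\cut)$; it produces, within the time and size bounds claimed in the statement, a family $\shfam$ of subsets of $V(\reverse{G})\setminus T=V(G)\setminus T$, i.e.\ of nonterminals of $G$. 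It remains to check that, with $Z$ the lex-min solution and $Y$ the set of all vertices in the source shadow of $Z$, the pair $(Z,Y)$ meets the two hypotheses of Theorem~\ref{marx:sampling}; the $\shadow\in\shfam$ it then yields, with $\shadow\cap Z=\emptyset$ and $Y\subseteq\shadow$, is precisely what the lemma asks for.

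The first hypothesis is that $Z$ is thin in $\reverse{G}$ with respect to $T$, i.e.\ that no $v\in Z$ lies in the reverse shadow of $Z\setminus\{v\}$; as sinks have in-degree $0$ in $\reverse{G}$, this amounts to: no $v\in Z$ is separated from $\sources$ by $Z\setminus\{v\}$ in $G$. Here acyclicity is used. If some $v\in Z$ were separated from $\sources$ by $Z\setminus\{v\}$, then any $s_it_i$-path in $G\setminus(Z\setminus\{v\})$ would meet $Z$ (since $Z$ is a multicut) and hence meet $Z$ only in $v$; its prefix from $s_i$ to $v$ would then be a path from $s_i$ to $v$ avoiding $Z\setminus\{v\}$, contradicting the separation. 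So $Z\setminus\{v\}$ would itself be a multicut, contradicting minimality of $|Z|$.

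The second hypothesis --- that every $v$ in the source shadow of $Z$ has an important $v-T$ separator $Z_v\subseteq Z$ in $\reverse{G}$ --- is the heart of the argument, and this is where lexicographic minimality of $Z$ (with respect to the fixed topological order $\toporder$, which favours vertices close to the sources) is used. Fix such a $v$, let $R$ be the set of vertices reaching $v$ in $G\setminus Z$, and set $\delta:=\inN_G(R)=\outN_{\reverse{G}}(R)$; then $\delta\subseteq Z$ and $\delta$ is a $v-T$ separator in $\reverse{G}$. If $\delta$ contains no important $v-T$ separator, take a minimal $v-T$ separator $\delta_0\subseteq\delta$ and the important $v-T$ separator $B$ dominating it --- behind $\delta_0$, of size at most $|\delta_0|$ and distinct from it (a standard property of important separators, cf.~\cite{dir-mwc}). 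Then $B$ still separates $v$ from $\sources$ but is pushed strictly towards $\sources$. The plan is to exchange, inside $Z$, the part separating $v$ from the sources for $B$, and to show --- using a pushing argument of the type in Lemma~\ref{lem:push-mincut} to bound the cardinality --- that the resulting nonterminal set is still a multicut of $G$, of size at most $|Z|$, yet strictly smaller in the lexicographical order, contradicting the choice of $Z$. Hence $Z$ must already contain an important $v-T$ separator, as needed.

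With both hypotheses in hand, Theorem~\ref{marx:sampling} applied to $(Z,Y)$ returns $\shadow\in\shfam$ with $\shadow\cap Z=\emptyset$ and $Y\subseteq\shadow$; translating back through the identification of source shadows in $G$ with reverse shadows (with respect to $T$) in $\reverse{G}$, this is exactly the assertion that $\shadow$ avoids $Z$ and contains every source-shadow vertex of $Z$. I expect the genuine obstacle to be the second hypothesis: turning the informal principle ``lex-minimality forbids pushing a cut towards the sources'' into a clean exchange that simultaneously preserves being a multicut and the cardinality while strictly decreasing the set in the lexicographical order.
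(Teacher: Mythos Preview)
Your overall strategy --- apply Theorem~\ref{marx:sampling} in $\reverse{G}$ and verify its two hypotheses for the pair $(Z,Y)$ --- is exactly the paper's. (The paper takes the terminal set to be $\sources$ rather than $T$; as you correctly observe, sinks are isolated in $\reverse{G}$, so this is immaterial.) Your thinness argument is correct, if slightly roundabout; the remark that ``acyclicity is used'' there is not actually needed.

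The genuine gap is precisely where you flag it yourself: the second hypothesis is only announced as a plan, not proved. Your set $\delta=\inN_G(R)$ in fact coincides with the set the paper calls $B$, namely $\{w\in Z:\text{there is a }wu\text{-path in }G\setminus(Z\setminus\{w\})\}$, and the paper shows that this $B$ is \emph{itself} an important $u$--$\sources$ separator in $\reverse{G}$; there is no need to pass to a proper minimal subset~$\delta_0$. Two short steps suffice. First, $B$ is minimal: if $w\in B$ were redundant then $B\setminus\{w\}$ would be a $u$--$\sources$ separator in $\reverse{G}$, hence (using the $wu$-path that witnesses $w\in B$) also a $w$--$\sources$ separator, so $Z\setminus\{w\}$ would be a multicut, contradicting $|Z|$-minimality. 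Second, $B$ is important: given a minimal $B'$ behind $B$ with $|B'|\le|B|$ and $B'\neq B$, set $Z':=(Z\setminus B)\cup B'$; any $s_it_i$-path $P$ in $G\setminus Z'$ would meet some $w\in B\setminus B'$, the $wu$-path witnessing $w\in B$ has interior avoiding $Z\supseteq B$ and hence (by ``$B'$ behind $B$'') also avoiding $B'$, so the prefix of $P$ up to $w$ extends to an $s_iu$-path in $G\setminus Z'$, contradicting that $B'$ separates $u$ from $\sources$. Thus $Z'$ is a multicut with $|Z'|\le|Z|$ and strictly smaller in the lexicographic order, contradicting the choice of $Z$.

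Your pointer to Lemma~\ref{lem:push-mincut} is a red herring here: that lemma is about increasing a mincut after bypassing a vertex of the closest mincut and is used elsewhere in the algorithm. The cardinality bound $|Z'|\le|Z|$ is immediate from $|B'|\le|B|$; the actual content is the path-rerouting verification that $Z'$ remains a multicut, not a ``pushing'' argument.
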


\begin{proof}
Let $Y$ be the set of vertices of source shadows of $Z$. To proof the lemma it is sufficient to apply Theorem \ref{marx:sampling} for the graph $\reverse{G}$
with terminals $\sources$ and budget $\cut$. Thus, we need to prove that the pair $(Z,Y)$ satisfies properties given in Theorem \ref{marx:sampling}.

First, assume that $Z$ is not thin in $\reverse{G}$ w.r.t. $\sources$. Let $v \in Z$ be a witness: $Z \setminus \{v\}$ is a $v-\sources$ separator in $\reverse{G}$.
We infer that $\src(G \setminus (Z \setminus \{v\}), v) = \emptyset$ and $Z \setminus \{v\}$ is a multicut in $(G,\terms)$, a contradiction to the choice of $Z$.

Second, take an arbitrary vertex $u \in Y$, that is, $u$ is in the source shadow of $Z$ w.r.t. $\sources$ in $G$.
Let $B \subseteq Z$ be the set of those vertices $v\in Z$, for which there exists an $vu$-path in $G \setminus (Z \setminus \{v\})$.
Clearly, $B$ is a $u-\sources$ separator in $\reverse{G}$. We claim that it is an important one.

If $B$ is not a minimal $u-\sources$ separator in $\reverse{G}$, say $v \in B$ and $B \setminus \{v\}$ is a $u-\sources$ separator in $\reverse{G}$,
then $B \setminus \{v\}$ is a $v-\sources$ separator in $\reverse{G}$ (as there is a $vu$-path in $G \setminus (Z \setminus \{v\})$)
and $Z \setminus \{v\}$ is a multicut in $(G,\terms)$, a contradiction to the choice of $Z$.

Assume then that there exists a $u-\sources$ separator $B'$ in $\reverse{G}$ that is {\em{behind}} $B$ and $|B'| \leq |B|$, $B' \neq B$. Moreover we may assume that $B'$ is minimal;
this gives us an existence of a $Bv$-path in $\reverse{G}$ for any $v \in B'$.
We claim that $Z' = (Z \setminus B) \cup B'$ is a multicut in $(G,\terms)$. This would lead to a contradiction with the choice of $Z$,
as $|Z'| \leq |Z|$ and $Z'$ is smaller in the lexicographical order than $Z$.
Assume then that $Z'$ is not a multicut in $(G,\terms)$, that is, there is an $s_it_i$-path $P$ in $G \setminus Z'$ for some $1 \leq i \leq \nterms$.
As $Z$ is a multicut in $(G,\terms)$, $P$ contains at least one vertex $v \in B \setminus B' = Z \setminus Z'$. By the choice of $B$ and the fact
that $B'$ is behind $B$, we infer that there exists a $vu$-path in $G \setminus Z'$. As $P$ contains $v$, there exists a $s_iu$-path in $G \setminus Z'$,
a contradiction to fact that $B'$ is an $u-\sources$ separator in $\reverse{G}$. This concludes the proof of the lemma.
\end{proof}

\section{The algorithm}\label{sec:alg}

\subsection{Potential function and simple operations}\label{sec:alg:init}

Our algorithm consists of a number of branching steps. To measure the progress of the algorithm, we introduce the following potential function.
\begin{definition}[\textbf{potential}]
Given a \dagmulticut{} instance $\inst = (G,\terms,\cut)$, we define its potential $\poten(\inst)$ as
$\poten(\inst) = (\nterms+1)\cut - \sum_{i=1}^\nterms \textrm{cut}_G(s_i,t_i)$.
\end{definition}
Observe, that if $\inst=(G,\terms,\cut)$ is a \dagmulticut{} instance, in which $\textrm{cut}(s_i,t_i)>p$ for some $(s_i,t_i)\in \terms$, then we can immediately conclude that $\inst$ is a NO instance. Therefore, w.l.o.g. we can henceforth assume that $\textrm{cut}(s_i,t_i)\leq p$ for all $(s_i,t_i)\in \terms$ in all the appearing instances of \dagmulticut.

In many places we perform the following simple operations on \dagmulticut{} instances $(G,\terms,\cut)$. We formalize their properties in subsequent lemmata.
\begin{definition}[\textbf{killing a vertex}]
For a \dagmulticut{} instance $(G,\terms,\cut)$ and a nonterminal vertex $v$ of $G$,
 by {\em{killing the vertex $v$}} we mean the following operation: we delete the vertex $v$ and decrease $\cut$ by one.
\end{definition}
\begin{definition}[\textbf{bypassing a vertex}]
For a \dagmulticut{} instance $(G,\terms,\cut)$ and a nonterminal vertex $v$ of $G$,
 by {\em{bypassing the vertex $v$}} we mean the following operation: we delete the vertex $v$ and for each in-neighbour $v^-$ of $v$ and each
 out-neighbour $v^+$ of $v$ we add an arc $(v^-,v^+)$.
\end{definition}
\begin{lemma}\label{lem:kill}
Let $\inst'=(G',\terms,\cut-1)$ be obtained from \dagmulticut{} instance $\inst=(G,\terms,\cut)$ by killing a vertex $v$.
Then $\inst'$ is a YES instance if and only if $\inst$ is a YES instance that admits a solution that contains $v$.
Moreover, $\poten(\inst') < \poten(\inst)$.
\end{lemma}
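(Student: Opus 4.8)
The plan is to verify the two directions of the equivalence separately and then handle the potential inequality. For the backward direction, suppose $\inst$ is a YES instance admitting a solution $Z$ with $v \in Z$. Since $v$ is a nonterminal of $G$, the set $Z' := Z \setminus \{v\}$ is a set of nonterminals of $G' = G \setminus \{v\}$ with $|Z'| \leq \cut - 1$. I would argue that $Z'$ is a multicut in $\inst'$: any $s_i t_i$-path in $G' \setminus Z' = G \setminus Z$ would be an $s_i t_i$-path in $G \setminus Z$, contradicting that $Z$ is a multicut in $\inst$. Hence $\inst'$ is a YES instance.

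For the forward direction, suppose $\inst' = (G', \terms, \cut - 1)$ is a YES instance with solution $Z'$, $|Z'| \leq \cut - 1$. Set $Z := Z' \cup \{v\}$; then $|Z| \leq \cut$ and $Z$ contains no terminal (here we use that $v$ is a nonterminal). I claim $Z$ is a multicut in $\inst$. Take any $s_i t_i$-path $P$ in $G$. If $P$ passes through $v$, it is hit by $Z$ since $v \in Z$. If $P$ avoids $v$, then $P$ is also a path in $G' = G \setminus \{v\}$, so $Z'$ intersects $P$ (as $Z'$ is a multicut in $\inst'$), and hence $Z$ intersects $P$. Thus $Z$ is a solution to $\inst$ containing $v$, so in particular $\inst$ is a YES instance admitting a solution containing $v$.

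For the potential inequality, recall $\poten(\inst) = (\nterms+1)\cut - \sum_{i=1}^{\nterms} \textrm{cut}_G(s_i,t_i)$ and $\poten(\inst') = (\nterms+1)(\cut-1) - \sum_{i=1}^{\nterms} \textrm{cut}_{G'}(s_i,t_i)$. Deleting a vertex cannot increase any minimum separator size, so $\textrm{cut}_{G'}(s_i,t_i) \leq \textrm{cut}_G(s_i,t_i)$ for each $i$; moreover, since $G' = G \setminus \{v\}$ with no arcs added, this is immediate (any $s_i t_i$-separator in $G$ not using $v$ remains one in $G'$, and a minimum one can be assumed to avoid $v$ or we simply drop $v$). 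Therefore
\[
\poten(\inst) - \poten(\inst') = (\nterms+1) + \sum_{i=1}^{\nterms}\bigl(\textrm{cut}_{G'}(s_i,t_i) - \textrm{cut}_G(s_i,t_i)\bigr) \geq (\nterms+1) - \sum_{i=1}^{\nterms}\textrm{cut}_G(s_i,t_i) + \sum_{i=1}^{\nterms}\textrm{cut}_{G'}(s_i,t_i).
\]
Since each $\textrm{cut}_{G'}(s_i,t_i) \leq \textrm{cut}_G(s_i,t_i)$, the summed difference is at most $0$, but crucially it is also bounded below: each term drops by at most the number of $s_it_i$-paths through $v$, which is finite, yet the clean way to finish is to note $\textrm{cut}_{G'}(s_i,t_i) \geq \textrm{cut}_G(s_i,t_i) - 1$ is not needed — we only need $\poten(\inst) - \poten(\inst') \geq (\nterms+1) - \nterms \cdot (\text{max drop})$. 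Actually the cleanest argument: $\textrm{cut}_{G'}(s_i,t_i) \le \textrm{cut}_G(s_i, t_i)$ gives $\poten(\inst') \le \poten(\inst) - (\nterms+1) + \sum_i (\textrm{cut}_G(s_i,t_i) - \textrm{cut}_{G'}(s_i,t_i))$, and I would instead directly bound each $\textrm{cut}_G(s_i,t_i) - \textrm{cut}_{G'}(s_i,t_i) \le$ well—hmm, this can be large. The right statement to prove is $\poten(\inst') < \poten(\inst)$, equivalently $(\nterms+1) > \sum_i(\textrm{cut}_G(s_i,t_i) - \textrm{cut}_{G'}(s_i,t_i))$; this need NOT hold term-by-term. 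The main obstacle is thus the potential bound: I expect one must use the structural fact (established elsewhere, e.g. via Lemma \ref{lem:push-mincut} reasoning or a direct argument) that deleting a single vertex decreases $\textrm{cut}_G(s_i,t_i)$ by at most $1$ for each $i$ — giving $\sum_i(\textrm{cut}_G - \textrm{cut}_{G'}) \le \nterms < \nterms+1$. With that fact, $\poten(\inst) - \poten(\inst') = (\nterms+1) - \sum_i(\textrm{cut}_G(s_i,t_i)-\textrm{cut}_{G'}(s_i,t_i)) \ge (\nterms+1) - \nterms = 1 > 0$, completing the proof.
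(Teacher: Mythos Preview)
Your proof is correct and matches the paper's approach: the equivalence argument is essentially identical, and the potential bound in the paper also rests on the single fact $\textrm{cut}_{G'}(s_i,t_i) \geq \textrm{cut}_G(s_i,t_i) - 1$ for each $i$. This fact is elementary (if $B'$ is any $s_i$--$t_i$ separator in $G'$ then $B' \cup \{v\}$ is an $s_i$--$t_i$ separator in $G$), so there is no need to invoke Lemma~\ref{lem:push-mincut}; stating it directly, rather than arriving at it through the false starts in your write-up, would streamline the argument considerably.
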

\begin{proof}
Let $Z$ be a multicut in $\inst$ that contains $v$. As $G \setminus Z = G' \setminus (Z \setminus \{v\})$,
$Z \setminus \{v\}$ is a multicut in $\inst'$ of size $|Z|-1$.
In the other direction, if $Z$ is a multicut in $\inst'$, then $G' \setminus Z = G \setminus (Z \cup \{v\})$
and $Z \cup \{v\}$ is a multicut in $\inst$ of size $|Z|+1$.
To see that the potential strictly decreases, note that $\textrm{cut}_{G'}(s_i,t_i) \geq \textrm{cut}_G(s_i,t_i) - 1$ for all $1 \leq i \leq \nterms$.
\end{proof}
\begin{lemma}\label{lem:bypass}
Let $\inst'=(G',\terms,\cut)$ be obtained from \dagmulticut{} instance $\inst=(G,\terms,\cut)$ by bypassing a vertex $v$.
Then:
\begin{enumerate}
\item any multicut in $\inst'$ is a multicut in $\inst$ as well;
\item any multicut in $\inst$ that does not contain $v$ is a multicut in $\inst'$ as well;
\item $\src(G,u) = \src(G',u)$ for any $u \in V(G') = V(G) \setminus \{v\}$;
\item $\poten(\inst') \leq \poten(\inst)$.
\end{enumerate}
\end{lemma}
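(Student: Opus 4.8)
The plan is to verify the four assertions essentially by tracing what paths in $G$ and $G'$ look like, since bypassing only replaces length-two walks through $v$ by direct arcs. The key structural observation, which I would state first, is the following \emph{path correspondence}: for any two vertices $a,b \in V(G')$, there is an $ab$-path in $G'$ if and only if there is an $ab$-path in $G$; moreover an $ab$-path in $G'$ using the new arc $(v^-,v^+)$ lifts to an $ab$-walk (hence a path, after shortcutting) in $G$ through $v$, and conversely an $ab$-path in $G$ through $v$ projects to an $ab$-walk in $G'$ using $(v^-,v^+)$. Note that $G'$ is still a DAG, since any cycle created would have to use a new arc $(v^-,v^+)$, but then $v^- \toporder v^+$ forces no return path, so topological order is preserved on $V(G)\setminus\{v\}$.

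For part~(3): $s_i \in \src(G,u)$ means there is an $s_iu$-path in $G$; if it avoids $v$ it survives verbatim in $G'$, and if it passes through $v$ it projects to an $s_iu$-walk in $G'$ via the arc $(v^-,v^+)$, giving $s_i \in \src(G',u)$. Conversely an $s_iu$-path in $G'$ either avoids all new arcs (and lies in $G$) or uses some $(v^-,v^+)$, which lifts through $v$; either way $s_i \in \src(G,u)$. For part~(1): if $Z$ is a multicut in $\inst'$ then $v \notin Z$ (as $v \notin V(G')$), and any $s_it_i$-path $P$ in $G \setminus Z$ projects, as above, to an $s_it_i$-walk in $G' \setminus Z$ (the projection is well-defined since $v \notin Z$, and avoiding $Z$ is preserved because $v^-, v^+ \notin Z$ whenever the internal vertices of $P$ avoid $Z$ — here one uses that $v^-$ and $v^+$ are internal on $P$ or equal to $s_i,t_i$, and $Z$ contains no terminals), contradicting that $Z$ is a multicut in $\inst'$. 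For part~(2): if $Z$ is a multicut in $\inst$ with $v \notin Z$, take an $s_it_i$-path $P'$ in $G' \setminus Z$; lifting each new arc $(v^-,v^+)$ through $v$ yields an $s_it_i$-walk in $G$, whose vertex set is $V(P') \cup \{v\}$, all of which avoids $Z$ since $v \notin Z$; shortcutting to a path contradicts that $Z$ is a multicut in $\inst$.

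Part~(4) is the one requiring a genuine argument. By definition $\poten(\inst') - \poten(\inst) = \sum_{i=1}^{\nterms} \bigl(\textrm{cut}_G(s_i,t_i) - \textrm{cut}_{G'}(s_i,t_i)\bigr)$, so it suffices to show $\textrm{cut}_{G'}(s_i,t_i) \geq \textrm{cut}_G(s_i,t_i)$ for each $i$. The main obstacle is that bypassing can in principle \emph{decrease} a cut: in a general digraph, replacing $v$ by shortcut arcs merges the two ``sides'' of $v$ and could lower the connectivity. I expect the resolution to use that $G$ is a DAG together with the assumption (in force throughout the paper) that $\inN_G(s_i) = \outN_G(t_i) = \emptyset$, so that $s_i$ and $t_i$ are never among the $v^-, v^+$. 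Concretely, given a minimum $s_i$--$t_i$ separator $Z'$ in $G'$, I would show $Z' \cup \{v\}$ — or rather $Z'$ together with a suitable single vertex among $\{v^-\} \cup \outN_G(v)$ absorbed into $Z'$ — is an $s_i$--$t_i$ separator in $G$ of size at most $|Z'|$; the cleanest route is the contrapositive via the path correspondence of part~(1): if $Z'$ is an $s_i$--$t_i$ separator in $G'$ with $v \notin Z'$, then $Z'$ is also an $s_i$--$t_i$ separator in $G$ (by the same projection argument as in part~(1), which only needs $v \notin Z'$), so every minimum separator in $G'$ avoiding $v$ already works in $G$; and since $v \notin V(G')$ no minimum separator in $G'$ can contain $v$, we get $\textrm{cut}_G(s_i,t_i) \leq \textrm{cut}_{G'}(s_i,t_i)$ directly. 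Thus part~(4) in fact follows from the same path-projection lemma as part~(1), with no extra case analysis, and the ``obstacle'' dissolves once one observes that $v$ is simply absent from $G'$.
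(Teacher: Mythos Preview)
Your argument is correct and, for parts (1)--(3), proceeds via the same path-correspondence idea as the paper: paths in $G$ with endpoints distinct from $v$ project to paths in $G'$ by deleting $v$, and paths in $G'$ lift to paths in $G$ by inserting $v$ at any new arc (with acyclicity guaranteeing $v$ is inserted at most once, so no shortcutting is actually needed).

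For part (4) you take a genuinely different route. The paper argues on the \emph{flow} side of Menger: a family of $s_it_i$-paths in $G$ with pairwise disjoint interiors projects to a family of the same size in $G'$, hence $\mathrm{cut}_{G'}(s_i,t_i) \geq \mathrm{cut}_G(s_i,t_i)$. You argue on the \emph{cut} side: a minimum $s_i$--$t_i$ separator $Z'$ in $G'$ cannot contain $v$ (since $v \notin V(G')$), and then the very projection used in part (1) shows that $Z'$ already separates $s_i$ from $t_i$ in $G$, giving $\mathrm{cut}_G(s_i,t_i) \leq |Z'| = \mathrm{cut}_{G'}(s_i,t_i)$. Both arguments are equally short; yours has the minor elegance of literally reusing part (1), while the paper's makes the invocation of Menger explicit. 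Your preliminary worry that bypassing ``can in principle decrease a cut'' and the detour through $Z' \cup \{v\}$ are unnecessary and can be dropped from a clean write-up --- as you yourself observe at the end, the obstacle dissolves immediately.
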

\begin{proof}
The lemma follows from the following observations on relations between paths in $G$ and $G'$.
For a path $P$ in $G$ whose first and last point is different than $v$, we define $P_{G'}$
as $P$ with a possible occurrence of $v$ removed. By the definition of $G'$, $P_{G'}$ is a path in $G'$.
In the other direction, for a path $P$ in $G'$, we define $P_G$ as a path obtained from $P$
by inserting the vertex $v$ between any consecutive vertices $v_i,v_{i+1}$ for which $(v_i,v_{i+1}) \in E(G') \setminus E(G)$.
Since $G$ and $G'$ are acyclic, the vertex $v$ is inserted at most once. By the construction of $G'$, $P_G$ is a path in $G$.

Now, for a multicut $Z$ in $\inst'$ and an arbitrary $s_it_i$-path $P$ in $G$, the path $P_{G'}$ is intersected
by $Z$, thus $P$ is intersected by $Z$ as well and $Z$ is a multicut in $\inst$.
For a multicut $Z$ in $\inst$ with $v \notin Z$, and an arbitrary $s_it_i$-path $P$ in $G'$, the path $P_G$
is intersected by $Z$. As $v \notin Z$, we infer that $P$ is intersected by $Z$ as well and $Z$ is a multicut in $\inst'$.

To prove the third claim, note that for any $u \in V(G')$, any $s_iu$-path $P$ in $G$ yields a $s_iu$-path $P_{G'}$ in $G'$ and vice versa.
Finally, the last claim follows from the fact that any family $\mathcal{P}$ of $s_it_i$-paths in $G$ with pairwise disjoint sets of internal vertices
can be transformed into a similar family $\mathcal{P}' = \{P_{G'}: P \in \mathcal{P}\}$ in $G'$. Thus $\mathrm{cut}_{G'}(s_i,t_i) \geq \mathrm{cut}_G(s_i,t_i)$ for $1 \leq i \leq \nterms$.
\end{proof}

We note that bypassing a vertex corresponds to the torso operation of Chitnis et al. \cite{dir-mwc} and, if we perform a series of bypass operations, the result does not depend on their order.

\begin{lemma}\label{lem:bypass-torso}
Let $\inst=(G,\terms,\cut)$ and $X \subseteq V(G)$ be a subset of nonterminals of $G$.
Let $\inst'=(G',\terms,\cut)$ be obtained from $\inst$ by bypassing all vertices of $X$ in an arbitrary order.
Then $(u,v) \in E(G')$ for $u,v \in V(G') = V(G) \setminus X$ if and only if there exists a $uv$-path in $G$ with internal vertices from $X$ (possibly consisting only of an arc $(u,v)$).
In particular, $\inst'$ does not depend on the order in which the vertices of $X$ are bypassed.
\end{lemma}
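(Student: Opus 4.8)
The plan is to prove, by induction on the number of bypassed vertices, a statement that is slightly more refined than the lemma: fix an arbitrary order $w_1,w_2,\ldots,w_k$ of the vertices of $X$, set $G_0=G$, and let $G_i$ be obtained from $G_{i-1}$ by bypassing $w_i$. I claim that for every $0\leq i\leq k$ and all $u,v\in V(G)\setminus\{w_1,\ldots,w_i\}$, we have $(u,v)\in E(G_i)$ if and only if there exists a $uv$-path in $G$ whose internal vertices all lie in $\{w_1,\ldots,w_i\}$ (allowing the degenerate path consisting only of the arc $(u,v)$). Taking $i=k$ gives exactly the claimed characterization of $E(G')$, and since that characterization refers only to $G$ and $X$, the graph $G'$ (and hence $\inst'$) does not depend on the chosen order; recall also that bypassing only deletes the bypassed vertex and never modifies $\terms$, $\cut$, or the terminals, so $V(G')=V(G)\setminus X$ in all cases.

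The base case $i=0$ is immediate, since $G_0=G$ and a $uv$-path with internal vertices in $\emptyset$ is precisely the arc $(u,v)$. For the inductive step, assume the statement for $i-1$ and consider $G_i$, obtained by bypassing $w_i$ in $G_{i-1}$. Fix $u,v\in V(G)\setminus\{w_1,\ldots,w_i\}$; in particular $u,v\notin\{w_1,\ldots,w_{i-1},w_i\}$. By the definition of bypassing (and since discarding parallel copies of an arc does not change the edge set as a set), $(u,v)\in E(G_i)$ if and only if either $(u,v)\in E(G_{i-1})$, or both $(u,w_i)\in E(G_{i-1})$ and $(w_i,v)\in E(G_{i-1})$. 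Applying the induction hypothesis to each of these three arcs of $G_{i-1}$ (note $w_i\notin\{w_1,\ldots,w_{i-1}\}$, so it is a legal endpoint), this is equivalent to: either there is a $uv$-path in $G$ with internal vertices in $\{w_1,\ldots,w_{i-1}\}$, or there are a $uw_i$-path $P_1$ and a $w_iv$-path $P_2$ in $G$, each with internal vertices in $\{w_1,\ldots,w_{i-1}\}$.

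It remains to see that the latter disjunct is equivalent to the existence of a $uv$-path in $G$ with internal vertices in $\{w_1,\ldots,w_i\}$ that actually uses $w_i$. The nontrivial direction is concatenating $P_1$ and $P_2$: a priori their union need not be a simple path. Here acyclicity of $G$ is the key point, and this is the only delicate step. If some vertex $x\neq w_i$ lay on both $P_1$ and $P_2$, then $x$ would precede $w_i$ on $P_1$ and follow $w_i$ on $P_2$ in any topological order of $G$, a contradiction; hence $P_1$ and $P_2$ meet only in $w_i$, and their concatenation $P$ is a simple $uv$-path. Its internal vertices are those of $P_1$ and $P_2$ together with $w_i$, all in $\{w_1,\ldots,w_i\}$; moreover $u\notin V(P_2)$ and $v\notin V(P_1)$ since $u,v\notin\{w_1,\ldots,w_i\}$, so $u,v$ occur on $P$ only as its endpoints. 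Conversely, given a $uv$-path $P$ in $G$ with internal vertices in $\{w_1,\ldots,w_i\}$: if $w_i\notin V(P)$, its internal vertices lie in $\{w_1,\ldots,w_{i-1}\}$ and we are in the first disjunct; otherwise $w_i$, being distinct from $u$ and $v$, is an internal vertex, and splitting $P$ at $w_i$ yields the required $P_1$ and $P_2$, whose internal vertices lie in $\{w_1,\ldots,w_{i-1}\}$. This establishes the equivalence and completes the induction, and with it the lemma.
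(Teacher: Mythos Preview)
Your proof is correct and follows essentially the same approach as the paper: both argue by induction on the number of bypassed vertices, use the definition of bypassing to reduce an arc of the new graph to either an arc or a length-two path through the last bypassed vertex in the previous graph, and rely on acyclicity to concatenate the two resulting paths in $G$ into a single simple path. Your version is slightly more explicit (you state the intermediate claim for each prefix $\{w_1,\ldots,w_i\}$ and spell out why the concatenation is simple), but the structure and the key steps are the same.
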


\begin{proof}
We perform induction with respect to the size of the set $X$. For $X = \emptyset$ the lemma is trivial.

Let $\inst''=(G'',\terms,\cut)$ be an instance obtained by bypassing all vertices of $X \setminus \{w\}$ in $\inst$ in an arbitrary order, for some $w \in X$.
Take $u,v \in V(G') = V(G) \setminus X$.
Assume first that $(u,v) \in E(G')$. By the definition of bypassing, $(u,v) \in E(G'')$ or $(u,w),(w,v) \in E(G'')$.
In the first case there exists a $uv$-path in $G$ with internal vertices from $X \setminus \{w\}$ by the induction hypothesis.
In the second case, by the induction hypothesis, there exist a $uw$-path and $wv$-path in $G$, both with internal vertices from $X \setminus \{w\}$; their concatenation
is the desired $uv$-path (recall that $G$ is acyclic).

In the other direction, let $P$ be a $uv$-path in $G$ with internal vertices in $X$. If $w$ does not lie on $P$, by the induction hypothesis
$(u,v) \in E(G'')$. Otherwise, $P$ splits into a $uw$-path and a $wv$-path, both with internal vertices in $X \setminus \{w\}$.
By the induction hypothesis $(u,w),(w,v) \in E(G'')$. By the definition of bypassing, $(u,v) \in E(G')$ and the lemma is proven.
\end{proof}

\subsection{Degree reduction}\label{ss:alg:deg-red}

In this section we introduce the second --- apart from the source shadow reduction in Lemma \ref{lem:shadowless} --- main tool used in the algorithm.
In an instance $(G,\terms,\cut)$, let $B_i$ be the $s_i-t_i$ mincut closest to $s_i$ and let $Z$ be a solution.
If we know that a $vt_i$-path survives in $G \setminus Z$ for some $v \in B_i$, we may add an arc $(v,t_i)$ and then bypass the vertex $v$,
strictly increasing the value $\textrm{cut}_G(s_i,t_i)$ (and thus decreasing the potential) by Lemma~\ref{lem:push-mincut}.
Therefore, we can branch: we either guess the pair $(i,v)$, or guess that none such exist;
in the latter branch we do not decrease potential
but instead we may modify the set of arcs incident to the sources to get some structure,
as formalized in the following definition.
\begin{definition}[\textbf{degree-reduced graph}]\label{def:deg-red}
For a \dagmulticut{} instance $(G,\terms,\cut)$ the {\em{degree-reduced}} graph $G^\ast$ is a graph constructed as follows.
For $1 \leq i \leq \nterms$, let $B_i$ be the $s_i-t_i$ mincut closest to $s_i$. We start with $V(G^\ast) = V(G)$, $E(G^\ast) = E(G \setminus \sources)$ and then,
for each $1 \leq i \leq \nterms$, we add an arc $(s_i,v)$ for all $v \in B_i$ and for all $v \in \bigcup_{1 \leq i' \leq \nterms} B_{i'}$ for which 
$s_i \in \src(G,v)$ but $v$ is not reachable from $B_i$ in $G$.
\end{definition}

Let us now estabilish some properties of the degree-reduced graph.

\begin{lemma}[\textbf{properties of the degree-reduced graph}]\label{lem:deg-red-prop}
For any \dagmulticut{} instance $\inst = (G,\terms,\cut)$ and 
the degree-reduced graph $G^\ast$ of $\inst$, the following holds:
\begin{enumerate}
\item $|\outN_{G^\ast}(\sources)| \leq \nterms\cut$.\label{deg-red:deg}
\item for each $1 \leq i \leq \nterms$, $B_i$ is the $s_i-t_i$ mincut closest to $s_i$ in $G^\ast$.\label{deg-red:cut-hold}
\item $\poten(\inst') = \poten(\inst)$, where $\inst' = (G^\ast,\terms,\cut)$.\label{deg-red:poten}
\item $Z \subseteq V(G)$ is a multicut in $(G^\ast,\terms)$ if and only if
$Z$ is a multicut in $(G,\terms)$ satisfying the following property:
for each $1 \leq i \leq \nterms$, for each $v \in B_i$, the vertex $v$ is either in $Z$ or $Z$ is an $v-t_i$ separator;
in particular, $\inst'$ is a YES instance if and only if $\inst$ is a YES instance that admits a solution satisfying the above property.\label{deg-red:iff}
\item for each $v \in V(G)$ we have $\src(G^\ast,v) \subseteq \src(G,v)$;
moreover, if $(s_i,v)$ is an arc in $G^\ast$ for some $1 \leq i \leq \nterms$ then $\src(G^\ast,v) = \src(G,v)$.\label{deg-red:keep-src}
\end{enumerate}
\end{lemma}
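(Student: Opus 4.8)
The plan is to first describe the walks of $G^\ast$ and then read off all five items. Observe that $E(G\setminus\sources)\subseteq E(G^\ast)$, and that a source has in-degree $0$ in both $G$ and $G^\ast$, so a source can occur on a walk only as its first vertex. Hence a walk of $G$ that does not start in $\sources$ lies entirely in $G\setminus\sources$ and is therefore also a walk of $G^\ast$, while every walk of $G^\ast$ either lies in $G\setminus\sources$ or starts with an arc $(s_i,w)$, $w\in\outN_{G^\ast}(s_i)$, followed by a walk of $G\setminus\sources$. I would also record three facts that follow directly from Definition~\ref{def:deg-red}: (i) $\outN_{G^\ast}(\sources)\subseteq\bigcup_{i'}B_{i'}$; (ii) for each $b\in B_i$ the arc $(s_i,b)$ is present in $G^\ast$, and $b$ is reachable from $s_i$ in $G$ since $B_i$, being an important separator, is a minimal $s_i-t_i$ separator; (iii) if $w\in\outN_{G^\ast}(s_i)\setminus B_i$, then $s_i\in\src(G,w)$ and $w$ is not reachable from $B_i$ in $G$, so every $s_iw$-path of $G$ avoids $B_i$ and hence $w$ is reachable from $s_i$ in $G\setminus B_i$.

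Items~\ref{deg-red:deg}, \ref{deg-red:keep-src} and~\ref{deg-red:poten} then follow quickly. For~\ref{deg-red:deg}, $\outN_{G^\ast}(\sources)$ has size at most $\sum_{i'}|B_{i'}|=\sum_{i'}\textrm{cut}_G(s_{i'},t_{i'})\le\nterms\cut$ by (i). For~\ref{deg-red:keep-src}: if $v$ is reachable from $s_i$ in $G^\ast$, the walk description gives $w\in\outN_{G^\ast}(s_i)$ with $v$ reachable from $w$ in $G$, and $w$ is reachable from $s_i$ in $G$ by (ii)--(iii), so $s_i\in\src(G,v)$; conversely, if $(s_i,v)\in E(G^\ast)$ (whence $v\in\bigcup_{i'}B_{i'}$) and $s_j\in\src(G,v)$, then either $v$ is not reachable from $B_j$ in $G$ and $(s_j,v)\in E(G^\ast)$, or some $b\in B_j$ reaches $v$ in $G$ and, prepending $(s_j,b)\in E(G^\ast)$ to that path, $v$ is reachable from $s_j$ in $G^\ast$; either way $s_j\in\src(G^\ast,v)$. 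Item~\ref{deg-red:poten} is then immediate from~\ref{deg-red:cut-hold}, since $\poten$ depends on the graph only through the numbers $\textrm{cut}_G(s_i,t_i)$.

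For Item~\ref{deg-red:cut-hold} I would argue in three steps. (a) $B_i$ is an $s_i-t_i$ separator in $G^\ast$: an $s_it_i$-path of $G^\ast$ reads $(s_i,w,\dots,t_i)$ with $Q:=(w,\dots,t_i)$ a path of $G$; if $w\in B_i$ we are done, and otherwise $w$ is reachable from $s_i$ in $G\setminus B_i$ by (iii), so $Q$ must meet $B_i$ --- else concatenating an $s_iw$-path inside $G\setminus B_i$ with $Q$ would reach $t_i$ from $s_i$ in $G\setminus B_i$. (b) $\textrm{cut}_{G^\ast}(s_i,t_i)=|B_i|$: ``$\le$'' is clear from (a); for ``$\ge$'', take $|B_i|$ $s_it_i$-paths of $G$ with pairwise disjoint interiors, each meeting $B_i$ in exactly one vertex (possible since $B_i$ is a separator of size $\textrm{cut}_G(s_i,t_i)=|B_i|$), and in each such path replace the part from $s_i$ to its vertex $b$ of $B_i$ by the arc $(s_i,b)$, keeping the part from $b$ to $t_i$, which is a path of $G\setminus\sources\subseteq G^\ast$; the resulting $|B_i|$ paths of $G^\ast$ still have pairwise disjoint interiors. (c) $B_i$ is the $s_i-t_i$ mincut closest to $s_i$ in $G^\ast$, i.e.\ the unique minimum-size important $t_i-s_i$ separator of $G^\ast$ with all arcs reversed; uniqueness of such a separator is known (\cite{dir-mwc}), and $B_i$ is minimal since it has minimum size, so it suffices to exclude an $s_i-t_i$ separator $B'$ in $G^\ast$ with $|B'|\le|B_i|$, $B'\ne B_i$, such that every vertex which reaches $t_i$ in $G^\ast\setminus B_i$ also reaches $t_i$ in $G^\ast\setminus B'$. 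Such a $B'$ has size exactly $|B_i|$ and, being minimal and distinct from $B_i$ (so the two ``can reach $t_i$'' sets do not coincide), leaves some vertex $u$ that reaches $t_i$ in $G^\ast\setminus B'$ but not in $G^\ast\setminus B_i$; any $ut_i$-path of $G^\ast\setminus B'$ then meets $B_i$ at some $b\notin B'$, the suffix from $b$ is a $bt_i$-path of $G^\ast\setminus B'$, and since $(s_i,b)\in E(G^\ast)$ with $s_i,b\notin B'$ this gives an $s_it_i$-path in $G^\ast\setminus B'$, contradicting that $B'$ separates $s_i$ from $t_i$. Step (c) is the one genuinely delicate point, and it is precisely where the added arcs $(s_i,b)$, $b\in B_i$, are needed: they make it impossible to push a cut ``past'' $B_i$ towards $s_i$.

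Finally, Item~\ref{deg-red:iff}. Suppose $Z$ is a multicut of $(G,\terms)$ with the stated property; if $G^\ast\setminus Z$ contained an $s_it_i$-path $(s_i,w,\dots,t_i)$, its tail $Q=(w,\dots,t_i)$ would be a path of $G\setminus Z$ meeting $B_i$ at some $b$ by the argument of step (a), so $b\notin Z$ and the suffix of $Q$ would be a $bt_i$-path of $G\setminus Z$, contradicting that $b\in Z$ or $Z$ is a $b-t_i$ separator. Conversely, suppose $Z$ is a multicut of $(G^\ast,\terms)$. If $Z$ were not a multicut of $(G,\terms)$, an $s_it_i$-path of $G\setminus Z$ would meet the $G$-separator $B_i$ at some $b$; its suffix from $b$ lies in $G\setminus\sources\subseteq G^\ast$ and avoids $Z$, so prepending $(s_i,b)\in E(G^\ast)$ contradicts $Z$ being a multicut of $(G^\ast,\terms)$. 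And if the extra property failed, say $v\in B_i\setminus Z$ with a $vt_i$-path $Q$ of $G\setminus Z$, then $Q$ is a path of $G^\ast\setminus Z$ and prepending $(s_i,v)\in E(G^\ast)$ yields the same contradiction. The ``in particular'' clause follows at once, as $V(G^\ast)=V(G)$ and $\cut$ is unchanged.
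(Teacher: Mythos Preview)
Your proof is correct and follows essentially the same approach as the paper. The paper's proof of Claim~\ref{deg-red:cut-hold}(c) is compressed to the one-liner ``follows from the fact that $B_i \subseteq \outN_{G^\ast}(s_i)$'', whereas you spell out the contradiction argument in full; your upfront organisation via the walk description and facts (i)--(iii) is a helpful packaging but does not change the underlying ideas.
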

\begin{proof}
For Claim \ref{deg-red:deg}, note that for each $1 \leq i \leq \nterms$, $\outN_{G^\ast}(s_i) \subseteq \bigcup_{1 \leq i' \leq \nterms} B_{i'}$
and $|B_{i'}| = \textrm{cut}_G(s_{i'},t_{i'}) \leq \cut$.

For Claim \ref{deg-red:cut-hold}, we first prove that $B_i$ is a $s_i-t_i$ separator in $G^\ast$.
Assume otherwise; let $P$ be a $s_it_i$-path in $G^\ast$ that avoids $B_i$.
Let $(s_i,v)$ be the first arc on $P$. By the construction of $G^\ast$, in $G$ the vertex $v$ is reachable from $s_i$ but not from $B_i$.
Therefore, there exists an $s_iv$-path in $G \setminus B_i$; together with $P$ truncated by $s_i$ it gives a $s_it_i$-path in $G \setminus B_i$, a contradiction.

Now note that for each $1 \leq i \leq \nterms$, there exist $\textrm{cut}_G(s_i,t_i)$ $s_it_i$-paths in $G$ with pairwise disjoint interiors;
each path visits different vertex of $B_i$. These paths can be shortened in $G^\ast$ using arcs $(s_i,v)$ for $v \in B_i$; thus, $B_i$ is a
$s_i-t_i$ separator in $G^\ast$ of minimum size. The fact that it is an important $t_i-s_i$ separator in $\reverse{G}$ follows from
the fact that $B_i \subseteq \outN_{G^\ast}(s_i)$.

Claim \ref{deg-red:poten} follows directly from Claim \ref{deg-red:cut-hold}, as $\textrm{cut}_G(s_i,t_i) = \textrm{cut}_{G^\ast}(s_i,t_i)$ for all $1 \leq i \leq \nterms$.

For Claim \ref{deg-red:iff}, let $Z$ be a multicut in $(G^\ast,\terms)$. Take arbitrary $1 \leq i \leq \nterms$ and let $P$ be an arbitrary $s_it_i$-path in $G$.
This path intersects $B_i$; let $v$ be the last (closest to $t_i$) vertex from $B_i$ on $P$. Let $P^\ast$ be a $s_it_i$-path in $G^\ast$ defined as follows: we start
with the arc $(s_i,v)$ and then we follow $P$ from $v$ to $t_i$. As $Z$ is a multicut in $(G^\ast,\terms)$, $Z$ intersects $P^\ast$. We infer that $Z$ intersects $P$
and $Z$ is a multicut in $G$. Moreover, for each $1 \leq i \leq B_i$ and $v \in B_i$, if $v \notin Z$ then $Z$ is a $v-t_i$ separator in $G^\ast$ (and in $G$ as well, as
    $G$ and $G^\ast$ differ only on arcs incident to the sources), as otherwise $Z$ would not intersect an $s_it_i$-path in $G^\ast$ that starts with the arc $(s_i,v)$.

In the second direction, let $Z$ be a multicut in $(G,\terms)$ that satisfies the conditions given in Claim \ref{deg-red:iff}.
Let $P^\ast$ be an arbitrary $s_it_i$-path in $G^\ast$. As $B_i$ is an $s_i-t_i$ separator in $G^\ast$, $P^\ast$ intersects $B_i$; let $v$ be the last (closest to $t_i$)
vertex of $B_i$ on $P^\ast$. Note that the part of the path $P^\ast$ from $v$ to $t_i$ (denote it by $P_v$) is present also in the graph $G$. By the properties of $Z$,
$v \in Z$ or $Z$ intersects $P_v$. Thus $Z$ intersects $P^\ast$ and $Z$ is a multicut in $(G^\ast,\terms)$.

To see the first part of Claim \ref{deg-red:keep-src} note that if $(s_i,v)$ is an arc in $G^\ast$, then $s_i \in \src(G,v)$: clearly this is true for $v \in B_i$,
and otherwise $s_i \in \src(G,v)$ is one of the conditions required to add arc $(s_i,v)$.
The second part follows directly from the construction: if $(s_i,v)$ is an arc in $G^\ast$, then $v \in B_{i'}$ for some $1 \leq i' \leq \nterms$.
If $s_{i''} \in \src(G,v)$ then either $v$ is reachable from some vertex of $B_{i''}$ in $G^\ast$, or
the arc $(s_{i''},v)$ is present in $G^\ast$.
\end{proof}
We note that in the definition of the degree-reduced graph, the arcs between
a source $s_i$ and vertices in $B_{i'}$ for $i \neq i'$ are added only to ensure
Claim \ref{deg-red:keep-src}. For the remaining claims, as well as the branching 
described at the beginning of the section (formalized in the subsequent lemma)
it would suffice to add only arcs $(s_i,v)$ for $1 \leq i \leq \nterms$ and $v \in B_i$.

\begin{lemma}\label{lem:degree-branch}
There exists an algorithm that, given a \dagmulticut{} instance $\inst=(G,\terms,\cut)$, in polynomial time
generates a sequence of instances $(\inst_j=(G_j,\terms_j,\cut_j))_{j=1}^d$ satisfying the following properties. Let $\inst_0 = (G^\ast,\terms,\cut)$;
\begin{enumerate}
\item if $Z$ is a multicut $Z$ in $\inst_j$ for some $0 \leq j \leq d$, then $Z \subseteq V(G)$ and $Z$ is a multicut in $\inst$ too;
\item for any multicut $Z$ in $\inst$, there exists $0 \leq j \leq d$ such that $Z$ is a multicut in $\inst_j$ too;
\item for each $1 \leq j \leq d$, $\cut_j = \cut$, $\terms_j = \terms$ and $\poten(\inst_j) < \poten(\inst)$;
\item $d \leq \nterms\cut$.
\end{enumerate}
\end{lemma}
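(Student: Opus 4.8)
The plan is to realize the branching sketched at the beginning of the subsection. For $1 \le i \le \nterms$ let $B_i$ be the $s_i - t_i$ mincut closest to $s_i$; each $B_i$ is computable in polynomial time, being the unique minimum-size important $t_i - s_i$ separator in $\reverse{G}$. I set $\inst_0 = (G^\ast,\terms,\cut)$, and for the remaining instances I introduce one branch for every pair $(i,v)$ with $1 \le i \le \nterms$ and $v \in B_i$, in an arbitrary fixed order: the instance of this branch is $(G_j,\terms,\cut)$, where $G_j$ is obtained from $G$ by adding the arc $(v,t_i)$ and then bypassing $v$. This is well-defined: $v$ is a nonterminal since $B_i$ contains no terminals, so the bypass operation applies; and $G_j$ is again acyclic, because $t_i$ has no out-neighbours in $G$, adding an arc into $t_i$ does not create any, and hence the new arc lies on no cycle. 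Since we assume $\textrm{cut}_G(s_i,t_i) \le \cut$ for all $i$, the number of branches is $d = \sum_{i=1}^\nterms |B_i| = \sum_{i=1}^\nterms \textrm{cut}_G(s_i,t_i) \le \nterms\cut$, which gives property~4, and the entire construction is clearly polynomial-time.

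For property~1, let $Z$ be a multicut in some $\inst_j$. If $j = 0$, then $Z \subseteq V(G^\ast) = V(G)$ and Lemma~\ref{lem:deg-red-prop}(\ref{deg-red:iff}) directly gives that $Z$ is a multicut in $\inst$. If $j \ge 1$, then $Z \subseteq V(G_j) = V(G) \setminus \{v\} \subseteq V(G)$; by the first item of Lemma~\ref{lem:bypass}, $Z$ is a multicut in $(G+(v,t_i),\terms)$, and since $G$ is a subgraph of $G+(v,t_i)$ on the same vertex set, $Z$ is a multicut in $\inst$ as well.

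For property~2, take any multicut $Z$ in $\inst$ and split into two cases. Suppose first that for every $1 \le i \le \nterms$ and every $v \in B_i$ there is no $vt_i$-path in $G \setminus Z$ (which holds trivially whenever $v \in Z$). Then for each such pair $(i,v)$ we have $v \in Z$ or $Z$ is a $v - t_i$ separator, so Lemma~\ref{lem:deg-red-prop}(\ref{deg-red:iff}) shows that $Z$ is a multicut in $\inst_0$. Otherwise, fix a pair $(i,v)$ with $v \in B_i$ for which $G \setminus Z$ contains a $vt_i$-path. Then $v \notin Z$, and since $v$ already reaches $t_i$ in $G \setminus Z$, the arc $(v,t_i)$ does not make any new vertex reachable in $G \setminus Z$; hence $Z$ is a multicut in $(G+(v,t_i),\terms)$, and since $v \notin Z$, the second item of Lemma~\ref{lem:bypass} gives that $Z$ is a multicut in $\inst_j$ for the branch of $(i,v)$.

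It remains to prove property~3. That $\cut_j = \cut$ and $\terms_j = \terms$ for $j \ge 1$ is immediate from the construction, so we must establish $\poten(\inst_j) < \poten(\inst)$, i.e.\ that passing from $G$ to $G_j$ strictly increases $\sum_k \textrm{cut}(s_k,t_k)$. For $k \ne i$, no $s_kt_k$-path uses the arc $(v,t_i)$, because $t_i$ has no out-neighbours and $t_i \notin \{s_k,t_k\}$; hence $\textrm{cut}_{G+(v,t_i)}(s_k,t_k) = \textrm{cut}_G(s_k,t_k)$, and bypassing $v$ does not decrease this value (this is what is shown in the proof of Lemma~\ref{lem:bypass}), so $\textrm{cut}_{G_j}(s_k,t_k) \ge \textrm{cut}_G(s_k,t_k)$. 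For $k = i$, I apply Lemma~\ref{lem:push-mincut} to $\reverse{G}$ with $X = \{t_i\}$, $Y = \{s_i\}$, $B = B_i$ (which, by the definition of the mincut closest to $s_i$, is precisely the unique minimum-size important $t_i - s_i$ separator in $\reverse{G}$) and the chosen vertex $v \in B_i$. Translating the conclusion back to $G$, it states that in the graph $\hat{G}$ with vertex set $V(G) \setminus \{v\}$ and arc set $E(G \setminus v)$ together with all arcs $(w,t_i)$ such that $w \in \inN_G(v)$ and $w$ cannot reach $t_i$ in $G \setminus B_i$, every $s_i - t_i$ separator has size strictly greater than $|B_i|$. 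Now $E(G_j)$ contains $E(G \setminus v)$ as well as every arc $(w,t_i)$ with $w \in \inN_G(v)$ (these arise from bypassing $v$ in $G + (v,t_i)$), so $E(\hat{G}) \subseteq E(G_j)$ on the common vertex set; consequently every $s_i - t_i$ separator in $G_j$ is also an $s_i - t_i$ separator in $\hat{G}$, whence $\textrm{cut}_{G_j}(s_i,t_i) > |B_i| = \textrm{cut}_G(s_i,t_i)$. Adding these inequalities over all $k$ yields $\sum_k \textrm{cut}_{G_j}(s_k,t_k) > \sum_k \textrm{cut}_G(s_k,t_k)$ and therefore $\poten(\inst_j) < \poten(\inst)$.

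The only genuinely delicate point is this potential drop for the branch index $k = i$. Lemma~\ref{lem:push-mincut} is stated for the mincut closest to the sink side, whereas $B_i$ is the mincut closest to $s_i$, so it must be applied in the reversed graph; and the graph $G_j$, obtained by a full bypass of $v$, is not literally the graph that Lemma~\ref{lem:push-mincut} produces, so one has to observe that $G_j$ only ever adds arcs relative to that graph and therefore inherits the lower bound on the sizes of $s_i - t_i$ separators. Everything else reduces directly to Lemma~\ref{lem:bypass} and Lemma~\ref{lem:deg-red-prop}.
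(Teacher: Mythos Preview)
Your proof is correct and follows essentially the same approach as the paper: the same branching over pairs $(i,v)$ with $v \in B_i$, the same construction of $G_j$ by adding $(v,t_i)$ and bypassing $v$, and the same appeal to Lemma~\ref{lem:push-mincut} in $\reverse{G}$ combined with Lemmata~\ref{lem:bypass} and~\ref{lem:deg-red-prop} for the equivalence and the potential drop. You merely spell out a few steps in more detail (the acyclicity of $G_j$, the explicit translation of Lemma~\ref{lem:push-mincut} back to $G$, and the reachability argument in property~2), but nothing substantive differs.
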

\begin{proof}
Let $B_i$ be as in Definition \ref{def:deg-red}.
Informally speaking, we guess an index $1 \leq i \leq \nterms$ and a vertex $v \in B_i$
such that $t_i$ is reachable from $v$ in $G \setminus Z$, where $Z$ is a solution to $\inst$ (in particular, $v \notin Z$). If we have such $v$,
we can add an arc $(v,t_i)$ and then bypass $t_i$; by the choice of $B_i$ and Lemma \ref{lem:push-mincut}, the value $\textrm{cut}(s_i,t_i)$ strictly increases during this operation.
The last branch --- where such a choice $(i,v)$ does not exists --- corresponds to the degree-reduced graph $G^\ast$. We now proceed to the formal arguments.

For $1 \leq i \leq \nterms$ and $v \in B_i$ we define the graph $G_{i,v}$ as follows: we first add an arc $(v,t_i)$ to $G$ and then bypass the vertex $v$.
We now apply Lemma \ref{lem:push-mincut} for $t_i-s_i$ cuts in the graph $\reverse{G}$: $B_i$ is the unique minimum size important $t_i-s_i$ separator,
$v \in B_i$ and $\reverse{G_{i,v}}$ contains the same set of vertices and a superset of arcs of the graph $G'$ from the statement of the lemma.
Therefore $\textrm{cut}_{G_{i,v}}(s_i,t_i) > \textrm{cut}_G(s_i,t_i)$.
Moreover, $\textrm{cut}_{G_{i,v}}(s_{i'},t_{i'}) \geq \textrm{cut}_G(s_{i'},t_{i'})$ for $i' \neq i$, as adding an arc and bypassing a vertex
cannot decrease the size of the minimum separator. Therefore $\poten((G_{i,v},\terms,\cut)) < \poten(\inst)$;
we output $(G_{i,v},\terms,\cut)$ as one of the output instances $\inst_j$. Clearly, $d = \sum_{i=1}^\nterms |B_i| \leq \nterms\cut$.
To finish the proof of the lemma we need to show the equivalence stated in the first two points of the statement.

In one direction, note that as the graphs $G_{i,v}$ are constructed from $G$ by adding an arc and bypassing a vertex, any multicut in $(G_{i,v},\terms)$ is a multicut in $(G,\terms)$
as well. Moreover, by Lemma \ref{lem:deg-red-prop}, Claim \ref{deg-red:iff}, any multicut of $\inst_0$ is a multicut of $\inst$ as well.

In the other direction, let $Z$ be a solution to $\inst$. Consider two cases. Firstly assume that there exists $1 \leq i \leq \nterms$ and $v \in B_i$
such that $v \notin Z$ and $Z$ is {\em{not}} a $v-t_i$ separator. As $Z$ is a multicut in $(G,\terms)$, $Z$ is a $s_i-v$ separator in $G$.
Therefore $Z$ is also a multicut in a graph $G$ with the arc $(v,t_i)$ added. As $v \notin Z$, by Lemma \ref{lem:bypass},
$Z$ is a multicut in $(G_{i,v},\terms)$.
In the second case, if for each $1 \leq i \leq \nterms$ and $v \in B_i$, we have $v \in Z$ or $Z$ is a $v-t_i$ separator in $G$,
   we conclude that $Z$ is a multicut in $(G^\ast,\terms)$ by Lemma \ref{lem:deg-red-prop}, Claim \ref{deg-red:iff}.
\end{proof}

\subsection{Overview on the branching step}

In order to prove Theorem \ref{thm:main}, we show the following lemma that encapsulates a single branching step of the algorithm.
\begin{lemma}\label{lem:branching-step}
There exists an algorithm that, given a \dagmulticut{} instance $\inst = (G,\terms,\cut)$ with $|\terms| = \nterms$, in time $\Ohstar(2^{\nterms + 2^{\Oh(\cut)}})$
either correctly concludes that $\inst$ is a NO instance,
or computes a sequence of instances $(\inst_j = (G_j,\terms_j,\cut_j))_{j=1}^d$ such that:
\begin{enumerate}
\item $\inst$ is a YES instance if and only if at least one instance $\inst_j$ is a YES instance;
\item for each $1 \leq j \leq d$, $V(G_j) \subseteq V(G)$, $\cut_j \leq \cut$, $\terms_j = \terms$ and $\poten(\inst_j) < \poten(\inst)$;
\item $d \leq 4 \cdot 2^{\nterms + 2^{\Oh(\cut)}} \nterms\cut \log |V(G)|$.
\end{enumerate}
\end{lemma}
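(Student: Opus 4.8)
The strategy is to chain together the three tools already developed: the source-shadow sampling of Lemma~\ref{lem:shadowless}, the degree reduction of Lemma~\ref{lem:degree-branch}, and a final combinatorial branching on the structure around the (few) out-neighbours of the sources. First I would invoke Lemma~\ref{lem:shadowless} to obtain the family $\shfam$ of $2^{2^{\Oh(\cut)}}\log|V(G)|$ candidate shadow sets; for each $\shadow\in\shfam$ I branch, assuming that $\shadow$ is disjoint from the lex-min solution $Z$ and contains all vertices in the source shadow of $Z$. In the chosen branch, every vertex of $\shadow$ may safely be bypassed: a vertex in the source shadow of $Z$ is unreachable from all sources in $G\setminus Z$, so removing it and short-circuiting its in/out-neighbours cannot create a new $s_it_i$-path through $Z$, while Lemma~\ref{lem:bypass} guarantees that bypassing preserves multicuts not containing the bypassed vertex and does not increase any $\textrm{cut}_G(s_i,t_i)$. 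This yields a "shadowless" instance where (in the surviving branch) no vertex outside $\src(\cdot)$-reachable territory remains, i.e. every non-terminal is reachable from some source.

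Next, on each shadowless instance I apply the degree reduction. Using Lemma~\ref{lem:degree-branch}, I generate the at most $\nterms\cut+1$ instances $\inst_0=(G^\ast,\terms,\cut),\inst_1,\dots,\inst_d$; by that lemma each $\inst_j$ with $j\geq1$ already has strictly smaller potential, so those can be emitted directly into the output sequence. The only instance that does not yet make progress is $\inst_0$, the degree-reduced one, but here we have bought structure: by Lemma~\ref{lem:deg-red-prop}(\ref{deg-red:deg}) the sources collectively have at most $\nterms\cut$ out-neighbours, and $B_i=\outN_{G^\ast}(s_i)\cap(\text{stuff})$ is exactly the closest mincut. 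Combined with the shadowless property (every non-terminal reachable from a source, hence reachable from $\bigcup_i B_i$ after passing the first source-arc), the instance $\inst_0$ now has a bounded "frontier" $\bigcup_i B_i$ of size $\le \nterms\cut$ separating the sources from everything else. On this frontier I perform an exhaustive branching: guess, for each of the $\le\nterms\cut$ frontier vertices, which of the $\le 2^\nterms$ subsets of $\{t_1,\dots,t_\nterms\}$ it can still reach in $G^\ast\setminus Z$ (equivalently guess the "reachability type" of each frontier vertex); this is $2^{\Oh(\nterms^2\cut)}$ choices, but the outer bound already allows $2^{\nterms+2^{\Oh(\cut)}}$ per branch and the degree-branch contributes the $\nterms\cut$ factor, so the bookkeeping in point~3 works out to $d\le 4\cdot 2^{\nterms+2^{\Oh(\cut)}}\nterms\cut\log|V(G)|$. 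In each such sub-branch, for every frontier vertex $v$ and every $t_i$ that $v$ is guessed to still reach, either $v\in Z$ or, as in the degree-reduction argument, we may add the arc $(v,t_i)$ and bypass $v$; by Lemma~\ref{lem:push-mincut} this strictly raises $\textrm{cut}(s_i,t_i)$, hence drops the potential, so these sub-instances are legal outputs as well. The single remaining branch — no frontier vertex reaches anything — forces all $s_it_i$-paths to be already cut, which we detect in polynomial time and either discard or output a trivially-smaller-potential instance.

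To assemble the lemma: the algorithm outputs the union, over $\shadow\in\shfam$ and over all structural guesses, of all the instances produced by the degree-branch (the $j\geq1$ ones) together with the potential-dropping sub-instances of the degree-reduced graph. Property~1 (YES-equivalence) follows because (i) each output instance's multicut pulls back to a multicut of $\inst$ via Lemmas~\ref{lem:bypass}, \ref{lem:kill} and \ref{lem:deg-red-prop}(\ref{deg-red:iff}), and (ii) if $\inst$ is YES then the lex-min solution $Z$ is correctly handled in the branch where $\shadow$ is the shadow-set guaranteed by Lemma~\ref{lem:shadowless}, the degree-branch index of Lemma~\ref{lem:degree-branch}(2) is taken, and the structural guess matches the actual reachability types of $Z$. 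Property~2 holds by construction: $V(G_j)\subseteq V(G)$ since we only delete/bypass/kill, $\cut_j\le\cut$, $\terms$ is untouched, and every emitted instance either comes from Lemma~\ref{lem:degree-branch}(3) or from a $\textrm{cut}(s_i,t_i)$-increasing operation justified by Lemma~\ref{lem:push-mincut}. Property~3 is a multiplication of $|\shfam|=2^{2^{\Oh(\cut)}}\log|V(G)|$, the number of structural guesses $2^{\Oh(\nterms^2\cut)}$ which is absorbed into $2^{\nterms+2^{\Oh(\cut)}}$ only if we are slightly more careful — in fact I would keep the two exponential contributions separate and fold the $\nterms\cut$ from the degree-branch out front, exactly matching the stated bound. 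The running time is dominated by building $\shfam$ in $\Ohstar(2^{2^{\Oh(\cut)}})$ and then doing polynomial work per branch.

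\medskip

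\textbf{Main obstacle.} The delicate point is not the counting but verifying that, after bypassing the entire guessed shadow set $\shadow$, the lex-min solution $Z$ of the original instance is still a solution of the new instance and, crucially, is still its lex-min solution (or at least that a solution of no larger size and no larger lexicographic rank survives), so that the degree reduction applied afterwards is legitimate and the reachability types guessed in the final branching are consistent with those of $Z$. Bypassing can change the topological order and hence the lexicographic comparison, so I expect the cleanest route is to argue structurally: show that $Z$ remains a multicut after bypassing $\shadow$ (this is Lemma~\ref{lem:bypass}(2), using $\shadow\cap Z=\emptyset$), show it remains of minimum size (using that bypassing does not decrease $\textrm{cut}(s_i,t_i)$, and that the shadow vertices were "useless" so optimality is preserved), and then re-define the target solution in the new instance as its own lex-min solution $Z'$, proving $\poten$ is preserved and that $Z'$ inherits the thinness / source-shadow-emptiness needed for the argument to close. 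Getting this invariant exactly right across the composition of shadow-removal, bypass, and degree-reduction is the technical heart of the proof.
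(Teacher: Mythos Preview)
Your plan has a genuine gap in the treatment of the degree-reduced instance $\inst_0$. After shadow removal and one application of Lemma~\ref{lem:degree-branch}, you propose to branch on the ``reachability type'' $T_v\subseteq\{t_1,\dots,t_\nterms\}$ of each frontier vertex $v\in\bigcup_i B_i$, and for each $t_i\in T_v$ to add the arc $(v,t_i)$ and bypass $v$, citing Lemma~\ref{lem:push-mincut} for a strict increase of $\textrm{cut}(s_i,t_i)$. But Lemma~\ref{lem:push-mincut} applies only when $v\in B_i$, and that case is \emph{exactly} what Lemma~\ref{lem:degree-branch} already handled: in the surviving instance $\inst_0$, Claim~\ref{deg-red:iff} of Lemma~\ref{lem:deg-red-prop} guarantees that for every $v\in B_i$ either $v\in Z$ or $t_i\notin T_v$. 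So the only new information your reachability guess could carry is about pairs $(v,t_i)$ with $v\notin B_i$, and for those adding $(v,t_i)$ need not raise any $\textrm{cut}(s_j,t_j)$. Consequently your branching is either redundant or does not drop the potential, and the ``single remaining branch --- no frontier vertex reaches anything'' is precisely the hard case: it encodes the unknown solution $Z$, and you cannot detect or discard it in polynomial time without already knowing $Z$. (Separately, $2^{\Oh(\nterms^2\cut)}$ branches would violate the stated bound on $d$; this cannot be ``absorbed'' into $2^{\nterms+2^{\Oh(\cut)}}$.)

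The paper closes this gap with an idea your plan lacks: before everything else, it guesses a nonempty $S\subseteq\sources$ that is \emph{minimal} among the source-sets $\src(G,v)$ over $v\in Z$, and bypasses all vertices with $\src(G,\cdot)\subsetneq S$. After degree reduction, shadow removal, and cleanup, this minimality lets one prove (Lemma~\ref{lem:magic}) that the $\toporder$-first vertex of $V(G,S)$ whose source-reachability is altered by $Z$ must be a source out-neighbour with all non-source in-neighbours in $Z$; branching on this vertex (kill its in-neighbours, bypass it) drops the potential, and in the residual branch one may safely rewire every vertex of $V(G,S)$ to be a direct out-neighbour of $S$. A \emph{second} degree reduction then shrinks $V(G,S)$ to at most $\nterms\cut$ vertices, on which a final kill-branch locates a vertex of $Z$. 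The $2^\nterms$ in the bound comes solely from the guess of $S$, not from any per-vertex reachability enumeration.
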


The algorithm of Theorem \ref{thm:main} applies Lemma \ref{lem:branching-step}
and solves the output instances recursively.

\begin{proof}[Proof of Theorem \ref{thm:main}]
Let $\inst = (G,\terms,\cut)$ be a \dagmulticut{} instance. Clearly, if $\poten(\inst) < 0$ then $\textrm{cut}(s_i,t_i) > \cut$ for some $1 \leq i \leq \nterms$
and the instance is a NO instance. Otherwise, we apply Lemma \ref{lem:branching-step}, and solve the output instances recursively.
Note that the potential of $\inst$ is an integer bounded by $(\nterms+1)\cut$, thus the search tree of the algorithm has depth at most $(\nterms+1)\cut$.
Using a simple fact that for $k,n > 1$ we have
$$\log^k n = 2^{k \log\log n} \leq 2^{\frac{2}{3}k^{3/2} + \frac{1}{3}(\log\log n)^3} = 2^{\frac{2}{3}k^{3/2}} n^{o(1)},$$
we obtain that the number of leaves of the search tree is bounded by
\begin{eqnarray*}
\left( 2^{\nterms + 2^{\Oh(\cut)}} \nterms\cut \log|V(G)| \right)^{(\nterms+1)\cut} & = & 2^{\Oh(\nterms^2\cut)}\cdot 2^{\Oh(\nterms 2^{\Oh(\cut)})} \cdot 2^{\Oh(\nterms^{3/2}\cut^{3/2})} |V(G)|^{o(1)} \\
& = & \Ohstar(2^{\Oh(\nterms^2\cut+\nterms 2^{\Oh(\cut)})}).
\end{eqnarray*}
The last equality follows from the fact that $\nterms^{3/2}\cut^{3/2}\leq \nterms^2\cut+\nterms\cut^2\leq \nterms^2\cut+\nterms2^{\cut}$.
\end{proof}

In rough overview of the proof of Lemma \ref{lem:branching-step}, we describe a sequence of steps
where in each step, either the potential of the instance is decreased
or more structure is forced onto the instance. For example, consider
Lemma \ref{lem:degree-branch}. The result is a branching into polynomially many branches,
where in every branch but one the potential strictly decreases, and in
the remaining branch, the degrees of the source terminals are bounded.
Thus we may treat this step as ``creating'' a degree-reduced instance.

In somewhat more detail, let $Z$ be the lex-min solution to $\inst$.
We guess a set~$S \subseteq T^s$
such that there is some~$v \in Z$ with~$S(G,v)=S$, but no~$v'\in Z$
with~$S(G,v')\varsubsetneq S$;
bypass any vertex~$u$ with~$S(G,u) \varsubsetneq S$. 
By appropriately combining degree reduction with shadow removal, we
may further assume that no vertex in~$V(G,S)$ is in source-shadow
of~$Z$, and that the sources~$S$ have bounded degree.
Consider now the first vertex~$v \in V(G,S)$ under~$\toporder$ (if
any) which has its set of seen sources modified by~$Z$, i.e., 
$v \in V(G,S) \setminus Z$,~$S(G \setminus Z, v) \varsubsetneq
S$, and~$v$ is~$\toporder$-minimal among all such vertices. 
Let~$w$ be an in-neighbour of~$v$. The important observation is that
since~$S(G,w)$ is by assumption not modified by~$Z$, 
every such vertex~$w$ must be either a source or
deleted. Since~$v$ is not in source shadow of $Z$, 
there is an arc~$(s,v)$ in~$G$ for some~$s \in S$, and by the degree
reduction, there is only a bounded number of such vertices. 
Thus, if any vertex is modified by~$Z$ in this sense, then we may 
find one by branching on the out-neighbours of~$S$, decreasing the potential.

Otherwise, we know that if~$v \in V(G,S)$, then either~$v
\in Z$ or~$S(G \setminus Z, v)=S$. Thus, we may ``flatten'' the graph,
by making every~$v \in V(G,S)$ a direct out-neighbour of every~$s\in S$. 
By further degree reduction, we can now identify a polynomially sized
set out of which at least one vertex must be deleted. 

\subsection{Branchings and reductions}\label{ss:branching-step}

We now proceed with the formal proof of Lemma \ref{lem:branching-step}.
The proof contains a sequence of {\em{branching rules}} (when we generate a number of subcases,
some of them already ready to output as one instance $\inst_j$), or {\em{reduction rules}} (when we reduce the graph without changing the answer).
To make the algorithm easier to follow, we embed all branching and reduction rules in appriopriately numbered environments.

If the input instance $\inst$ is YES instance, by $Z$ we denote its lex-min solution.
Whenever we perform a branching or reduction step, in the new instance we consider the topological order that is induced by the old one; all the reductions and branchings add arcs only directed from vertices smaller in $\toporder$ to bigger. This also ensures that during the course of the algorithm all the directed graphs in the instances are acyclic.

We start with the obvious rule that was already mentioned in Section~\ref{sec:alg}.
Then, we roughly localize one vertex of $Z$.
\begin{reduction}\label{red:large-cut}
If $\textrm{cut}_G(s_i,t_i) > \cut$ (in particular, if $(s_i,t_i) \in E(G)$)
  for some $1 \leq i \leq \nterms$, conclude that $\inst$ is a NO instance.
\end{reduction}
\begin{branching}\label{branch:S}
Branch into $2^\nterms-1$ subcases, labeled by nonempty sets $S \subseteq \sources$.
In the case labeled $S$ we assume that $Z$ contains a vertex $v$ with $\src(G,v) = S$, but
no vertex $v'$ with $\src(G,v')$ being a proper subset of $S$.
\end{branching}
As $Z$ is a lex-min solution (in case of $\inst$ being a YES instance), $Z$ cannot contain any vertex $v$ with $\src(G,v) = \emptyset$. In each branch we can bypass some vertices.
\begin{reduction}\label{red:bypass-subset-S}
In each subcase, as long as there exists a nonterminal vertex $u \in V(G)$ with $\src(G,u) \varsubsetneq S$ bypass $u$. Let $(G^1,\terms,\cut)$ be the reduced instance.
\end{reduction}
By Lemma \ref{lem:bypass}, an application of the above rule cannot turn a NO instance into a YES instance. Moreover, in the branch where $S$ is guessed correctly,
$Z$ remains the lex-min solution to $(G^1,\terms,\cut)$. By Lemma \ref{lem:bypass},
$\poten((G^1,\terms,\cut)) \leq \poten(\inst)$.

We now apply the reduction of source degrees.
\begin{branching}\label{branch:deg-red-1}
In each subcase, let $S$ be its label and $(G^1,\terms,\cut)$ be the instance.
Invoke Lemma \ref{lem:degree-branch} on the instance $(G^1,\terms,\cut)$.
Output all instances $\inst_j$ for $1 \leq j \leq d$ as part of the output instances in Lemma \ref{lem:branching-step}.
Keep the instance $\inst_0$ for further analysis in this subcase and denote $\inst_0 = (G^2,\terms,\cut)$;
$G^2$ is the degree-reduced graph of $G^1$.
\end{branching}
Let us summarize what Lemma \ref{lem:degree-branch} implies on the outcome of Branching \ref{branch:deg-red-1}.
We output at most $2^\nterms \nterms \cut$ instances, and keep one instance
for further analysis in each branch.
Each output instance has strictly decreased potential, while $\poten((G^2,\terms,\cut)) \leq \poten(G^1,\terms,\cut))$. If $\inst$ is a NO instance,
all the generated instances --- both the output and kept ones --- are NO instances.
If $\inst$ is a YES instance,
   then it is possible that all the output instances are NO instances only if
in the branch where the set $S$ is guessed correctly,
the solution $Z$ is a solution to $(G^2,\terms,\cut)$ as well.
Moreover, as any solution to $(G^2,\terms,\cut)$ is a solution to $\inst$
as well by Lemma \ref{lem:deg-red-prop},
in this case $Z$ is the lex-min solution to $(G^2,\terms,\cut)$.

Let us now investigate more deeply the structure of the kept instances.
\begin{lemma}\label{lem:src-anal}
In a branch, let $S$ be its label, $(G^1,\terms,\cut)$ the instance on which Lemma \ref{lem:degree-branch} is invoked
and $(G^2,\terms,\cut)$ the kept instance.
For any $v \in V(G^1) = V(G^2)$ with $\src(G,v) = S$, we have $\src(G^1,v) = S$ and
$\src(G^2,v) \in \{\emptyset, S\}$.
\end{lemma}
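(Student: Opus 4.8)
The plan is to track how the set of reachable sources $\src(\cdot, v)$ of a fixed vertex $v$ with $\src(G,v) = S$ evolves under the two operations that transform $G$ into $G^1$ and then into $G^2$. First I would handle the passage from $G$ to $G^1$, which is obtained by repeatedly bypassing nonterminal vertices $u$ with $\src(G,u) \varsubsetneq S$ (Reduction \ref{red:bypass-subset-S}). By Lemma \ref{lem:bypass}, Claim 3, each single bypass preserves $\src$ for all surviving vertices; since $v$ itself is never bypassed (its source set is exactly $S$, not a proper subset), $v$ survives in $G^1$ and $\src(G^1, v) = \src(G,v) = S$. Strictly speaking one should note that Claim 3 of Lemma \ref{lem:bypass} is stated for a single bypass while Reduction \ref{red:bypass-subset-S} applies a sequence of them; but the equality $\src(G,u) = \src(G^1,u)$ for all common vertices follows by a trivial induction on the number of bypasses, using at each step that the vertex being bypassed is a nonterminal and that $v$ is not among the bypassed vertices.

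Next I would handle the passage from $G^1$ to $G^2$, where $G^2$ is the degree-reduced graph of $G^1$ (Definition \ref{def:deg-red}). Here I invoke Lemma \ref{lem:deg-red-prop}, Claim \ref{deg-red:keep-src}: for every vertex $w$ we have $\src(G^2, w) \subseteq \src(G^1, w)$, and moreover if $(s_i, w)$ is an arc of $G^2$ then $\src(G^2,w) = \src(G^1,w)$. Applying the first part to $w = v$ gives $\src(G^2, v) \subseteq \src(G^1, v) = S$. It remains to argue that this subset is either all of $S$ or empty, i.e.\ that it cannot be a nonempty proper subset of $S$. This is the step I expect to be the crux.

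For the dichotomy I would argue as follows. Consider the set $S' := \src(G^2, v) \subseteq S$ and suppose $S' \neq \emptyset$; I must show $S' = S$. Pick any $s_k \in S'$, so there is an $s_k v$-path in $G^2$. In $G^2$ the only arcs leaving a source $s_i$ go to vertices of $B_i$ (or, by the extra arcs in Definition \ref{def:deg-red}, to vertices of some $B_{i'}$), so this path's first arc is $(s_k, u)$ for some $u \in \bigcup_{i'} B_{i'}$; hence $(s_k, u)$ is an arc of $G^2$ and by the second part of Claim \ref{deg-red:keep-src}, $\src(G^2, u) = \src(G^1, u)$. Now I would trace the path from $u$ to $v$: after leaving the source, the path lies in $E(G^1 \setminus \sources) \subseteq E(G^1)$ until it possibly hits another source-incident arc, but since $\inN_{G^1}(s_i) = \emptyset$ (sources have no in-arcs, preserved throughout the algorithm) the path never re-enters a source, so in fact the entire segment from $u$ to $v$ uses only arcs of $G^1$ and lies in $G^1 \setminus \sources$. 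Therefore there is a $uv$-path in $G^1$, whence $\src(G^1, u) \subseteq \src(G^1, v) = S$; combined with $\src(G^2,u) = \src(G^1,u)$ this yields that the seen-source sets of $u$ agree in $G^1$ and $G^2$ and are contained in $S$. The remaining point is to show $S \subseteq S'$: any $s_i v$-path in $G^1$ can be shortened in $G^2$ to start with an arc $(s_i, v')$ where $v'$ is the last vertex of $B_i$ on it, and since $v'$ is reachable from $s_i$ in $G^2$ and $v$ is reachable from $v'$ along the tail of the path (which, lying after $B_i$ and before $v$, survives in $G^2$ as it uses no source arcs), we get $s_i \in \src(G^2, v) = S'$. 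Hence $S \subseteq S'$, so $S' = S$, completing the proof.

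The main obstacle, and the part requiring care, is precisely this last paragraph: one must verify that the relevant path segments genuinely survive the degree-reduction surgery, which only alters arcs incident to the sources. The key structural facts making this go through are that sources have empty in-neighbourhood (so no path revisits a source) and that $G^2$ and $G^1$ differ only on source-incident arcs, so any source-free subpath is common to both; together with the ``closest mincut'' shortening argument already used in the proof of Lemma \ref{lem:deg-red-prop}, Claim \ref{deg-red:iff}, these give both inclusions $S' \subseteq S$ and $S \subseteq S'$ under the assumption $S' \neq \emptyset$.
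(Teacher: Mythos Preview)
Your argument for the passage $G \to G^1$ and the inclusion $\src(G^2,v) \subseteq S$ is correct and matches the paper. The final step, however---showing $S \subseteq S' = \src(G^2,v)$ under the assumption $S' \neq \emptyset$---contains a genuine gap.

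You argue that for any $s_i \in S$, an $s_iv$-path $Q$ in $G^1$ can be shortened in $G^2$ to start with the arc $(s_i, v')$, where $v'$ is the last vertex of $B_i$ on $Q$. But $B_i$ is an $s_i$--$t_i$ separator, not an $s_i$--$v$ separator: the path $Q$ need not meet $B_i$ at all. Concretely, $v$ may lie on the $s_i$-side of the cut $B_i$ (i.e., be reachable from $s_i$ in $G^1 \setminus B_i$), in which case no such $v'$ exists and the shortening fails. The analogous argument in the proof of Lemma~\ref{lem:deg-red-prop}, Claim~\ref{deg-red:iff}, works precisely because it is applied to $s_it_i$-paths, which are forced to cross $B_i$; for $s_iv$-paths there is no such guarantee.

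The fix is to exploit the vertex $u$ you already singled out. You correctly establish that $\src(G^1,u) \subseteq S$ and $\src(G^2,u) = \src(G^1,u)$, but then abandon $u$ in favour of the flawed shortening. The missing observation---and this is exactly the paper's argument---is that $u$ survived Reduction~\ref{red:bypass-subset-S}, so $\src(G,u)$ is \emph{not} a proper subset of $S$; combined with $\src(G,u) = \src(G^1,u) \subseteq S$ this forces $\src(G^1,u) = S$, hence $\src(G^2,u) = S$. Since the tail of your path $P$ is a $uv$-path in $G^2$, it follows that $S \subseteq \src(G^2,v)$, completing the dichotomy.
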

\begin{proof}
Note that the operation of bypassing a vertex $u$ does not change whether a vertex $v$ is reachable from a fixed source;
thus $\src(G,v) = \src(G^1,v) = S$. By Lemma \ref{lem:deg-red-prop}, we have $\src(G^2,v) \subseteq \src(G^1,v) = S$.
Assume that $\src(G^2,v) \neq \emptyset$, let $s_i \in \src(G^2,v)$, let $P$ be a $s_iv$-path in $G^2$
and let $(s_i,w)$ be the first arc on this path.
Since $G^2$ differs from $G^1$ only on the set of arcs incident to the set of sources, the subpath $P'$
of $P$ from $w$ to $v$ is present in $G^1$ as well. Therefore $\src(G^1,w) \subseteq \src(G^1,v) = S$.
As $w$ was not bypassed by Reduction \ref{red:bypass-subset-S}, we have $\src(G,w) = S$.
Using again the fact that bypassing a vertex $u$ does not change whether $w$ is reachable from a fixed source, we have that $\src(G^1,w) = S$.
By Lemma \ref{lem:deg-red-prop}, $\src(G^1,w) = \src(G^2,w) = S$.
By the presence of $P'$ in $G^2$, we have $S \subseteq \src(G^2,v)$. This finishes the proof of the lemma.
\end{proof}

Recall that if $\inst$ is a YES instance, but all instances output so far are NO instances,
then in some subcase $S$ the set $Z$ is the lex-min solution to $(G^2,\terms,\cut)$.
In this case $Z$ does not contain any vertex from $V(G^2,\emptyset)$
and we can remove these vertices,
as they are not contained in any $s_it_i$-path for any $1 \leq i \leq \nterms$.
\begin{reduction}\label{red:delete-futile-1}
In each branch, let $S$ be its label and $(G^2,\terms,\cut)$ the kept instance.
As long as there exists a nonterminal vertex $v \in V(G^2)$ with $\src(G^2,v) = \emptyset$, delete $v$.
Denote the output instance by $(G^3,\terms,\cut)$.
\end{reduction}

Reduction \ref{red:delete-futile-1} does not interfere with any $s_it_i$-paths, thus $\poten((G^3,\terms,\cut)) = \poten((G^2,\terms,\cut))$.
Again, if $\inst$ is a NO instance, all instances $(G^3,\terms,\cut)$ are NO instances as well,
  and if $\inst$ is a YES instance, but all output instances produced so far are NO instances, $Z$
  is the lex-min solution to $(G^3,\terms,\cut)$ in some branch $S$.
Moreover, in $G^3$ each source has out-degree at most $\nterms\cut$ and there is no vertex $v$ with $\src(G^3,v) \varsubsetneq S$
(note that Reduction \ref{red:delete-futile-1} does not change reachability of a vertex from a fixed source).
We apply the source shadow reduction to $(G^3,\terms,\cut)$.
\begin{branching}\label{branch:shadowless}
In each branch, let $S$ be its label, and $(G^3,\terms,\cut)$ be the remaining instance.
Invoke Lemma \ref{lem:shadowless} on $(G^3,\terms,\cut)$, obtaining a family $\shfam_S$.
Branch into $|\shfam_S|$ subcases, labeled by pairs $(S,\shadow)$ for $\shadow \in \shfam_S$.
In each subcase, obtain a graph $(G^4,\terms,\cut)$ by bypassing (in arbitrary order)
all vertices of $\shadow \setminus \outN_{G^3}(\sources)$.
\end{branching}
Note that the graph $G^4$ does not depend on the order in which we bypass vertices of $\shadow \setminus \outN_{G^3}(\sources)$.
By Lemma \ref{lem:bypass}, bypassing some vertices cannot turn a NO instance into a YES instance.
Moreover, by Lemma \ref{lem:shadowless}, if $(G^3,\terms,\cut)$ is a YES instance and $Z$ is the
lex-min solution to $(G^3,\terms,\cut)$,
then there exists $\shadow \in \shfam_S$ that contains
all vertices of source shadows of $Z$, but no vertex of $Z$.
Note that no out-neighbour of a source may be contained in a source shadow;
therefore, $\shadow \setminus \outN_{G^3}(\sources)$ contains all vertices
of source shadows of $Z$ as well.
We infer that in the branch $(S,\shadow)$, $(G^4,\terms,\cut)$ is a YES instance and,
  as bypassing a vertex only shrinks the set of solutions, $Z$ is still the lex-min
  solution to $(G^4,\terms,\cut)$.
  Moreover, there are no source shadows of $Z$ in $(G^4,\terms,\cut)$.

At this point we have at most $2^{\nterms + 2^{\Oh(\cut)}} \log |V(G)|$ subcases and at most $2^\nterms \nterms\cut$ already output
instances. In each subcase, we have $\poten((G^4,\terms,\cut)) \leq \poten((G^3,\terms,\cut))$ by Lemma \ref{lem:bypass}. The following observation is crucial for further branching.
\begin{lemma}\label{lem:magic}
Take an instance $(G^4,\terms,\cut)$ obtained in a branch labeled with $(S,\shadow)$. Assume that $(G^4,\terms,\cut)$ is a YES instance
and let $Z$ be its lex-min solution. Moreover, assume that there are no source shadows
of $Z$ in $(G^4,\terms,\cut)$.
Then the following holds: if there exists a vertex $v' \in (V(G,S) \cap V(G^4))\setminus Z$ with $\src(G^4 \setminus Z, v') \neq S$,
then the first such vertex in the topological order $\toporder$ (denoted $v$) belongs to $\outN_{G^4}(\sources)$. Moreover,
     $v$ has at least one in-neighbour in $G^4$ that is not in $\sources$, and all such in-neighbours belong to $Z$.
\end{lemma}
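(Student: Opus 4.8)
The plan is to track how the sets of sources reaching a given vertex change along the chain of reductions $G \to G^1 \to G^2 \to G^3 \to G^4$, and then to exploit the $\toporder$-minimality of $v$. Recall that bypassing a vertex (Lemma~\ref{lem:bypass}) and deleting a vertex of empty source-shadow (Reduction~\ref{red:delete-futile-1}) do not change reachability from $\sources$, whereas the degree reduction changes only arcs leaving $\sources$. Combining these observations with Lemma~\ref{lem:src-anal}, the first step is to show that every $u \in V(G,S) \cap V(G^4)$ satisfies $\src(G^4,u)=S$: since $u \in V(G^4) \subseteq V(G^2)=V(G^1)$ and $\src(G,u)=S$, Lemma~\ref{lem:src-anal} gives $\src(G^2,u)\in\{\emptyset,S\}$, and as $u$ is a nonterminal present in $G^3$ it survived Reduction~\ref{red:delete-futile-1}, so $\src(G^2,u)\neq\emptyset$ and hence $\src(G^2,u)=S$; this value is unchanged when passing to $G^3$ and then $G^4$. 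In particular $\src(G^4,v)=S$. Since $v\notin Z$ and $(G^4,\terms,\cut)$ has no source shadows of $Z$, we get $\src(G^4\setminus Z,v)\neq\emptyset$; together with $\src(G^4\setminus Z,v)\subseteq\src(G^4,v)=S$ and the hypothesis $\src(G^4\setminus Z,v)\neq S$ this yields $\emptyset\varsubsetneq\src(G^4\setminus Z,v)\varsubsetneq S$.

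The heart of the proof is the last assertion of the lemma: every in-neighbour $w\notin\sources$ of $v$ in $G^4$ lies in $Z$. Suppose not, and fix such a $w$ with $w\notin Z$. First, $w$ is a nonterminal, since a sink has out-degree $0$ and hence cannot be the tail of the arc $(w,v)$. Now trace $(w,v)$ backwards. By Lemma~\ref{lem:bypass-torso} applied to the bypassing of Branching~\ref{branch:shadowless}, there is a $wv$-path in $G^3$ whose internal vertices all lie in $\shadow$; as $\shadow$ consists of nonterminals, these internal vertices are non-sources, so every arc of the path has a non-source tail. Because the degree reduction from $G^1$ to $G^2$ changes only arcs leaving $\sources$, every such arc already belongs to $G^1$ (it lies in $E(G^3)\subseteq E(G^2)$ and has a non-source tail), so there is a $wv$-path in $G^1$. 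Applying Lemma~\ref{lem:bypass-torso} once more, to the bypassing of Reduction~\ref{red:bypass-subset-S}, every arc of this path arises from a path in $G$ whose internal vertices have $\src(G,\cdot)\varsubsetneq S$; concatenating these and using acyclicity of $G$ to obtain a simple path, we get a $wv$-path in $G$. Hence $\src(G,w)\subseteq\src(G,v)=S$, while $\src(G,w)\not\varsubsetneq S$ because $w$ survived Reduction~\ref{red:bypass-subset-S}; therefore $\src(G,w)=S$, i.e.\ $w\in V(G,S)\cap V(G^4)$, so $\src(G^4,w)=S$ by the first step. Since the arc $(w,v)$ respects $\toporder$, $w\toporder v$ and $w\neq v$, so the $\toporder$-minimality of $v$ forces $\src(G^4\setminus Z,w)=S$. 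Finally, as $w,v\notin Z$ and $(w,v)\in E(G^4)$, we obtain $S=\src(G^4\setminus Z,w)\subseteq\src(G^4\setminus Z,v)\subseteq S$, contradicting $\src(G^4\setminus Z,v)\neq S$.

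It remains to deduce that $v$ has a non-source in-neighbour in $G^4$ and that $v\in\outN_{G^4}(\sources)$. For the former: a source has in-degree $0$ in $G^4$ and so cannot be an internal vertex of any path, hence if all in-neighbours of $v$ in $G^4$ were sources, every $\sources v$-path in $G^4$ would consist of a single arc, so every source reaching $v$ in $G^4$ would reach it directly and hence also in $G^4\setminus Z$ (which removes no terminals), giving $\src(G^4\setminus Z,v)=\src(G^4,v)=S$, a contradiction. For the latter: since $\src(G^4\setminus Z,v)\neq\emptyset$, fix a $\sources v$-path in $G^4\setminus Z$; it has length at least one as $v$ is a nonterminal, so its penultimate vertex $p$ is an in-neighbour of $v$ in $G^4$ with $p\notin Z$, whence $p\in\sources$ by the assertion just proved, and thus $v\in\outN_{G^4}(p)\subseteq\outN_{G^4}(\sources)$.

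The step I expect to be the main obstacle is the backward tracing of the arc $(w,v)$ through the degree reduction: one has to be sure that no arc of the relevant $G^3$-path was created there, which is exactly why it matters that $\shadow$ consists only of nonterminals (so the vertices bypassed in Branching~\ref{branch:shadowless} are non-sources) and that the degree reduction alters only arcs leaving $\sources$. The rest is routine bookkeeping with Lemmas~\ref{lem:bypass}, \ref{lem:bypass-torso}, \ref{lem:deg-red-prop} and~\ref{lem:src-anal}, together with the remarks recorded just after Reductions~\ref{red:bypass-subset-S} and~\ref{red:delete-futile-1}.
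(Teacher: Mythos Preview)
The proposal is correct and follows essentially the same approach as the paper's proof: both argue that a non-source, non-$Z$ in-neighbour $w$ of $v$ must lie in $V(G,S)\cap V(G^4)$ and then derive a contradiction from the $\toporder$-minimality of $v$, and both finish with the same source-shadow and all-sources-as-in-neighbours arguments. Your write-up is more explicit than the paper's in two places --- you spell out why every $u\in V(G,S)\cap V(G^4)$ has $\src(G^4,u)=S$, and you trace the arc $(w,v)$ carefully back through $G^4\to G^3\to G^2\to G^1\to G$ rather than asserting reachability in $G$ in one line --- but the logical skeleton is identical.
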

\begin{proof}
Let $v$ be as in the statement of the lemma.
Assume there exists an in-neighbour $w$ of $v$ that is not in $\sources$ nor in $Z$.
From the previous steps of the algorithm we infer that $\src(G^4,v) = \src(G^4,w) = S$.
Moreover, $w \in V(G,S)$ as $v \in V(G,S)$: the vertex $v$ is reachable from $w$ in $G$ (possibly via vertices bypassed in Branching \ref{branch:shadowless}), but all
vertices $u$ with $S(G,u) \varsubsetneq S$ are bypassed in Reduction~\ref{red:bypass-subset-S}.
Since $w$ is earlier in $\toporder$ than $v$, from the minimality of $v$, we have $\src(G^4 \setminus Z, w) = S$, a~contradiction.

As there are no source shadows of $Z$ in $(G^4,\terms,\cut)$, there exists $s_i \in S$ and a $s_iv$-path in $G^4$
that avoids $Z$. As $v$ has no in-neighbours outside $\sources$ and $Z$, this path consists of a single arc $(s_i,v)$ and $v \in \outN_{G^4}(\sources)$.
Moreover, if all in-neighbours of $v$ in $G^4$ are sources, $\src(G^4\setminus Z,v) = \src(G^4,v) = S$, a contradiction.
\end{proof}

\begin{branching}\label{branch:magic}
In each branch, let $(S,\shadow)$ be its label and $(G^4,\terms,\cut)$ the remaining instance.
Output at most $\nterms\cut$ instances $\inst_v$, labeled by vertices $v \in \outN_{G^4}(\sources) \cap V(G,S)$ for which $\inN_{G^4}(v) \not\subseteq \sources$:
the instance $\inst_v$ is created from $(G^4,\terms,\cut)$ by killing all non-terminal in-neighbours of $v$ and bypassing $v$.
Moreover, create one remaining instance $(G^5,\terms,\cut)$ as follows: delete from $G^4$ all arcs that have their ending vertices
in $V(G,S) \cap V(G^4)$ and for each $v \in V(G,S) \cap V(G^4)$ and $s_i \in S$ add an arc $(s_i,v)$.
\end{branching}
By Lemmata \ref{lem:kill} and \ref{lem:bypass},
   the output instances have strictly smaller potential than $(G^4,\terms,\cut)$
and are NO instances if $(G^4,\terms,\cut)$ is a NO instance.
On the other hand, assume that $(G^4,\terms,\cut)$ is a YES instance with lex-min solution $Z$
such that there are no source shadows of $Z$.
If there exist vertices $v'$ and $v$ as in the statement of Lemma \ref{lem:magic},
then the instance $\inst_v$ is computed and $Z \setminus \inN_{G^4}(v)$ (i.e., $Z$ without the killed vertices) is a solution
to $\inst_v$. Otherwise, we claim that $(G^5,\terms,\cut)$ represents the remaining case:
\begin{lemma}\label{lem:G4G5}
Let $(G^4,\terms,\cut)$ be an instance obtained in the branch $(S,\shadow)$.
\begin{enumerate}
\item $\poten((G^5,\terms,\cut)) \leq \poten((G^4,\terms,\cut))$.
\item Any multicut $Z$ in $(G^5,\terms,\cut)$ is a multicut in $(G^4,\terms,\cut)$ as well.
\item Assume additionally that $(G^4,\terms,\cut)$ is a YES instance whose lex-min solution $Z$
satisfies the following properties:
there are no source shadows of $Z$ and for each $v \in V(G,S) \cap V(G^4)$, either $v \in Z$ or $\src(G^4 \setminus Z,v) = S$.
Then $(G^5,\terms,\cut)$ is a YES instance and $Z$ is its lex-min solution.
\end{enumerate}
\end{lemma}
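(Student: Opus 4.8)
The plan is to analyze the relationship between $G^4$ and $G^5$ path by path, exploiting the fact that in $G^5$ every vertex of $V(G,S) \cap V(G^4)$ has exactly the sources $S$ as its in-neighbours, while the rest of the graph is untouched. First I would prove item~2: given a multicut $Z$ in $(G^5,\terms,\cut)$, take any $s_it_i$-path $P$ in $G^4$ and show $Z$ hits it. If $P$ uses no vertex of $V(G,S) \cap V(G^4)$, then $P$ is already a path in $G^5$ (since outside $V(G,S)\cap V(G^4)$ the arc sets agree) and we are done. Otherwise, let $v$ be the last vertex of $V(G,S) \cap V(G^4)$ on $P$. Since $s_i$ reaches $v$ in $G$, we have $s_i \in \src(G,v) = S$, so the arc $(s_i,v)$ exists in $G^5$; following this arc and then the suffix of $P$ from $v$ to $t_i$ (which lies entirely outside $V(G,S)\cap V(G^4)$, except the endpoint $v$ itself, by choice of $v$ as the last such vertex, and hence survives in $G^5$) gives an $s_it_i$-path in $G^5$. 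As $Z$ hits this path, and the only vertex it could hit that is not on $P$ is $s_i$ (a terminal, excluded from $Z$), $Z$ hits $P$. Note this also shows $Z \subseteq V(G^4)$: the only vertices present in $G^5$ but not $G^4$ are none — $V(G^5) = V(G^4)$.

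For item~1, I would argue that $\textrm{cut}_{G^5}(s_i,t_i) \geq \textrm{cut}_{G^4}(s_i,t_i)$ for each $i$, which gives $\poten((G^5,\terms,\cut)) \leq \poten((G^4,\terms,\cut))$ directly from the definition of the potential. By Menger's theorem it suffices to take a family of $\textrm{cut}_{G^4}(s_i,t_i)$ many $s_it_i$-paths in $G^4$ with pairwise disjoint interiors and transform each into an $s_it_i$-path in $G^5$ while preserving interior-disjointness. Using the path-shortening from the previous paragraph (replace the prefix of each path up to its last vertex $v$ in $V(G,S)\cap V(G^4)$ by the single arc $(s_i,v)$, or leave the path unchanged if it meets no such vertex), the interior of the new path is a subset of the interior of the old one together with possibly $v$ — and $v$ is already an interior vertex of the original path unless $v = t_i$, which is impossible since $t_i \notin V(G,S)$ as $\outN_G(t_i)=\emptyset$ forbids $t_i$ being an internal vertex, hence $\src(G,t_i)$ is irrelevant; more carefully, $t_i$ has no outgoing arc so it cannot be an internal vertex of any path and in particular $v \neq t_i$. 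So interiors only shrink, and disjointness is preserved.

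For item~3, suppose $(G^4,\terms,\cut)$ is a YES instance with lex-min solution $Z$ having no source shadows and satisfying: for each $v \in V(G,S)\cap V(G^4)$, either $v \in Z$ or $\src(G^4 \setminus Z, v) = S$. I would first show $Z$ is a multicut in $(G^5,\terms,\cut)$. Take an $s_it_i$-path $P$ in $G^5$; if it avoids $V(G,S)\cap V(G^4)$ it is a path in $G^4$ and $Z$ hits it. Otherwise let $v$ be the first vertex of $V(G,S)\cap V(G^4)$ on $P$; the arc into $v$ on $P$ is $(s_j,v)$ for some $s_j \in S$. If $v \in Z$ we are done; otherwise $\src(G^4\setminus Z,v) = S$, so in particular there is an $s_iv$-path $Q$ in $G^4 \setminus Z$ (taking the source $s_i$ of our terminal pair, which lies in $S$). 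Concatenating $Q$ with the suffix of $P$ from $v$ to $t_i$ — this suffix lies in $G^4$ since after $v$ the path $P$ may re-enter $V(G,S)\cap V(G^4)$ but all arcs of $G^5$ into such vertices come from $S$, and $P$ being a simple path cannot revisit $S$... here I must be slightly careful: the suffix of $P$ after $v$ in $G^5$ need not be a path in $G^4$ if it uses a later vertex $v'' \in V(G,S)\cap V(G^4)$ via an arc $(s_k,v'')$, but $P$ is simple so $s_k$ appears only once on $P$, at the very start is impossible since $P$ starts at $s_i$ and these are distinct sources only if $k\neq i$; to avoid this subtlety, instead take $v$ to be the \emph{last} vertex of $V(G,S)\cap V(G^4)$ on $P$, so the suffix from $v$ to $t_i$ genuinely avoids $V(G,S)\cap V(G^4)$ and is a path in $G^4$. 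Then if $v\in Z$ done; else $\src(G^4\setminus Z,v)=S \ni s_i$ gives the $s_iv$-path $Q$ in $G^4\setminus Z$, and $Q$ followed by the suffix is an $s_it_i$-walk in $G^4\setminus Z$, containing an $s_it_i$-path in $G^4 \setminus Z$, contradicting that $Z$ is a multicut in $(G^4,\terms,\cut)$. Hence every such $P$ meets $Z$, so $Z$ is a multicut in $(G^5,\terms,\cut)$, and since $|Z|\leq\cut$ it is a solution. Finally, for lex-minimality: by item~2 any solution $Z'$ to $(G^5,\terms,\cut)$ is a solution to $(G^4,\terms,\cut)$; since $G^5$ and $G^4$ have the same vertex set and the same topological order is used, the lexicographical order on subsets is the same, so a solution to $(G^5,\terms,\cut)$ strictly smaller than $Z$ (in size or lex order) would be a smaller solution to $(G^4,\terms,\cut)$, contradicting the choice of $Z$. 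Hence $Z$ is the lex-min solution to $(G^5,\terms,\cut)$.

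The main obstacle I anticipate is the bookkeeping in item~3 around which occurrence of a $V(G,S)\cap V(G^4)$-vertex on a $G^5$-path to ``pivot'' on — choosing the last occurrence (rather than the first) is what makes the suffix honestly a $G^4$-path and lets the hypothesis $\src(G^4\setminus Z, v) = S$ close the argument cleanly; getting that choice right, and correctly noting that $V(G^4) = V(G^5)$ so that $Z \subseteq V(G^4)$ and the lex orders coincide, are the only genuinely delicate points.
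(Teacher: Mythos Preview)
Your proof is correct and follows the paper's argument closely: the same path-shortening (to the last $V(G,S)\cap V(G^4)$-vertex) for items~1 and~2, and the same splicing of an $s_iv$-path in $G^4\setminus Z$ onto the $G^4$-suffix for item~3, with lex-minimality deduced from item~2 exactly as you do.

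Two small remarks. First, the paper sidesteps your first-versus-last deliberation in item~3 by observing directly that any $s_it_i$-path $P'$ in $G^5$ meeting $V(G,S)\cap V(G^4)$ contains \emph{exactly one} such vertex: since $\inN_{G^5}(v')=S$ for every $v'\in V(G,S)\cap V(G^4)$ and sources have in-degree zero, the predecessor of any such $v'$ on $P'$ is a source and hence must be $s_i$, forcing $v'$ to be the second vertex of $P'$ (this also gives $s_i\in S$ for free). Second, your sentence ``$Q$ followed by the suffix is an $s_it_i$-walk in $G^4\setminus Z$'' overreaches, since you have only placed the suffix in $G^4$, not in $G^4\setminus Z$; the clean phrasing (which the paper uses) is that $Z$, being a multicut in $G^4$, must hit the concatenated walk, and since it misses $Q$ it hits the suffix and therefore $P'$.
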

\begin{proof}
For the first and second claim, consider an arbitrary $s_it_i$-path $P$ in $(G^4,\terms,\cut)$ for some $1 \leq i \leq \nterms$.
If $P$ does not contain any vertex from $V(G,S) \cap V(G^4)$, $P$ is present in $G^5$ as well and $Z$ intersects $P$.
Otherwise, as $V(G,S) \cap V(G^4) \subseteq V(G^4,S)$, we have that $s_i \in S$. Let $v$ be the last (closest to $t_i$) vertex on $P$ that belongs to $V(G,S) \cap V(G^4)$.
Note that $(s_i,v)$ is an arc of $G^5$; therefore a path $P'$ that starts with $(s_i,v)$ and then follows $P$ to $t_i$ is present in $G^5$.
To prove the second point, note that as $Z$ is a multicut in $(G^5,\terms,\cut)$, $Z$ intersects $P'$ and we infer that $Z$ intersects $P$.
To prove the first point, note that the above reasoning show that any set $\mathcal{P}$ of $s_it_i$-paths in $G^4$ with pairwise disjoint interiors
yields a family of the same number of $s_it_i$-paths in $G^5$, again with pairwise disjoint interiors. Therefore, for any $1 \leq i \leq \nterms$
we have $\textrm{cut}_{G^4}(s_i,t_i) \leq \textrm{cut}_{G^5}(s_i,t_i)$.

For the third claim, it is sufficient to prove that in the considered case the set $Z$ is a multicut in $(G^5,\terms,\cut)$; its minimality follows from the second point.
Consider an arbitrary $s_it_i$-path $P'$ in $(G^5,\terms,\cut)$ for some $1 \leq i \leq \nterms$.
Again, if $P'$ does not contain any vertex from $V(G,S) \cap V(G^4)$, $P'$ is present in $G^4$ as well and $Z$ intersects $P'$.
Otherwise, $s_i \in S$ and $P'$ starts with an arc $(s_i,v)$ for some $v \in V(G,S) \cap V(G^4)$.
Note that for any $v' \in V(G,S) \cap V(G^4)$, by construction we have $\inN_{G^5}(v') = S$.
Therefore $v$ is the only vertex of $P'$ that belongs to $V(G,S) \cap V(G^4)$.

If $v \in Z$, $P'$ is intersected by $Z$ in $G^5$ and we are done. Otherwise, by the assumptions on $Z$, there exists a $s_iv$-path $P$ in $G^4 \setminus Z$.
A concatenation of $P$ and $P'$ without the arc $(s_i,v)$ yields a $s_it_i$-path in $G^4$. As $Z$ is a multicut in $(G^4,\terms,\cut)$, we infer that $P'$ is intersected
by $Z$ outside $V(G,S) \cap V(G^4)$ and the lemma is proven.
\end{proof}
The structure of $V(G,S) \cap V(G^5)$ is quite simple in $(G^5,\terms,\cut)$.
Recall that, if $\inst$ is a YES instance, but no instance output so far is a YES instance, then in at least one branch $(S,\shadow)$
we have that the lex-min solution $Z$ to $\inst$ is the lex-min solution to $(G^5,\terms,\cut)$
and $Z \cap V(G,S) \cap V(G^5) \neq \emptyset$.
We would like to guess one vertex of $Z \cap V(G,S) \cap V(G^5)$. Although, $V(G,S) \cap V(G^5)$ may still be large, each vertex $v \in V(G,S) \cap V(G^5)$ has
$\inN_{G^5}(v) = S$. Therefore we may limit the size of $V(G,S) \cap V(G^5)$ by applying once again the degree reduction branching.
\begin{branching}\label{branch:deg-red-2}
In each branch, let $(S,\shadow)$ be its label and $(G^5,\terms,\cut)$ the remaining instance.
Apply Lemma \ref{lem:degree-branch} on $(G^5,\terms,\cut)$, obtaining a sequence of instances $(\inst_j)_{j=1}^d$
and the remaining instance $(G^6,\terms,\cut)$, where $G^6$ is the degree-reduced graph $G^5$.
Output all instances $\inst_j$ for $1 \leq j \leq d$ and keep $(G^6,\terms,\cut)$ for further analysis.
\end{branching}
By Lemma \ref{lem:degree-branch}, if $(G^5,\terms,\cut)$ is a NO instance, all the output instances as well as $(G^6,\terms,\cut)$
are NO instances. Otherwise, if $(G^5,\terms,\cut)$ is a YES instance with the lex-min solution $Z$,
but the instances $\inst_j$ are all NO instances, then $Z$ is the lex-min solution to $(G^6,\terms,\cut)$.

Note that, by Lemma \ref{lem:degree-branch}, all output instances have potential strictly smaller than $\poten((G^5,\terms,\cut))$, whereas
$\poten((G^6,\terms,\cut)) = \poten((G^5,\terms,\cut))$.
Moreover, applications of Branching \ref{branch:deg-red-2} in all subcases output in total at most $2^{ \nterms + 2^{\Oh(\cut)}} \nterms\cut \log|V(G)|$ instances.

We are left with the final observation.
\begin{lemma}\label{lem:small-final}
In each subcase, let $(S,\shadow)$ be its label and $(G^6,\terms,\cut)$ the remaining instance.
Then at most $\nterms\cut$ vertices $v \in V(G,S) \cap V(G^6)$ have $\src(G^6,v) \neq \emptyset$.
\end{lemma}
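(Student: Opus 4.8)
The plan is to reduce the statement to the degree bound $|\outN_{G^6}(\sources)| \le \nterms\cut$ already established in Lemma~\ref{lem:deg-red-prop}, Claim~\ref{deg-red:deg}. Concretely, I would show that every vertex $v \in V(G,S) \cap V(G^6)$ with $\src(G^6,v) \neq \emptyset$ in fact lies in $\outN_{G^6}(\sources)$; the claimed bound is then immediate.

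First I would unwind how $G^6$ is built. By Branching~\ref{branch:deg-red-2}, $G^6$ is the degree-reduced graph of $G^5$, and by Branching~\ref{branch:magic}, $G^5$ arises from $G^4$ by deleting every arc whose head lies in $V(G,S)\cap V(G^4)$ and then adding the arc $(s_i,v)$ for every such $v$ and every $s_i\in S$. Neither step changes the vertex set, so $V(G,S)\cap V(G^6)=V(G,S)\cap V(G^4)$ and every vertex $v$ in this set has $\inN_{G^5}(v)=S$. Passing to the degree-reduced graph (Definition~\ref{def:deg-red}) starts from $E(G^5\setminus\sources)$ and only adds arcs whose tails are sources, so $v$ retains no in-arc outside $\sources$; hence $\inN_{G^6}(v)\subseteq\sources$.

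Next I would invoke the standing invariant that sources have empty in-neighbourhood (it holds initially, and no step of the algorithm --- in particular neither the degree reduction nor Branching~\ref{branch:magic} --- ever adds an arc into a source), so every source has in-degree $0$ in $G^6$. Combining this with $\inN_{G^6}(v)\subseteq\sources$: if $s_i\in\src(G^6,v)$ and $P$ is a shortest $s_iv$-path in $G^6$ with at least two arcs, its penultimate vertex would be a source with an in-neighbour on $P$, which is impossible; so $P$ must be the single arc $(s_i,v)$, i.e., $v\in\outN_{G^6}(\sources)$. Therefore $\{v\in V(G,S)\cap V(G^6):\src(G^6,v)\neq\emptyset\}\subseteq\outN_{G^6}(\sources)$, and Claim~\ref{deg-red:deg} of Lemma~\ref{lem:deg-red-prop} bounds the latter by $\nterms\cut$.

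The only point requiring care is the bookkeeping of in-neighbourhoods through the two transformations from $G^4$ to $G^5$ and then to $G^6$: one must confirm that building $G^5$ really makes $S$ the entire in-neighbourhood of each vertex of $V(G,S)\cap V(G^4)$, and that the arcs added when forming the degree-reduced graph all leave sources, so they cannot equip $v$ with a non-source in-neighbour. Both facts are immediate from the relevant definitions, so I do not anticipate a genuine obstacle; the lemma is essentially a structural observation about $G^5$ and $G^6$ plugged into a degree bound that is already proved.
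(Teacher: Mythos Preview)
Your proposal is correct and follows essentially the same route as the paper: you show that every $v\in V(G,S)\cap V(G^6)$ has $\inN_{G^6}(v)\subseteq\sources$ (because $\inN_{G^5}(v)=S$ and the degree reduction only rewrites source-outgoing arcs), deduce that $\src(G^6,v)\neq\emptyset$ forces $v\in\outN_{G^6}(\sources)$, and then invoke Claim~\ref{deg-red:deg} of Lemma~\ref{lem:deg-red-prop}. The paper's proof is the same argument compressed into the identity $\src(G^6,v)=\inN_{G^6}(v)$; your version merely spells out the steps more explicitly.
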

\begin{proof}
Note that $V(G^4) = V(G^5) = V(G^6)$. Take $v \in V(G,S) \cap V(G^6)$.
Recall that $\inN_{G^5}(v) = S$ and $G^6$ differs from $G^5$ only on the set of arcs incident to the sources,
so $\src(G^6,v) = \inN_{G^6}(v)$. The lemma follows from Lemma \ref{lem:deg-red-prop}, Claim \ref{deg-red:deg}.
\end{proof}
We may now perform once again Reduction \ref{red:delete-futile-1}:
\begin{reduction}\label{red:delete-futile-2}
In each branch, let $(S,\shadow)$ be its label and $(G^6,\terms,\cut)$ be the remaining instance.
As long as there exists a nonterminal vertex $v \in V(G^6)$ with $\src(G^6,v) = \emptyset$, delete $v$.
Denote the output instance by $(G^7,\terms,\cut)$.
\end{reduction}
As in the case of Reduction \ref{red:delete-futile-1}, $Z$
is the lex-min solution to $(G^6,\terms,\cut)$ if and only if $Z$
is the lex-min solution to $(G^7,\terms,\cut)$. Moreover,
$\poten((G^6,\terms,\cut)) = \poten((G^7,\terms,\cut))$.

By Lemma \ref{lem:small-final}, $|V(G,S) \cap V(G^7)| \leq \nterms\cut$. Now we can perform final branching.
\begin{branching}\label{branch:final}
In each subcase, let $(S,\shadow)$ be its label and $(G^7,\terms,\cut)$ the remaining instance.
For each $v \in V(G,S) \cap V(G^7)$ output an instance $\inst_v$ created from $(G^7,\terms,\cut)$ by
killing the vertex $v$.
\end{branching}
Note that if $V(G,S)\cap V(G^7)=\emptyset$, then this rule results in no branches created.

By Lemma \ref{lem:kill}, if $(G^7,\terms,\cut)$ is a NO instance, so are the output instances $\inst_v$.
On the other hand, assume that $\inst$ is a YES instance with the lex-min solution $Z$.
Then in at least one subcase $(S,\shadow)$, if no previously output instance is a YES instance,
then the instance $(G^7,\terms,\cut)$ is a YES instance, $Z$ is its lex-min solution,
and $Z \cap V(G,S) \cap V(G^7) \neq \emptyset$. Then the instance $\inst_v$ for any $v \in Z \cap V(G,S) \cap V(G^7)$
is a YES instance; in particular, $V(G,S)\cap V(G^7)$ is nonempty.
To conclude the proof of Lemma \ref{lem:branching-step} note that $\poten(\inst_v) < \poten((G^7,\terms,\cut))$ for each output instance $\inst_v$.

We conclude with a short summary and a diagram (Figure \ref{fig:diagram}) of the branching step.

\begin{figure}[htp]
\centering
\includegraphics[scale=1]{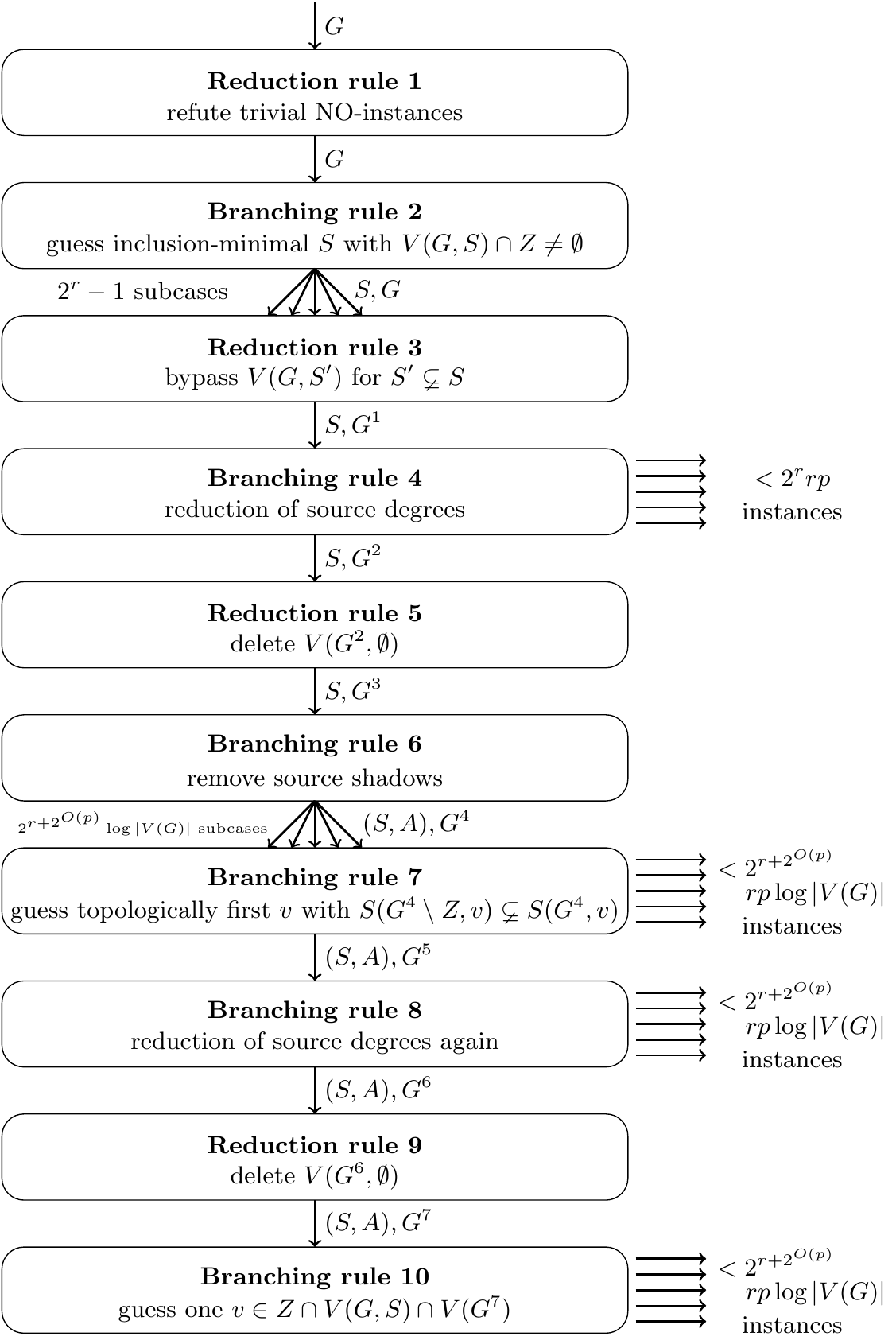}
\caption{A summary of one branching step.}
\label{fig:diagram}
\end{figure}

\begin{enumerate}
\item Branching \ref{branch:S} results in $2^\nterms-1$ subcases.
\item Branching \ref{branch:deg-red-1} outputs at most $\nterms\cut$ instances and leaves one remaining instance in each subcase; less than $2^\nterms\nterms\cut$ output instances in total.
\item Branching \ref{branch:shadowless} results in $2^{2^{\Oh(\cut)}} \log |V(G)|$ further subcases in each subcase; we have less than
$2^{\nterms+2^{\Oh(\cut)}} \log |V(G)|$ subcases at this point.
\item Branching \ref{branch:magic} outputs at most $\nterms\cut$ instances in each subcase and leaves one remaining instance;
less than $2^{\nterms+2^{\Oh(\cut)}} \nterms\cut \log |V(G)|$ output instances in total.
\item Branching \ref{branch:deg-red-2} outputs at most $\nterms\cut$ instances in each subcase and leaves one remaining instance; 
less than $2^{\nterms+2^{\Oh(\cut)}} \nterms\cut \log |V(G)|$ output instances in total.
\item Branching \ref{branch:final} outputs at most $\nterms\cut$ instances in each subcase and leaves no remaining instances; 
less than $2^{\nterms+2^{\Oh(\cut)}} \nterms\cut \log |V(G)|$ output instances in total.
\end{enumerate}

\section{Lower bounds}\label{sec:lb}

\subsection{W[1]-hardness of \dagmulticut{} parameterized by the size of the cutset}

The proof of Theorem \ref{thm:lb-w1} closely follows the lines of the proof of W[1]-hardness
of the general case of \multicut{} in directed graphs of Marx and Razgon \cite{marx:multicut}.
We simply need to replace the gadget $G_{i,j}$ (which is basically a long cycle)
with its acyclic variant (depicted on Figure \ref{fig:lb-w1}).
For the sake of completeness, we include here a full proof.

\begin{figure}[ht]
\centering
\includegraphics[scale=1]{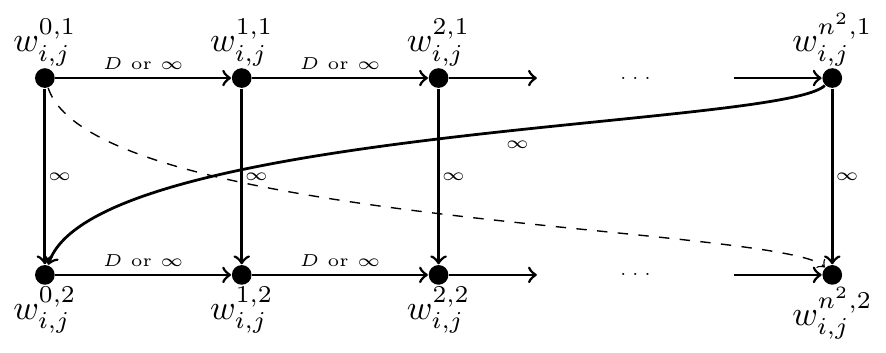}
\caption{An acyclic variant of the gadget $G_{i,j}$. Dashed arc represents the terminal pair.}
\label{fig:lb-w1}
\end{figure}

\begin{proof}[Proof of Theorem \ref{thm:lb-w1}]
We show a polynomial-time reduction from the \textsc{Clique} problem, known to be $W[1]$-hard.
Let $(G,t)$ be a \textsc{Clique} instance (i.e., we ask for a clique of size $t$
in the graph $G$). Denote $|V(G)| = n$ and $|E(G)| = m$.

Similarly as in \cite{marx:multicut}, we prove W[1]-hardness
of weighted edge-deletion variant of \dagmulticut{}.
The edge- and node-deletion variants are easily seen to be equivalent
(cf. \cite{dir-mwc}).
In our construction we use three weights: light (one),
heavy (polynomial in $t$) and infinite (that could be implemented as
budget for cuts, $p$, plus one; $p$ will be polynomial in $t$).
Therefore, all weights are polynomial in $t$,
and the weighted variant can be easily reduced to the unweighted one
by replacing arc $(u,v)$ of weight $\omega$ with
$\omega$ $uv$-paths of length two.

For each ordered pair $(i,j)$, $1 \leq i,j \leq t$, $i \neq j$,
we construct a gadget $G_{i,j}$ that has $2m$ states that encode a choice of one
pair of adjacent vertices $(v_i,v_j)$ of the desired clique in $G$.
We would like to ensure that the gadgets $G_{i,j}$ encode a clique $\{v_1,v_2,\ldots,v_t\}$
in $G$.
As discussed in \cite{marx:multicut}, it suffice to connect the gadgets
in the way to ensure that
\begin{enumerate}
\item if $G_{i,j}$ represents $(v_i,v_j)$ then $G_{j,i}$ represents $(v_j,v_i)$;
\item if $G_{i,j}$ represents $(v_i,v_j)$ and $G_{i,j'}$ represents
$(u_i,u_j)$ then $v_i=u_i$.
\end{enumerate}
In particular, it follows from the above that 
if $G_{i,j}$ represents $(v_i,v_j)$ and $G_{i',j}$ represents
$(u_i,u_j)$ then $v_j=u_j$.

Let $D = 2(t+1)^2$ be the weight of a heavy arc.
We set the budget for cuts as $p := 2t(t-1)D + t(t+1)/2$. Note that
$p < 2t(t-1)D + D$; thus we are allowed to cut only $2t(t-1)$ heavy arcs.

We now describe the gadget $G_{i,j}$, depicted on Figure \ref{fig:lb-w1}.
Assume $V(G) = \{0,1,\ldots,n-1\}$ and let
$\iota(x,y) = xn+y$ be a bijection from $V(G) \times V(G)$ to $\{0,1,\ldots,n^2-1\}$.
The gadget $G_{i,j}$ consists of $2n^2+2$ vertices
$w_{i,j}^{s,\xi}$ for $0 \leq s \leq n^2$ and $\xi \in \{1,2\}$.
For $\xi \in \{1,2\}$, $0 \leq s < n^2$ and $\iota^{-1}(s) = (x,y) \in V(G) \times V(G)$
we add an arc $(w_{i,j}^{s,\xi},w_{i,j}^{s+1,\xi})$ of weight $D$ if
$xy \in E(G)$ and $\infty$ otherwise.
Moreover, we add an arc $(w_{i,j}^{n^2,1},w_{i,j}^{0,2})$ and
arcs $(w_{i,j}^{s,1},w_{i,j}^{s,2})$ for $0 \leq s \leq n^2$, all of weight $\infty$.
We define a terminal pair 
$(w_{i,j}^{0,1},w_{i,j}^{n^2,2})$ in $G_{i,j}$.

Let us now analyze the gadget $G_{i,j}$.
The terminals $(w_{i,j}^{0,1},w_{i,j}^{n^2,2})$ are connected
by two edge-disjoint paths
$w_{i,j}^{0,1}, w_{i,j}^{1,1}, \ldots, w_{i,j}^{n^2,1}, w_{i,j}^{n^2,2}$
and
$w_{i,j}^{0,1}, w_{i,j}^{0,2}, w_{i,j}^{1,2}, \ldots, w_{i,j}^{n^2,2}$.
Therefore any solution needs to cut one edge $(w_{i,j}^{s,1},w_{i,j}^{s+1,1})$ for some $0 \leq s < n^2$
and one edge $(w_{i,j}^{s',2},w_{i,j}^{s'+1,2})$ for some $0 \leq s' < n^2$.
As the cut budget $p$ allows us to cut only $2t(t-1)$ heavy arcs (and no infinite ones),
   any solution cuts only the aforementioned two heavy arcs in each gadget $G_{i,j}$, and, apart from these, at most $t(t+1)/2$ light arcs.
Note that, moreover, we have that $s \leq s'$, as otherwise
there remains a path $w_{i,j}^{0,1}, w_{i,j}^{1,1}, \ldots, w_{i,j}^{s'+1,1}, w_{i,j}^{s'+1,2},
    w_{i,j}^{s'+2,2},\ldots,w_{i,j}^{n^2,2}$.
The index $s$ represents the choice made in gadget $G_{i,j}$, that is,
if $\iota(x,y) = s$, then we say that $G_{i,j}$ represents the pair $(x,y)$.
Note that if $G_{i,j}$ represents $s$ and $s_2 < s_1$ then $w_{i,j}^{s_2,2}$ is reachable from
$w_{i,j}^{s_1,1}$ if and only if $s_2 \leq s'$ and $s < s_1$; in particular,
  $w_{i,j}^{s,2}$ is reachable from $w_{i,j}^{s+1,1}$.

We now add connections between the gadgets to ensure the aforementioned properties,
   in a very similar fashion to \cite{marx:multicut}.

In order to ensure the first property, for every $1 \leq i < j \leq t$ and 
for every ordered pair $(x,y) \in V(G) \times V(G)$, such
that $xy \in E(G)$, we introduce: two vertices $a_{i,j}^{(x,y)}$, $b_{i,j}^{(x,y)}$,
an arc $(a_{i,j}^{(x,y)},b_{i,j}^{(x,y)})$ of weight $1$,
two arcs $(w_{i,j}^{\iota(x,y),2},a_{i,j}^{(x,y)})$,
$(w_{j,i}^{\iota(y,x),2},a_{i,j}^{(x,y)})$ of weight $\infty$
and two terminal pairs
$(w_{i,j}^{\iota(x,y)+1,1},b_{i,j}^{(x,y)})$ and
$(w_{j,i}^{\iota(y,x)+1,1},b_{i,j}^{(x,y)})$.
Observe that if $G_{i,j}$ represents $(x,y)$ then $w_{i,j}^{\iota(x,y),2}$
is reachable from $w_{i,j}^{\iota(x,y)+1,1}$ and the arc $(a_{i,j}^{(x,y)},b_{i,j}^{(x,y)})$
needs to be cut; similarly if $G_{j,i}$ represents $(y',x')$ then
$(a_{i,j}^{(x',y')},b_{i,j}^{(x',y')})$ needs to be cut.
If we are allowed to cut only one arc per choice of $1 \leq i < j \leq t$,
then $x=x'$ and $y=y'$. Thus, if we have only $\binom{t}{2}$ cuts of light arcs for the connections
introduced in this paragraph, the first property is ensured.

In order to ensure the second property, for each $1 \leq i \leq n$ and $x \in V(G)$ introduce
two vertices $c_i^x$ and $d_i^x$ connected by an arc $(c_i^x,d_i^x)$ of weight $1$.
Furthermore, for every $1 \leq j \leq n$, $j \neq i$ we add an arc
$(w_{i,j}^{\iota(x,0),2},c_i^x)$ of weight $\infty$ and a terminal
pair $(w_{i,j}^{\iota(x+1,0),1},d_i^x)$.
Note that if $G_{i,j}$ represents $(x,y)$ then $w_{i,j}^{\iota(x,0),2}$ is reachable
from $w_{i,j}^{\iota(x+1,0),1}$ and the arc $(c_i^x,d_i^x)$ needs to be cut.
If we are allowed only one cut per index $1 \leq i \leq n$ (i.e.,
$t$ cuts in total for connections introduced in this paragraph),
then the second property would be satisfied.
This concludes the description of the reduction.

To see that the constructed graph is acyclic, note that each gadget $G_{i,j}$
admits a topological order
$$w_{i,j}^{0,1}, w_{i,j}^{1,1}, \ldots, w_{i,j}^{n^2,1},w_{i,j}^{0,2}, w_{i,j}^{1,2}, \ldots, w_{i,j}^{n^2,2}.$$
Moreover, all connections between the gadgets contain outgoing edges only;
therefore a sequence that first contains all vertices of all gadgets (in
the aforementioned order within each gadget), then all pairs
$a_{i,j}^{(x,y)}, b_{i,j}^{(x,y)}$ and finally all pairs
$c_i^x, d_i^x$ is a topological order of the constructed graph.

Let us now formally prove the equivalence. Let $\{v_1,v_2,\ldots,v_t\}$ be a set
of vertices that induce a clique in $G$. Consider a set of arcs
\begin{align*}
&\{(w_{i,j}^{\iota(v_i,v_j),\xi},w_{i,j}^{\iota(v_i,v_j)+1,\xi}): 1 \leq i, j \leq t, i \neq j, 1 \leq \xi \leq 2\} \\
    &\quad \cup \{(a_{i,j}^{(v_i,v_j)},b_{i,j}^{(v_i,v_j)}: 1 \leq i < j \leq t\}\\
    &\quad \cup \{(c_i^{v_i}, d_i^{v_i}): 1 \leq i \leq t\}
\end{align*}
of weight exactly $p$. By the discussion on the gadgets $G_{i,j}$, the first group of arcs
ensures that the terminal pair in each gadget $G_{i,j}$ is separated; note that the connections between the gadgets contain only arcs outgoing from the
 gadgets $G_{i,j}$, so all the paths between considered pairs of terminals have to be entirely contained in the corresponding gadget. Moreover, for any $0 \leq s_2 < s_1 \leq n^2$,
in gadget $G_{i,j}$ the vertex $w_{i,j}^{s_2,2}$ is reachable from $w_{i,j}^{s_1,1}$
if and only if $s_2 \leq \iota(v_i,v_j) < s_1$. Therefore,
if $(x,y) \neq (v_i,v_j)$, then the first group of arcs ensure that
the terminal pair $(w_{i,j}^{\iota(x,y)+1,1}, b_{i,j}^{(x,y)})$
(or $(w_{i,j}^{\iota(x,y)+1,1}, b_{j,i}^{(y,x)})$ if $i > j$) is separated;
the second group separates the remaining pair for $(x,y) = (v_i,v_j)$.
Similarly, if $x \neq v_i$, the first group of arcs ensure that
the terminal pair $(w_{i,j}^{\iota(x+1,0),1}, d_i^x)$ is separated;
the third group separates the remaining pair for $x=v_i$.
We infer that the constructed graph admits a multicut of size $p$.

In the other direction, let $Z$ be a multicut in the constructed graph of
size at most $p$. As discussed, $Z$ needs to contain exactly two arcs
of weight $D$ from each gadget $G_{i,j}$ and each gadget $G_{i,j}$
represents some pair $(x,y)$. This leaves us with a budget of $t(t+1)/2 = \binom{t}{2} + t$
cuts of light arcs. We infer that we can spend one cut of
an arc $(a_{i,j}^{(x,y)},b_{i,j}^{(x,y)})$ per a pair $1 \leq i < j \leq t$
and only one cut of an arc $(c_i^x,d_i^x)$ per an index $1 \leq i \leq t$.
Therefore, both properties of what the gadgets $G_{i,j}$ may represent
are satisfied, so there are distinct vertices $v_1,v_2,\ldots,v_t$ such that
gadget $G_{i,j}$ represents $(v_i,v_j)$. As in each gadget a finite weight
was assigned only to an arc that corresponds to an edge in $G$,
    we obtain a clique of size $t$ in $G$.
\end{proof}

\subsection{NP-hardness of \dagmulticut{} with constant number of terminals}

We start with a proof of Theorem \ref{thm:lb-skew}. Then, we derive Theorem \ref{thm:lb-np} from Theorem \ref{thm:lb-skew}.

\begin{figure}[htp]
\begin{center}
\includegraphics[scale=1]{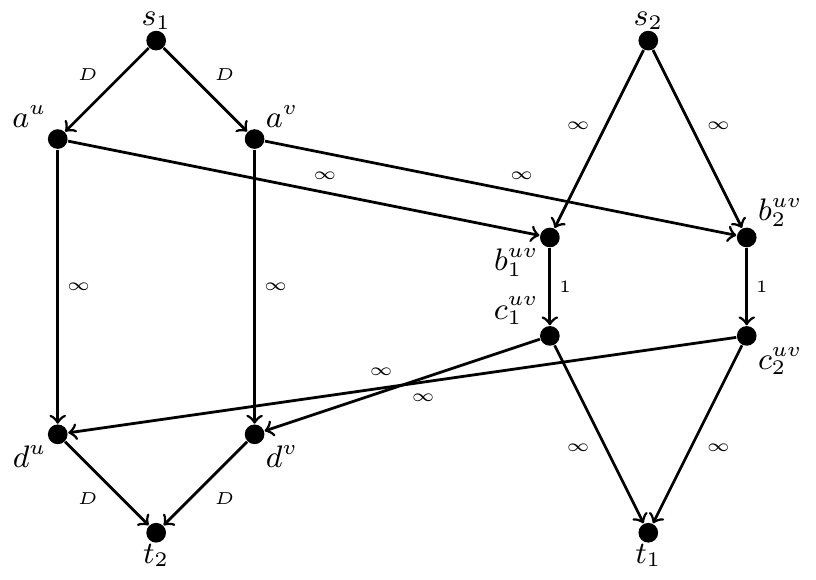}
\caption{A part of the construction in the proof
 of Theorem \ref{thm:lb-skew} with paths $P_u$ and $P_v$ and their connection
 due to an edge $uv$.}
\label{fig:lb-skew}
\end{center}
\end{figure}

\begin{proof}[Proof of Theorem \ref{thm:lb-skew}]
We provide a reduction from the NP-complete \textsc{Max-Cut} problem. Let us remind, that the \textsc{Max-Cut} instance $(G,t)$ is an undirected graph together with an integer $t$ and we ask
for a subset of vertices $X \subseteq V(G)$ such that
there are at least $t$ edges of $G$ with exactly one endpoint in $X$.
Denote $|V(G)| = n$ and $|E(G)| = m$.

We construct an equivalent \dagmulticut{} instance this time in the arc-deletion setting; let us remark that the arc- and vertex-deletion variants are equivalent (cf. \cite{dir-mwc}).
For clarity, we allow arcs to have weights: in our construction, we
use infinite (of weight $p+1$, denoted $\infty$; $p$, the budget for cuts, will be polynomial in the size of $G$), heavy (of weight $D = 2m+1$) and light (of weight $1$) arcs.
As the weights are polynomial in the size of $G$, we can easily reduce
the weighted variant to the unweighted one by replacing an arc $uv$ of weight $\omega$ with $\omega$ $uv$-paths of length two.

We start a construction of an equivalent \textsc{Skew Multicut} instance
by setting the cut budget $p = nD + 2m-t$ (as $p < nD+D$, we can delete
    only $n$ heavy edges) and
by introducing two sources $s_1$, $s_2$ and two sinks $t_1$, $t_2$;
recall that the set of terminal pairs is defined as $\terms = \{(s_1,t_1), (s_1,t_2), (s_2,t_2)\}$.

For each vertex $v \in V(G)$ we introduce two vertices $a^v$ and $d^v$, as well as two arcs $(s_1,a^v)$
and $(d^v, t_2)$ of weight $D$ and an arc $(a^v,d^v)$ of weight $\infty$. We denote the path $s_1,a^v,d^v,t_2$
as $P_v$; note that any solution needs to cut one of the heavy arcs $(s_1,a^v)$ or $(d^v,t_2)$.
As $p < nD+n$, each path $P_v$ is cut exactly once and the choice of the cut arc corresponds to the choice
whether $v \in X$ or $v \in V(G) \setminus X$.

We now connect the paths $P_v$ in such a way that for an edge $uv \in E(G)$ we profit
if the paths $P_u$ and $P_v$ are cut in a different manner.
For each edge $uv \in E(G)$ we introduce four vertices
$b_\alpha^{uv}$, $c_\alpha^{uv}$ for $\alpha \in {1,2}$,
two arcs
$(b_1^{uv},c_1^{uv})$ and $(b_2^{uv},c_2^{uv})$ of weight $1$,
  and eight arcs: $(s_2,b_\alpha^{uv})$, $(c_\alpha^{uv},t_1)$ for $\alpha \in {1,2}$
  as well as
$(a^u,b_1^{uv})$, $(a^v,b_2^{uv})$,
$(c_2^{uv},d^u)$, $(c_1^{uv},d^v)$ of weight $\infty$.
Note that the construction is symmetric with regard to $u$ and $v$
(i.e., changing the names of $u$ and $v$ results in changing the names of $b_1^{uv}$ and $c_1^{uv}$ with $b_2^{uv}$ and $c_2^{uv}$).
The intuition behind this construction is as follows: if the paths $P_u$ and $P_v$ are cut in a different manner,
we need to cut only one arc of weight $1$ for the edge $uv$, and otherwise we need to cut both arcs.
Part of the construction, with paths $P_u$, $P_v$ and the connection corresponding to the edge $uv$,
is depicted on Figure \ref{fig:lb-skew}.

The following topological order of the constructed graph proves that
the we indeed construct an acyclic graph
(within each set, we order the vertices arbitrarily):
\begin{align*}
\big\langle s_1,s_2,&\{a^v: v \in V(G)\},\{b_\alpha^{uv}: 1 \leq \alpha \leq 2, uv \in E(G)\},\\
    &\{c_\alpha^{uv}: 1 \leq \alpha \leq 2, uv \in E(G)\}, \{d^v: v \in V(G)\}, t_1, t_2 \big\rangle
\end{align*}

Let us now formally prove the equivalence of the input and output instances.
Let $X \subseteq V(G)$ be such that there are at most $m-t$ edges in $E(G[X]) \cup E(G \setminus X)$.
Consider the following set
\begin{align*}
Z &= \{(s_1,a^v): v \in X\} \cup \{(d^v,t_2): v \in V(G) \setminus X\} \\
  &\quad \cup \{(b_1^{uv},c_1^{uv}): u \in V(G) \setminus X \vee v \in X\}\\
  &\quad \cup \{(b_2^{uv},c_2^{uv}): u \in X \vee v \in V(G) \setminus X\}.
\end{align*}
Intuitively, if $v\in X$ then we take the arc $(s_1,a^v)$ to the solution, and otherwise we take the arc $(d^v,t_2)$. If $u\in X$ and $v\in V(G)\setminus X$ then we only need to include the arc $(b_2^{uv},c_2^{uv})$ in the solution; similarly, if $u\in V(G)\setminus X$ and $v\in X$ then we only need to include the arc $(b_1^{uv},c_1^{uv})$. However, if $u\in X$ and $v\in X$, or $u\in V(G)\setminus X$ and $v\in V(G)\setminus X$, then both of the arcs $(b_\alpha^{uv},c_\alpha^{uv})$ for $\alpha \in \{1,2\}$ need to be taken. As at least $t$ edges in $G$ have exactly one endpoint in $X$, we infer that the weight of $Z$ is at most $p$. It remains to check that $Z$ is a multicut in the constructed \textsc{Skew Multicut} instance.

First, consider the terminal $s_1$. Its out-arc $(s_1,a^u)$ is not in $Z$ iff $u \in V(G) \setminus X$.
The out-neighbours of $a^u$ are $d^u$ and $b_1^{uv}$, $b_2^{wu}$ for all edges $uv, wu \in E(G)$.
From the construction of $Z$ we infer that $(d^u,t_2) \in Z$ and $(b_1^{uv},c_1^{uv}), (b_2^{wu},c_2^{wu}) \in Z$, thus
$Z$ is a $s_1-\{t_1,t_2\}$ separator. Symmetrically we show that $Z$ is a $\{s_1,s_2\}-t_2$
separator, and the constructed instance is a YES-instance to \textsc{Skew Multicut}.

In the other direction, let $Z$ be a solution to the constructed instance of weight at most $p$.
As discussed, $Z$ needs to contain exactly one heavy arc for each $v \in V(G)$: $(s_1,a^v)$ or $(d^v,t_2)$,
   and we are left with a budget of at most $2m-t$ light arcs.
Let $X \subseteq V(G)$ be defined as the set of those vertices $v \in V(G)$ for which $(s_1,a^v) \in Z$.

Consider an edge $uv \in E(G)$.
If $u \in X$ then $Z$ needs to intersect
the path $s_2-b_2^{uv}-c_2^{uv}-d^u-t_2$ in $(b_2^{uv},c_2^{uv})$,
and if $u \in V(G) \setminus X$ then $Z$ needs to intersect 
the path $s_1-a^u-b_1^{uv}-c_1^{uv}-t_1$ in $(b_1^{uv},c_1^{uv})$.
Similarly, if $v \in X$ then $Z$ needs to intersect
the path $s_2,b_1^{uv},c_1^{uv},d^v,t_2$ in $(b_1^{uv},c_1^{uv})$,
and if $v \in V(G) \setminus X$ then $Z$ needs to intersect
the path $s_1,a^v,b_2^{uv},c_2^{uv},t_1$ in $(b_2^{uv},c_2^{uv})$.
We infer that for each edge $uv \in E(G)$, $Z$ needs to contain at least
one arc $(b_\alpha^{uv},c_\alpha^{uv})$ for $\alpha \in \{1,2\}$,
and both of them if $u,v \in X$ or $u,v \in V(G) \setminus X$.
As $Z$ contains at most $2m-t$ light edges, $X$ is a solution
to the \textsc{Max-Cut} instance $(G,t)$.
\end{proof}

Theorem \ref{thm:lb-np} follows from an easy reduction from \textsc{Skew Multicut}.

\begin{proof}[Proof of Theorem \ref{thm:lb-np}]
For clarity,
in this proof we consider arc-deletion variant, similarly as in the proof of Theorem \ref{thm:lb-np}; again, let us remark that the arc- and vertex-deletion variants are equivalent (cf. \cite{dir-mwc}).
Let $(G,\terms,p)$ be an instance \textsc{Skew Multicut} with two sinks and two sources,
(i.e., $\terms = \{(s_1,t_1), (s_1,t_2), (s_2,t_2)\}$), whose NP-completeness
is established by Theorem \ref{thm:lb-skew}.
Moreover, we may assume that the terminals $s_1$, $s_2$, $t_1$, $t_2$ are pairwise
distinct and that $\inN_G(s_1) = \inN_G(s_2) = \outN_G(t_1) = \outN_G(t_2) = 0$:
we can observe that the graph constructed in the proof of Theorem \ref{thm:lb-np}
satisfies these conditions, or apply a reduction in the flavour of the proof of Lemma \ref{lem:source-sink}
to the instance $(G,\terms,p)$.

Let $G'$ be constructed from $G$ by adding an arc $(t_2,t_1)$ of infinite weight
(or, equivalently, $p+1$ $t_2t_1$-paths of length two) and let
$\terms' = \{(s_1,t_1), (s_2,t_2)\}$. We claim that a \dagmulticut{} instance
$(G',\terms',p)$ is equivalent to the \textsc{Skew Multicut} instance $(G,\terms,p)$.

In one direction, let $Z$ be a solution to $(G,\terms,p)$. If $P$ is a $s_2t_2$-path
in $G'$ then $P$ is also present in $G$ and $Z$ intersects $P$. If $P$ is a $s_1t_1$-path
in $G'$ then either $P$ is present in $G$ or $P$ ends with the arc $(t_2,t_1)$; in the second
case $P$ contains a $s_1t_2$-path in $G$ and in both cases $Z$ intersects $P$.
We infer that $Z$ is a solution to $(G',\terms',p)$ as well.

In the other direction, let $Z$ be a solution to $(G',\terms',p)$.
Note that $(t_2,t_1) \notin Z$, as $(t_2,t_1)$ has infinite weight
(or, equivalently, at least one of the $t_2t_1$-paths is not intersected by $Z$).
For a $s_1t_1$- or $s_2t_2$-path $P$ in $G$, $P$ is also present in $G'$
and $Z$ intersects $P$. If $P$ is a $s_1t_2$-path in $G$
we can prolong $P$ with the arc $(t_2,t_1)$ in $G'$ to obtain a $s_1t_1$-path $P'$.
As $Z$ is a $s_1-t_1$ separator in $G'$, we infer that
$Z$ intersects $P'$ and therefore $P$ as well. 

\end{proof}

\section{Conclusions}\label{sec:conc}

The results of this paper unravel the full picture of the parameterized complexity of \dagmulticut{}.
A natural follow-up question is the complexity of \multicut{} in general directed graphs, where we so far know only that the case of two terminal pairs is FPT \cite{dir-mwc}
and the cutset parameterization is W[1]-hard~\cite{marx:multicut}.
The assumption of acyclicity seems to be crucial for our approach
in Lemma~\ref{lem:magic} and subsequent Branching \ref{branch:magic}.
We also note that, although
an existence of a polynomial kernelization algorithm
for most graph separation problems in directed graphs was recently refuted \cite{no-colours-ids},
the question of a polynomial kernel for \dagmulticut{} remains open.

\bibliographystyle{plain}
\bibliography{multicut-in-dags}

\end{document}